\documentclass[11pt]{article}
%
% Remove any % below to load the required packages
%\usepackage{latexsym}
%\usepackage{graphics}
% etc
\usepackage{latexsym}
\usepackage{amsfonts}
\usepackage{amsmath}
%% \numberwithin{definition}{section}

%\usepackage[latin1]{inputenc}

\usepackage{amsmath}
\usepackage{amssymb}
\usepackage{amsthm}
\usepackage{amsfonts}

\topmargin      -15mm
\oddsidemargin  -5mm
\textwidth      165mm
\textheight     230mm

%\numberwithin{equation}{section}
%\numberwithin{theorem}{section}
% \numberwithin{lemma}{section}
%\numberwithin{proposition}{section}
%\numberwithin{definition}{section}

\numberwithin{equation}{section}
\theoremstyle{plain}

\newtheorem{definition}{Definition}[section]
\newtheorem{theorem}{Theorem}[section]
\newtheorem{lemma}{Lemma}[section]
\newtheorem{corollary}{Corollary}[section]
\newtheorem{proposition}{Proposition}[section]
\newtheorem{remark}{Remark}
%\newenvironment{remark}[1][Remark]{\begin{trivlist}
%\item[\hskip \labelsep {\bfseries #1}]}{\end{trivlist}}
%\newenvironment{notation}[1][Notation]{\begin{trivlist}
%\item[\hskip \labelsep {\bfseries #1}]}{\end{trivlist}}
%% \numberwithin{theorem}{section}
%\newtheorem{theorem}{Theorem}

%\newtheorem{proposition}{Proposition}
%\newtheorem{lemma}{Lemma}
%\newtheorem{corollary}{Corollary}
%\newtheorem{definition}{Definition}
\numberwithin{lemma}{section}
\numberwithin{proposition}{section}
\numberwithin{corollary}{section}

\def \R{I\!\!R}
\def \E{I\!\!E}

\def \N{I\!\!N}

\def\INTER{\mathop{\rm {\cap}}\limits}

\begin{document}

\title{Risk Measuring under Model Uncertainty} 
\author{Jocelyne BION-NADAL
\footnote{tel: 33 1 69 33 46 25,  fax: 33 1 69 33 30 11, e-mail : jocelyne.bion-nadal@cmap.polytechnique.fr}\\ \small \it  UMR 7641  CNRS CMAP
Ecole 
 Polytechnique, 91128 Palaiseau Cedex, France\\  and\\   Magali KERVAREC\footnote{ e-mail :  mkervare@univ-evry.fr}\\
\small \it 
Laboratoire analyse et probabilit\'es, Universit\'e d' Evry, Bd F. Mitterrand, 91000 Evry France}
\date{} 

 \maketitle

  %\author{\fnms{Third}  \snm{Author}%
  %\ead[label=e3]{third@somewhere.com}%
 % \ead[label=u1,url]{http://www.foo.com}}

 % \thankstext{t2}{Footnote to the first author with the `thankstext' command.}

\begin{abstract}
The framework of this paper is that of risk measuring under uncertainty, which is when no reference probability measure is given. To every  regular  convex  risk measure  on ${\cal C}_b(\Omega)$, we associate  a unique  equivalence class of probability measures on Borel sets, characterizing the riskless non positive elements of ${\cal C}_b(\Omega)$. We  prove that the convex risk measure  has a dual representation with a countable set  of probability measures  absolutely continuous with respect to a certain  probability measure in this  class. 
 To get these results we  study  the topological properties of the dual of the Banach space $L^1(c)$ associated to a capacity $c$.  \\ 
As application we obtain that every $G$-expectation $\E$ 
has a representation with a countable set of probability measures  absolutely continuous with respect to a  probability measure  $P$ such that $P(|f|)=0$ iff $\E(|f|)=0$.
We also apply our results to the case of 
uncertain volatility. 
\end{abstract}
{\it \small  classification:} {\small 46A20, 91B30, 46E05}

{\it \small  Key Words:} {\small Risk Measure, duality theory, uncertainty, capacity}

 \section{Introduction}
The purpose of this paper is to introduce a very general framework enabling the study of risk measures and dynamic risk measures in a context of model uncertainty, which is when no reference probability measure is given.\\
In order to quantify the risk in finance, Artzner et al  \cite{Artzner} have introduced  the notion of coherent (i.e. sublinear) risk measure in the context of finite probability spaces. This notion has  been extended to general probability spaces \cite{D0} and then to the convex case (\cite{FS04} and \cite{F}). The notion of 
conditional risk measure has been considered in  \cite{DSc} and \cite{BN01}.  Dynamic risk measures have then been studied in many papers, among them \cite{D}, \cite{CDK}  \cite{KS}  \cite{BN02} \cite{JBN} \cite{RS}. For the particular case of dynamic risk measures on a Brownian filtration one can cite \cite{Peng} \cite{BEK}, \cite{DPR}.
 Notice that in  all these papers on dynamic risk measures, a  reference 
probability space is fixed. This framework is  rich enough to study models with stochastic volatility or models with jumps, but not to deal with model uncertainty.\\

What means uncertainty? Usually in mathematical finance, in order to compute 
 the risk or the price associated to financial assets, one assumes that a 
reference family of liquid 
assets is given, and that 
the dynamics of these reference assets is known.  However in a context of model uncertainty the dynamics of the liquid reference assets is only assumed to belong to a certain class of models. A simple example is given, within the Brownian framework, by a class  of models with uncertain volatility. 
That is, one considers a family of possible models of the form  $dX^{\sigma}_t=b_t X^{\sigma}_t dt+\sigma_t X^{\sigma}_t dW_t$ where  $\sigma_t$ is allowed to vary inside an interval $[\underline \sigma,\overline \sigma]$. When $\sigma$ describes the set of predictable processes varying inside this interval, the 
 laws of the processes $X^t_{\sigma}$  are not  all absolutely continuous with respect to some probability measure.  Avellaneda et al \cite{AP},  Denis and Martini \cite{DM} and Denis et al \cite{DHP} have  considered the problem of pricing  for this family of models. 
Only few papers study convex risk measures in a context of uncertainty.  F\"ollmer and Schied \cite{FS04} have studied static risk measures defined on the vector space of all bounded measurable maps. This has been extended by Bion-Nadal to the conditional case in \cite{BN01}. Kervarec \cite{K} has studied static risk measures when model uncertainty is specified by a non dominated weakly compact set of probability measures.\\
In this paper, motivated by the general  context of model uncertainty, we study  regular convex risk measures defined on ${\cal C}_b(\Omega)$, the set of continuous bounded functions on a  Polish space $\Omega$. Regularity is here equivalent to continuity  with respect to a certain   capacity $c$. Considering the completion ${\cal L}^1(c)$ of
${\cal C}_b(\Omega)$
 with respect to the capacity $c$, this means that we study convex risk measures on the Banach space $L^1(c)$. 
Our main result is that for every regular convex risk measure on ${\cal C}_b(\Omega)$,   there is a unique  equivalence class of probability measures characterizing the riskless non positive elements of ${\cal C}_b(\Omega)$, and that
the convex risk measure  has a dual representation with a countable set  of probability measures all absolutely continuous with respect to a certain  probability measure belonging to this  equivalence class. The tools of the proof are the capacities, topological properties of the dual of the Banach space $L^1(c)$ associated to a capacity $c$, and convex duality for locally convex spaces.\\

The paper is organized as follows.
First, Section \ref{sec1}, we study the topological properties of the dual of $L^1(c)$. We prove that the non negative part of the dual ball of $L^1(c)$ is metric compact for the weak* topology 
$\sigma( L^1(c)^*, L^1(c))$. \\
 Section \ref{L1} deals with convex risk measures on $L^1(c)$. We prove that they satisfy the following representation formula:
\begin{eqnarray}\label{rep}
\rho\left( X\right)= \sup_{Q\in\mathcal{P}'}\left(E_Q\left[ -X\right]-\alpha\left( Q\right) \right)
\end{eqnarray}
where $\mathcal{P}'$ is a set of probability measures belonging to the dual of $ L^1(c)$. 
There are two important results in this Section. The first one is the characterization of   convex risk measures on $L^1(c)$ admitting a representation of the form (\ref{rep}) having a  compact set ${\cal P}'$ of probability measures (for the weak* topology $\sigma(L^1(c)^*, L^1(c))$). In this case, the supremum in (\ref{rep}) is a maximum. Moreover, making use of the topological results of Section \ref{sec1}, we prove that every convex risk measure on $L^1(c)$ has a dual representation  of the form (\ref{rep}) with a countable set of probability measures.\\
In section \ref{sec2} we assume that the capacity is defined on ${\cal C}_b(\Omega)$ by  $c_{p,{\cal P}}(f)= \sup_{P \in {\cal P}} E_P(|f|^p)^{\frac{1}{p}}$ for some  weakly relatively compact set ${\cal P}$ of probability measures. We prove that the capacity  $c_{p,{\cal P}}$ is equal to the capacity $c_{p,{\cal Q}}$   defined  using a certain  countable subset ${\cal Q}$ of ${\cal P}$. We introduce a new equivalence relation on the set of non negative measures belonging to the dual of $L^1(c_{p,{\cal P}})$. When ${\cal P}$ is a singleton, it  coincides with the usual equivalence relation on non negative measures. The main result of Section \ref{sec2} is the  existence of an equivalence  class of probability measures  characterizing the null elements of $L^1(c_{p,{\cal P}})_+$, that is $P$ belongs to this equivalence class if and only if  for all $f$ in $L^1(c_{p,{\cal P}})$, $(E_P(|f|)=0)  \Longleftrightarrow ( c_{p,{\cal P}}(|f|)=0)$.\\
Section \ref{secreg} deals with uniformly regular convex risk measures on ${\cal C}_b(\Omega)$. We prove that every such   risk measure on ${\cal C}_b(\Omega)$ extends into a convex risk measure on $L^1(c)$ for a certain  capacity $c$ associated to a weakly compact set ${\cal P}$ of probability measures: $c(f)=\sup_{P \in {\cal P}} E_P(f)$. Therefore we can make use of the results obtained in Sections \ref{sec2} and \ref{L1} in order to get the main result of the paper  in Theorem \ref{thmerepcont}: to every uniformly regular  convex  risk measure $\rho$ on ${\cal C}_b(\Omega)$, one can associate  a unique  equivalence class of probability measures defined on the Borel sets, called $c_{\rho}$-class, characterizing the non positive elements of ${\cal C}_b(\Omega)$ with risk $0$. The convex risk measure  has then a dual representation with a countable set  of probability measures all absolutely continuous with respect to a certain  probability measure belonging to this  $c_{\rho}$-class. \\
Section \ref{Exemples} deals with two examples. The first one  is $G$-expectations introduced by Peng \cite{P1}. The capacity associated to a $G$-expectation $\E$ is $c(f)=\E(|f|)$. As application of our results we obtain that there is a unique equivalence  class of probability  measures characterizing the non negative elements $f$ of ${\cal C}_b(\Omega)$ such that $\E(f)=0$. The  G-expectation $\E$ has then a representation in terms of a countable  set  of probability measures all absolutely continuous with respect to a certain  probability measure belonging to this class,
\begin{equation}
\E(X)=\sup_{n \in \N}E_{Q_n}(X)
\label{eqP}
\end{equation}
The second example, for which all our results apply,  is the case where model uncertainty is characterized by a relatively weakly  compact set of probability measures ${\cal P}$.

\section{Topological properties of the dual space of $L^1(c)$}
\label{sec1}
\subsection{The ordered space  $L^1(c)$}
\label{sec1.1}
Let $\Omega$  be a metrizable and separable space. One classical example, furthermore a Polish space,  is $\Omega={\cal C}_0([0,\infty[, \R^d)$ endowed with the topology of uniform convergence on compact subspaces. 
${\cal B}(\Omega)$ denotes the Borel $\sigma$-algebra on $\Omega$.
Denote ${\cal M}(\Omega)$ the set of all bounded signed measures on $(\Omega,{\cal B}(\Omega))$, and
 ${\cal M}_+(\Omega)$ the subset of  non-negative finite measures.\\
In the following  ${\cal L}$ denotes a linear vector subspace of ${\cal C}_b(\Omega)$ containing the constants, generating the topology of $\Omega$ and which is a vector lattice.                                                                                                 
Recall the following definition of a capacity.
\begin{definition} a capacity on  ${\cal L}$ 
is a semi norm $c$ defined on  ${\cal L}$ satisfying the following properties:
\begin{enumerate}
\item monotonicity: $\forall, f,g \in {\cal L}$ such that $|f| \leq |g|$, \;\; $c(f) \leq c(g)$
\item regularity along sequences:
for every sequence $f_n \in {\cal L}$ decreasing to $0$, $\inf c(f_n)=0$  
\end{enumerate}
\label{defcap11}
\end{definition}

The semi-norm $c$ is extended as in \cite{FP} Section 2 to all  real functions defined on $\Omega$:
\begin{equation}
\forall f  \;l.s.c.\;f \geq 0, \;\; c(f)=\sup\{c(\phi)|0 \leq \phi \leq f, \; \phi \in {\cal L}\} 
\label{eqsci}
\end{equation}
\begin{equation}
\forall g,\; \; c(g)=\inf\{c(f)|\; f \geq |g|, \;  f\;l.s.c.\}
\label{eqsci2}
\end{equation}
where l.s.c. means lower semi-continuous.
 ${\cal L}^1(c)$ denotes the closure of ${\cal L}$ in the set $\{g |\;c(g)<\infty\}$. From Proposition 10 of \cite{FP},   ${\cal L}^1(c)$ contains ${\cal C}_b(\Omega)$. Let $L^1(c)$ be the quotient of ${\cal L}^1(c)$ by the $c$ null elements. It is a Banach space. The following result shows that $c(1_A)$ can be expressed as the limit of a monotone sequence $c(f_n)$ for continuous functions $f_n$ with limit $1_A$, as soon as  $A$ is either an open subset or a closed subset of $\Omega$.
\begin{proposition}
\begin{itemize}
\item Let $V$ be an open subset of $\Omega$. There is an increasing sequence of non negative  continuous functions $h_n$ on $\Omega$ such that $1_V=\lim_{n \rightarrow \infty} h_n$ and $c(1_V)=\lim_{n \rightarrow \infty} c(h_n)$.
\item Let $F$ be a closed subset of $\Omega$. There is a decreasing sequence of continuous functions $g_n \leq 1$ on $\Omega$ such that $1_F=\lim_{n \rightarrow \infty} g_n$ and $c(1_F)=\lim_{n \rightarrow \infty} c(g_n)$.
\end{itemize}
\label{propopenclosed}
\end{proposition}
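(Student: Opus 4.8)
The plan is to treat the two assertions in parallel, constructing the monotone approximating sequence as a lattice combination of two ingredients: functions that nearly realise the capacity of the indicator (read off directly from the defining formulas (\ref{eqsci}) and (\ref{eqsci2})), and explicit distance functions that force the exact pointwise limit. Throughout I fix a metric $d$ compatible with the topology of $\Omega$, and I use that ${\cal C}_b(\Omega)\subset{\cal L}^1(c)$ so that $c$ is finite and monotone on all the bounded continuous functions below.

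For the open set $V$, the function $1_V$ is lower semi-continuous, so (\ref{eqsci}) gives $c(1_V)=\sup\{c(\phi)\mid\phi\in{\cal L},\ 0\le\phi\le 1_V\}$. I would choose $\phi_k\in{\cal L}$ with $0\le\phi_k\le 1_V$ and $c(\phi_k)\to c(1_V)$, then set $H_n=\max_{k\le n}\phi_k\in{\cal L}$, an increasing sequence with $0\le H_n\le 1_V$ and $c(H_n)\ge c(\phi_n)$. These $H_n$ need not converge pointwise to $1_V$, so I would correct them with the continuous functions $e_n=\min(1,\,n\,d(\cdot,V^c))$, which satisfy $0\le e_n\le 1_V$ and $e_n\uparrow 1_V$. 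Then $h_n=\max(H_n,e_n)$ is continuous, non negative, increasing, bounded above by $1_V$, and $h_n\uparrow 1_V$ pointwise. Monotonicity of $c$ gives $c(h_n)\le c(1_V)$, while $c(h_n)\ge c(H_n)\ge c(\phi_n)\to c(1_V)$, so $c(h_n)\to c(1_V)$.

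For the closed set $F$, the function $1_F$ is upper semi-continuous, and (\ref{eqsci2}) gives $c(1_F)=\inf\{c(f)\mid f\ge 1_F,\ f\ \mathrm{l.s.c.}\}$. I would pick lower semi-continuous $f_k\ge 1_F$ with $c(f_k)\to c(1_F)$. The essential extra step is to turn each $f_k$ into a \emph{continuous} majorant of $1_F$ still dominated by $f_k$: since $\Omega$ is metrizable, hence normal, and $1_F\le f_k$ with $1_F$ upper semi-continuous and $f_k$ lower semi-continuous, the classical Kat\v{e}tov--Tong insertion theorem produces a continuous $h_k$ with $1_F\le h_k\le f_k$. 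Putting $G_n=\min_{k\le n}h_k$ (continuous, decreasing, $\ge 1_F$) and $d_n=(1-n\,d(\cdot,F))^+$ (continuous, decreasing, $d_n\le 1$, $d_n\downarrow 1_F$), I would set $g_n=\min(G_n,d_n)$. Then $g_n$ is continuous, decreasing, $g_n\le d_n\le 1$, and since $\inf_k G_k\ge 1_F$ while $d_n\downarrow 1_F$ one gets $g_n\downarrow 1_F$ pointwise. Finally $g_n\le G_n\le h_n\le f_n$ and $g_n\ge 1_F$, so monotonicity yields $c(1_F)\le c(g_n)\le c(f_n)\to c(1_F)$, hence $c(g_n)\to c(1_F)$.

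The one genuinely delicate point is the closed-set case. The naive candidate, the distance functions $d_n$ alone, decreases to $1_F$ but its capacity may strictly exceed $c(1_F)$, because $d_n$ can be positive precisely where an almost optimal lower semi-continuous majorant $f$ vanishes, so the bound $d_n\le f$ fails and one cannot compare capacities. The insertion theorem is exactly what overcomes this obstacle: it upgrades the near-optimisers of (\ref{eqsci2}) to continuous functions lying \emph{globally} between $1_F$ and $f_k$, after which intersecting their running minima with $d_n$ simultaneously secures the pointwise limit $1_F$ and the capacity bound. In both cases only the definitions (\ref{eqsci})--(\ref{eqsci2}), the lattice structure of the underlying functions, and the monotonicity of $c$ are invoked; the regularity-along-sequences condition of Definition \ref{defcap11} is not needed for this particular statement.
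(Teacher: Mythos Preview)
Your proof is correct. The open-set case is essentially the paper's argument: both combine near-optimisers from the defining supremum (\ref{eqsci}) with an auxiliary increasing sequence that forces the pointwise limit, taking running maxima to obtain a single monotone sequence.

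For the closed set, your route genuinely differs from the paper's. The paper does not invoke Kat\v{e}tov--Tong. Instead it takes a decreasing sequence of lower semi-continuous majorants $\psi_{k(n)}$ with $c(\psi_{k(n)})\le c(1_F)+\tfrac{1}{n^2}$, forms the open sets $V_n=\{\psi_{k(n)}>1-\epsilon_n\}\cap\{\mathrm{dist}(\cdot,F)<\tfrac{1}{n}\}$ for suitably small $\epsilon_n$, and observes that $(1-\epsilon_n)1_{V_n}\le\psi_{k(n)}$, whence $c(1_{V_n})\le c(1_F)+\tfrac{1}{n}$; Urysohn's lemma then produces continuous $f_n$ with $1_F\le f_n\le 1_{V_n}$, and running minima give the desired decreasing sequence. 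Your use of Kat\v{e}tov--Tong replaces this level-set manoeuvre by a single appeal to a stronger (but still classical) insertion theorem, yielding a cleaner and more transparent argument at the cost of quoting a less elementary result. The paper's approach, by contrast, stays within Urysohn and makes the quantitative control on the capacity completely explicit. One small point you could make explicit: replace each $f_k$ by $\min(f_k,1)$ at the outset (still lower semi-continuous, still $\ge 1_F$, capacity not increased), so that the inserted $h_k$ are bounded by $1$ and in particular lie in ${\cal C}_b(\Omega)$. Your final remark that regularity along sequences is not used is accurate and applies equally to the paper's proof.
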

\begin{proof}
\begin{itemize}
\item $1_V$ is a non negative bounded l.s.c. function. Thus it  is the limit of an increasing   sequence of non negative continuous functions $f_n$. On the other hand from definition of $c(1_V)$ (equation (\ref{eqsci})), there is a sequence of continuous functions $g_n \leq 1_V$ such that $c(1_V)= \lim c(g_n)$. Let $h_1=g_1$ and for every $n$, $h_{n+1}=\sup(h_n, f_n,g_n)$. $h_n$ is an increasing sequence of continuous functions  with limit $1_V$ and such that $c(1_V)=\lim c(h_n)$. 
\item Let $F$ be a closed subset of $\Omega$. 
By definition of the capacity, $c(1_F)= \inf_{\{\psi\; l.s.c., 1_F \leq \psi\}}c(\psi)$. The infimum of two l.s.c. functions is also l.s.c. , thus there is  a decreasing sequence $\psi_n$ greater or equal to   $1_F$  such that $c(1_F)= \lim c(\psi_n)$.
Thus there is a strictly increasing sequence  $k(n)$  such that for all $n$,  $c(\psi_{k(n)}) \leq c(1_F)+ \frac{1}{n^2}$. Let $\epsilon_n>0$ such that $(\frac{1}{1-\epsilon_n})(c(1_F)+ \frac{1}{n^2}) \leq c(1_F)+ \frac{1}{n}$.
Let $V_n=\{x |\psi_{k(n)}(x)> 1- \epsilon_n\} \INTER \{ x\in\Omega ;  dist (x,F)<\frac{1}{n}\}$.  As $\psi_{k(n)}$ is
 l.s.c., $V_n$ is an open set, furthermore  $F= \cap_{n \in \N^*} V_n$. For every $n$, there is  a continuous function $f_n$ such that $F \prec f_n \prec V_n$.  One can thus construct a decreasing sequence of continuous functions $g_n$ such that $1_F \leq g_n \leq  1_{V_n}$.  Thus  the sequence $g_n$ is decreasing to $1_F$.
 As $c(1_{V_n}) \leq \frac{1}{1-\epsilon_n}  c(\psi_{k(n)}) \leq c(1_F)+ \frac{1}{n}$, it follows that 
  $c(1_F) \leq c(g_n)\leq c(1_F)+\frac{1}{n}$.
 \end{itemize}
\end{proof}

 Further definitions and results on capacities are recalled in the Appendix (Section \ref{secap}). We refer also to  \cite{FP}. 
 \subsection*{Partial order on $L^1(c)$}
\begin{definition} Let $X \in L^1(c)$. We say that $X \geq 0$ if there is a sequence $(f_n)_{n \in \N}$, $f_n \in {\cal L}, \;f_n\geq 0$ such that
for every $g \in {\cal L}^1(c)$ of class $X$, $\lim_{n \rightarrow \infty} c(g-f_n)=0$.
\label{deforder}  
\end{definition}

\begin{lemma}
\begin{itemize}
\item let $X,Y \in L^1(c)$. If  $X \geq 0$ and $Y \geq 0$, then $X+Y \geq 0$. 
\item If there is in the class of $X$ a non negative function $f$ then $X \geq 0$
\item Let $X \in L^1(c)$; $|X| \in L^1(c)$. Furthermore $X \geq 0$ if and only if $X=|X|$ in $L^1(c)$.
\end{itemize}
\label{lemmaorder}
\end{lemma}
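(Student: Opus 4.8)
The plan is to base all three assertions on two elementary pointwise inequalities for the lattice operations, namely $\bigl||a|-|b|\bigr|\leq |a-b|$ and $|a^+-b^+|\leq |a-b|$, combined with the monotonicity of the capacity $c$ (property 1 of Definition \ref{defcap11}) and its semi-norm property. Monotonicity turns these pointwise contractions into contractions for $c$: if $|u-u'|\leq|v-v'|$ pointwise, then $c(u-u')\leq c(v-v')$. A preliminary remark I will use repeatedly is that it suffices to verify the defining condition of Definition \ref{deforder} for a single representative of the class $X$: indeed, if $g,g'$ both belong to the class $X$, then $c(g-g')=0$, so the triangle inequality gives $c(g'-f_n)\leq c(g'-g)+c(g-f_n)=c(g-f_n)$, and convergence to $0$ for one representative yields it for all.

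For the first assertion, I would take sequences $f_n,h_n\in{\cal L}$ with $f_n,h_n\geq 0$ witnessing $X\geq 0$ and $Y\geq 0$ respectively, and propose $f_n+h_n\in{\cal L}$ (nonnegative, since ${\cal L}$ is a vector space) as a witness for $X+Y\geq 0$. Picking representatives $g$ of $X$ and $g'$ of $Y$, the sum $g+g'$ is a representative of $X+Y$, and the semi-norm inequality $c\bigl((g+g')-(f_n+h_n)\bigr)\leq c(g-f_n)+c(g'-h_n)$ drives the left-hand side to $0$; the representative reduction above then completes this part.

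For the second assertion, let $f\geq 0$ be a nonnegative function in the class $X$. Since $f\in{\cal L}^1(c)$ is a $c$-limit of functions $g_n\in{\cal L}$, I would set $f_n:=\sup(g_n,0)\in{\cal L}$ (using that ${\cal L}$ is a vector lattice), which is nonnegative, and invoke $|f-f_n|=|f^+-g_n^+|\leq|f-g_n|$ together with monotonicity to get $c(f-f_n)\leq c(f-g_n)\to 0$. This exhibits the required witnessing sequence, so $X\geq 0$. The same truncation-and-contraction argument shows first that $|X|$ is well defined and lies in $L^1(c)$: for any representative $g$ of $X$ one has $|g_n|\in{\cal L}$ and $c(|g|-|g_n|)\leq c(g-g_n)\to 0$, so $|g|\in{\cal L}^1(c)$, and independence of the representative follows from $c(|g|-|g'|)\leq c(g-g')=0$.

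Finally, for the equivalence in the third assertion, the direction $X=|X|\Rightarrow X\geq 0$ is immediate from the second assertion, since then the nonnegative function $|g|$ lies in the class $X$. For the converse, given a witnessing sequence $f_n\geq 0$ in ${\cal L}$ with $c(g-f_n)\to 0$, I would use $f_n=|f_n|$ and the contraction $\bigl||g|-f_n\bigr|=\bigl||g|-|f_n|\bigr|\leq|g-f_n|$ to obtain $c(|g|-f_n)\to 0$; then $c(g-|g|)\leq c(g-f_n)+c(f_n-|g|)\to 0$ forces $c(g-|g|)=0$, that is $X=|X|$ in $L^1(c)$. The whole argument is essentially routine; the only point requiring care is the systematic reduction to a single representative and the verification that the lattice truncations stay inside ${\cal L}$ and inside ${\cal L}^1(c)$, which is where the hypotheses that ${\cal L}$ is a vector lattice and that $c$ is monotone are genuinely used.
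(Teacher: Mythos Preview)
Your proof is correct and follows essentially the same approach as the paper: both rely on the monotonicity of $c$ applied to the pointwise contraction $\bigl||a|-|b|\bigr|\le|a-b|$ (you also invoke the analogous contraction for positive parts in the second item, where the paper uses $|f_n|$ instead of $g_n^+$, but this is an immaterial variant). Your explicit reduction to a single representative via the triangle inequality is a welcome clarification that the paper leaves implicit.
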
 
\begin{proof} 
\begin{itemize}
\item The first part of the lemma is trivial. 
\item The second point follows from the inequality 
\begin{equation}
c(|f|-|f_n|) \leq c(f-f_n)
\label{eq0001}
\end{equation}
 Thus as $f=|f|$, $c(f-|f_n|) \leq c(f-f_n)$.
\item One can deduce from (\ref{eq0001}) that for all $X \in L^1(c)$, $|X| \in L^1(c)$. From point 2, $|X|-X \geq 0$.
  Thanks to  (\ref{eq0001}) and  the inequality  $c(|f|-f) \leq c(|f|-f_n)+c(f-f_n)$, it  follows that   $X \geq 0$ if and only if $X=|X|$ in $L^1(c)$. 
\end{itemize}
\end{proof}

\begin{proposition}
The relation $X \leq Y $ defined by $Y-X \geq 0$ defines a partial order on $L^1(c)$.
\label{order}
\end{proposition}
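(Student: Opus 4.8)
The plan is to verify the three defining properties of a partial order—reflexivity, antisymmetry, and transitivity—each of which I would reduce to a property of the positive cone $\{X\in L^1(c):X\geq 0\}$ already recorded in Lemma \ref{lemmaorder}. The relation is defined through $Y-X\geq 0$, so everything hinges on the behaviour of this cone under addition and negation.

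For reflexivity I would note that $X-X=0$, and the class of $0$ contains the non negative function $0\in{\cal L}$; hence the second point of Lemma \ref{lemmaorder} (equivalently, Definition \ref{deforder} applied with $f_n\equiv 0$) gives $0\geq 0$, i.e. $X\leq X$. For transitivity, assume $X\leq Y$ and $Y\leq Z$, so that $Y-X\geq 0$ and $Z-Y\geq 0$. The first point of Lemma \ref{lemmaorder} shows that the sum of two non negative elements is non negative, and since $(Y-X)+(Z-Y)=Z-X$ in $L^1(c)$, we obtain $Z-X\geq 0$, that is $X\leq Z$. Both of these steps are routine and use only the additive and order structure already available.

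The only point requiring the finer structure of the cone, and hence the main (if mild) obstacle, is antisymmetry. Suppose $X\leq Y$ and $Y\leq X$, and set $W=Y-X\in L^1(c)$, so that simultaneously $W\geq 0$ and $-W\geq 0$. By the third point of Lemma \ref{lemmaorder}, $W\geq 0$ forces $W=|W|$ in $L^1(c)$, while $-W\geq 0$ forces $-W=|{-W}|$. The key identification is that $|{-W}|=|W|$: for any representative $g$ of $W$, the function $-g$ represents $-W$ and satisfies $|{-g}|=|g|$ pointwise, so these are literally the same function and a fortiori the same class (this is consistent with inequality (\ref{eq0001}), which guarantees that $g\mapsto|g|$ is well defined on classes). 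Combining, we get $W=|W|=|{-W}|=-W$, whence $2W=0$ and therefore $W=0$ in $L^1(c)$, i.e. $X=Y$.

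Thus all three properties follow directly from Lemma \ref{lemmaorder}, with the absolute-value characterization of positivity doing the work in antisymmetry; no further machinery from Section \ref{sec1} is needed.
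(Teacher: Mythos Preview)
Your proof is correct. Reflexivity and transitivity are handled exactly as in the paper (via $f_n\equiv 0$ and the first point of Lemma \ref{lemmaorder}, respectively). For antisymmetry, however, the paper takes a different path: it returns directly to Definition \ref{deforder}, picks approximating sequences $f_n,g_n\geq 0$ in ${\cal L}$ for $h$ and $-h$ (a representative of $X-Y$), observes that $c(f_n+g_n)\to 0$, uses $|f_n-g_n|\leq f_n+g_n$ together with $c(f_n-g_n-2h)\to 0$ to conclude $c(h)=0$. Your argument instead exploits the characterization in the third point of Lemma \ref{lemmaorder}, deducing $W=|W|=|{-W}|=-W$ and hence $W=0$. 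Your route is slightly more conceptual and avoids re-opening the definition, at the cost of relying on that third bullet (whose proof in the paper already contains essentially the same $c$-estimates); the paper's route is more self-contained but a bit more computational. Both are perfectly valid.
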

\begin{proof} 
1. Reflexivity is trivial: take  $f_n=0$ for all $ n$\\ 
2. Antisymmetry. Let $X \geq Y$ and $Y \geq X$. Let $h$ in the class of $X-Y$. By  definition there  are two sequences $f_n$ and $g_n$ of non negative functions in ${\cal L}$ such that $\lim_{n \rightarrow \infty}c(f_n-h)=0$ and  $\lim_{n \rightarrow \infty}c(g_n+h)=0$. It follows that $\lim_{n \rightarrow \infty}c(f_n+g_n)=0$. As $0 \leq |f_n-g_n| \leq f_n+g_n$, it follows that $\lim_{n \rightarrow \infty}c(|f_n-g_n|)=0$. However $\lim_{n \rightarrow \infty}c(f_n-g_n-2h)=0$. Thus $X-Y$, the class of $h$ is equal to $0$.\\
3. Transitivity follows from the first part of Lemma \ref{lemmaorder}.
\end{proof}

\subsection{Topological properties of the non negative part of the unit ball of $L^1(c)^*$}
\label{sectop}
For the definition of a Prokhorov capacity, see the Appendix (Section \ref{secap}).
\begin{proposition} 
 Let $c$ be a Prokhorov  capacity on  a metrizable and separable space $\Omega$. Every continuous linear form $L$ on $L^1(c)$ admits a representation:
\begin{equation}
L(f)= \int fd\mu \;\;\forall f \in L^1(c)
\label{eqreg10}
\end{equation}
where $\mu$ is a regular bounded signed  measure defined on a $\sigma$-algebra containing the Borel $\sigma$-algebra of $\Omega$.\\
If $L$ is a non negative linear form the  regular measure  $\mu$ is non negative finite.
\label{thmreg}
\end{proposition}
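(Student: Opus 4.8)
The plan is to restrict $L$ to $\mathcal{C}_b(\Omega)$, split it into positive parts, represent each positive part as a Daniell integral, and then upgrade the resulting abstract measure to a regular one using the Prokhorov hypothesis. First I would record the basic estimate: by monotonicity of the capacity (Definition \ref{defcap11}) together with $c(1)<\infty$, one has $c(f)\le c(1)\,\|f\|_\infty$ for every $f\in\mathcal{C}_b(\Omega)$, so the inclusion $\mathcal{C}_b(\Omega)\hookrightarrow L^1(c)$ is continuous for the sup-norm and the restriction of $L$ is a sup-norm-continuous linear form satisfying $|L(f)|\le\|L\|\,c(f)$. Since monotonicity makes $c$ a lattice seminorm and $|X|\in L^1(c)$ for every $X$ (Lemma \ref{lemmaorder}), I would decompose $L=L^+-L^-$ into continuous positive functionals, where $L^+(f):=\sup\{L(g):0\le g\le f\}$ is finite because $0\le g\le f$ forces $c(g)\le c(f)$ and hence $L(g)\le\|L\|\,c(f)$. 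This reduces the representation to the case $L\ge 0$, and reduces the second assertion to showing that $\mu$ is then a finite non-negative regular measure.

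For $L\ge 0$ I would verify the Daniell continuity condition on the Stone vector lattice $\mathcal{C}_b(\Omega)$: if $f_n\in\mathcal{C}_b(\Omega)$ decreases to $0$, then $c(f_n)\downarrow 0$ by regularity along sequences (Definition \ref{defcap11}), whence $0\le L(f_n)\le\|L\|\,c(f_n)\to 0$. The Daniell--Stone theorem then produces a countably additive non-negative measure $\mu$ on the $\sigma$-algebra generated by $\mathcal{C}_b(\Omega)$, which contains $\mathcal{B}(\Omega)$, with $L(f)=\int f\,d\mu$ for all $f\in\mathcal{C}_b(\Omega)$; finiteness is immediate since $\mu(\Omega)=L(1)<\infty$.

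The real work is regularity, and this is exactly where the Prokhorov hypothesis enters. Given $\varepsilon>0$, I would use the Prokhorov property (tightness of $c$) to choose a compact $K$ with $c(1_{\Omega\setminus K})\le\varepsilon$. Since $\Omega\setminus K$ is open, Proposition \ref{propopenclosed} yields continuous functions $h_n\uparrow 1_{\Omega\setminus K}$ with $c(h_n)\to c(1_{\Omega\setminus K})$, so by monotone convergence
\[
\mu(\Omega\setminus K)=\lim_{n\to\infty}\int h_n\,d\mu=\lim_{n\to\infty}L(h_n)\le\|L\|\,c(1_{\Omega\setminus K})\le\|L\|\,\varepsilon .
\]
Thus $\mu$ is tight, and a tight finite Borel measure on a metrizable space is regular, giving the stated regularity of $\mu$.

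Finally I would extend the representation from $\mathcal{C}_b(\Omega)$ to all of $L^1(c)$: the map $f\mapsto\int f\,d\mu$ is $c$-continuous (the routine bound $\int|g|\,d\mu^+\le\|L^+\|\,c(g)$ following from $L^+(|g|)=\int|g|\,d\mu^+\le\|L^+\|\,c(g)$ and the extension of $c$ through (\ref{eqsci})--(\ref{eqsci2})), and it agrees with $L$ on $\mathcal{C}_b(\Omega)$, which is dense in $L^1(c)$ because $\mathcal{L}$ is. By continuity, (\ref{eqreg10}) holds on all of $L^1(c)$, and recombining $\mu=\mu^+-\mu^-$ gives the signed case. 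I expect the two delicate points to be precisely the passage from finite to countable additivity, which is forced by the sequential regularity of $c$ through the Daniell--Stone mechanism, and the tightness step above, where the exact form of the Prokhorov property must be matched against the estimate for $\mu(\Omega\setminus K)$.
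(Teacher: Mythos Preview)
Your approach is correct and genuinely different from the paper's. The paper does not build the measure from scratch: it quotes Proposition~11 of \cite{FP} for the existence of $\mu$ representing $L$ on $L^1(c)$, and then spends its effort locating $\mu$ on the Borel $\sigma$-algebra by passing to a compactification $Z$ of $\Omega$, identifying $L^1(c)=L^1(c')$ for the induced capacity $c'$ on $Z$ (this is where Prokhorov enters, via $c'(1_{Z\setminus\Omega})=0$), and invoking the Riesz representation on the compact space $Z$; the positive/negative decomposition is imported from Theorem~6 of \cite{FP1}. Your Daniell--Stone argument on $\Omega$ is more self-contained and yields a Borel measure immediately, bypassing the compactification entirely. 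Two small points are worth tightening. First, Definition~\ref{defcap11} states regularity along sequences only for $f_n\in\mathcal{L}$, not for $f_n\in\mathcal{C}_b(\Omega)$; you should either run Daniell--Stone on $\mathcal{L}$ itself (its generated $\sigma$-algebra still contains $\mathcal{B}(\Omega)$ since $\mathcal{L}$ generates the topology and $\Omega$ is second countable), or derive the Daniell condition on $\mathcal{C}_b(\Omega)$ from Dini on the Prokhorov compact $K$ plus $c(1_{\Omega\setminus K})\le\varepsilon$. Second, your tightness step is unnecessary for the regularity actually claimed: the paper simply cites Billingsley's result that every finite Borel measure on a metric space is regular, so Prokhorov is not used there. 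What your route buys is directness and independence from \cite{FP,FP1}; what the paper's route buys is that the identity $L(f)=\int f\,d\mu$ on all of $L^1(c)$ (including the $\mu$-measurability of elements of $\mathcal{L}^1(c)$) comes packaged in the cited results, whereas in your argument this last extension, while indeed routine via the chain $\int|h|\,d\mu\le\|L^+\|\,c(h)$ for l.s.c.\ $h$ and then general $h$ through (\ref{eqsci})--(\ref{eqsci2}), deserves one explicit line.
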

Following \cite{Bil} a bounded signed measure $\mu$ is called regular if for all Borel set $A$, for all $\epsilon>0$, there is a closed set $F$ and an open set $G$ such that $F \subset A \subset G$ and $|\mu|(G-F)<\epsilon$.\\
Notice that in \cite{FP}, the existence of a bounded measure $\mu$ satisfying equation (\ref{eqreg10}) is proved.  However the statement of  Proposition 11 of \cite{FP} does not give informations on the $\sigma$ algebra on which the measure $\mu$ is defined.  Therefore we have to go inside the proof.

\begin{proof}
\begin{itemize}
\item A metrizable space is completely regular and $c$ is a Prokhorov capacity so  Proposition 11 of \cite{FP} gives the existence of a measure $\mu$ satisfying equation (\ref {eqreg10}). We want now prove that $\mu$ is defined on the Borel $\sigma$ algebra. As in the proof of Proposition 11 of \cite{FP} let $Z$ be a compactification of $\Omega$, and $c'$ the capacity defined on $Z$ by $c'(g)=c(g_{|{\Omega}})$.  As $c$ is a Prokhorov capacity, from Proposition 11 of \cite{FP}, $c'(1_{Z-\Omega})=0$ and $L^1(c)=L^1(c')$.
\item
As $Z$ is a compact space, it follows from Theorem 3 of \cite{FP1} that every non negative linear form on $L^1(c')$ can be represented by a non negative measure  obtained from the Riesz representation theorem applied to ${\cal C}(Z)$. Therefore this measure  is defined on a $\sigma$-algebra  containing the Borel sets of $Z$. From Theorem 6 of \cite{FP1} every continuous linear form on $L^1(c)$ is the difference of two non negative linear forms, thus the bounded measure $\mu$ satisfying equation (\ref {eqreg10}) is defined on a Borel $\sigma$-algebra ${\cal B}$ containing the Borel $\sigma$-algebra of $Z$.
\item We want now prove that $\mu$ is defined on the Borel $\sigma$-algebra of $\Omega$.  $\mu$ is defined on the $\sigma$-algebra ${\cal F}$ obtained  by completion of ${\cal B}$ with the $\mu$-null sets. Notice  that from Theorem 3 of \cite{FP1},  every $c'$-negligible set (i.e. $c'(1_A)=0$) is also $\mu$-negligible. This is in particular the case for  $Z-\Omega$  which is therefore $\mu$-measurable. Every open set $V$ of $\Omega$ can be written $V=U\INTER \Omega$ for some open set $U$ of $Z$. Therefore $V$ belongs to ${\cal F}$. It follows that the measure $\mu$ defined on ${\cal F}$ is thus defined on the Borel $\sigma$-algebra of $\Omega$. As $\Omega$ is a metric space and $\mu$ is defined on the Borel $\sigma$-algebra  of $\Omega$, $\mu$ is regular from Theorem 1.1 of \cite{Bil}.
\end{itemize}
\end{proof}
Recall that the weak topology on  ${\cal M}_+(\Omega)$ the set of non negative finite measures on $(\Omega,{\cal B}(\Omega))$ is  the coarsest topology for which the mappings 
$$\mu \in {\cal M}_+(\Omega) \rightarrow \int fd\mu$$ are continuous for every given $f$ in ${\cal C}_b(\Omega)$.
\begin {proposition}
Let $c$ be a Prokhorov capacity on a metrizable separable space. The set of non negative linear forms on the Banach space  $L^1(c)$  is a subset of  ${\cal M}_+(\Omega)$. The weak* topology (i.e. the $\sigma(L^1(c)^*,L^1(c))$ topology) on the  non negative part $K_+$ of the unit ball  of  $L^1(c)^*$ coincides with  the restriction to $K_+$ of the weak topology  on  ${\cal M}_+(\Omega)$.
\label{lemmatop}
\end{proposition}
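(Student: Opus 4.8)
The plan is to treat the two assertions separately. The first — that every non negative continuous linear form on $L^1(c)$ is given by a non negative finite measure on the Borel sets of $\Omega$ — is exactly the last statement of Proposition \ref{thmreg}, so no further work is needed there. It identifies $K_+$ with a subset of ${\cal M}_+(\Omega)$ through the map $L \mapsto \mu_L$, where $L(f)=\int f\,d\mu_L$ for all $f\in L^1(c)$; in particular, since ${\cal C}_b(\Omega)\subset L^1(c)$, one has $\int g\,d\mu_L = L(g)$ for every $g\in{\cal C}_b(\Omega)$.

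For the comparison of topologies I would first record the easy inclusion. The restriction to $K_+$ of the weak topology on ${\cal M}_+(\Omega)$ is generated by the maps $L\mapsto\int g\,d\mu_L=L(g)$ with $g\in{\cal C}_b(\Omega)$. As ${\cal C}_b(\Omega)\subset L^1(c)$, each such map is one of the maps $L\mapsto L(f)$, $f\in L^1(c)$, that define the weak* topology. Hence the weak topology restricted to $K_+$ is coarser than the weak* topology, and the whole content of the proposition reduces to the reverse inclusion: every weak*-generating map $L\mapsto L(f)$ is already continuous on $K_+$ for the weak topology.

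The key step, and where I expect the real work to lie, is an approximation argument that exploits the norm bound defining $K_+$. Fix $f\in L^1(c)$. Since $L^1(c)$ is the closure of ${\cal L}\subset{\cal C}_b(\Omega)$ for the capacity semi-norm, I choose $g_n\in{\cal C}_b(\Omega)$ with $c(f-g_n)\to 0$. For $L\in K_+$ one has $\|L\|_*\leq 1$, hence $|L(h)|\leq c(h)$ for every $h$, so that
\begin{equation}
|L(f)-L(g_n)|=|L(f-g_n)|\leq c(f-g_n)\qquad\text{for all }L\in K_+.
\label{eqplanunif}
\end{equation}
Thus on $K_+$ the map $L\mapsto L(f)$ is the \emph{uniform} limit of the maps $L\mapsto L(g_n)=\int g_n\,d\mu_L$, each of which is continuous for the weak topology on ${\cal M}_+(\Omega)$ restricted to $K_+$ because $g_n\in{\cal C}_b(\Omega)$. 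A uniform limit of continuous functions being continuous, $L\mapsto L(f)$ is continuous for that weak topology.

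To conclude, every generator $L\mapsto L(f)$ of the weak* topology is continuous on $K_+$ for the weak topology, so the weak* topology on $K_+$ is coarser than the weak topology there; combined with the easy reverse inclusion, the two topologies coincide on $K_+$. I expect the only delicate point to be the uniformity in (\ref{eqplanunif}), which is precisely what the boundedness of $K_+$ in $L^1(c)^*$ provides; everything else is the standard passage from uniform convergence to continuity of the limit.
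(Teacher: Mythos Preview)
Your proof is correct and follows essentially the same route as the paper: the first assertion is reduced to Proposition~\ref{thmreg}, and the coincidence of topologies on $K_+$ is obtained from the density of ${\cal C}_b(\Omega)$ in $L^1(c)$ together with the norm bound $\|L\|_*\leq 1$ on $K_+$. The only cosmetic difference is that the paper phrases the argument in terms of neighborhood bases (the sets $V_{f_1,\ldots,f_n,\epsilon}(\mu)$ with $f_i\in{\cal C}_b(\Omega)$ already form a basis for the weak* topology on $K_+$), whereas you phrase it as uniform convergence of the generating functionals; the content is identical.
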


\begin{proof}
From Proposition \ref{thmreg}, every non negative linear form on $L^1(c)$ belongs to ${\cal M}_+(\Omega)$. 
Let  $\mu \in K_+$. As  ${{\cal C}_b(\Omega})$ is dense in the Banach space $L^1(c)$, the open sets
$$V_{f_1,f_2,...f_n,\epsilon}(\mu)=\{\nu \in   K_+ \;\; |\forall i \in \{1,...n\}, \;|\mu(f_i)-\nu(f_i)|<\epsilon\}$$
with  $f_i \in {\cal C}_b(\Omega)$ form a basis of neighborhoods of  $\mu$ in $K_+$ for the weak* topology. Thus the weak* topology on $K_+$ coincides with the weak topology. \end{proof}
\begin{proposition}
Let $c$ be a Prokhorov capacity on a metrizable separable space $\Omega$. The set $K_+$   is compact metrizable for the weak* topology (i.e. the $\sigma(L^1(c)^*,L^1(c))$ topology), as well as for the weak topology. 
\label{propcm}
\end{proposition}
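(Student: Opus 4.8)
The plan is to lean on Proposition \ref{lemmatop}, which already identifies the weak* topology $\sigma(L^1(c)^*,L^1(c))$ on $K_+$ with the restriction of the weak topology on ${\cal M}_+(\Omega)$. It therefore suffices to establish that $K_+$ is compact and metrizable for the weak* topology, and the two assertions about the weak topology follow at once from that identification.

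For compactness, I would realise $K_+$ as the intersection of the closed unit ball $K$ of $L^1(c)^*$ with the positive cone $L^1(c)^*_+=\{L\in L^1(c)^*: L(X)\geq 0 \text{ for all } X\geq 0\}$, the order being that of Definition \ref{deforder}. By the Banach--Alaoglu theorem, $K$ is weak* compact. For each fixed $X\geq 0$ the evaluation $L\mapsto L(X)$ is by definition weak* continuous, so $\{L: L(X)\geq 0\}$ is weak* closed; intersecting over all such $X$ shows that $L^1(c)^*_+$, and hence $K_+=K\cap L^1(c)^*_+$, is weak* closed in $K$. A weak* closed subset of the weak* compact ball $K$ is weak* compact, which gives the compactness of $K_+$.

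For metrizability, the key point is the separability of the Banach space $L^1(c)$, after which the classical fact that the closed unit ball of the dual of a separable Banach space is weak* metrizable applies directly to $K$, and hence to $K_+\subseteq K$. To prove separability I would reuse the compactification already employed in the proof of Proposition \ref{thmreg}: since $\Omega$ is separable and metrizable it admits a \emph{metrizable} compactification $Z$ (for instance the closure of an embedding into the Hilbert cube), and setting $c'(g)=c(g_{|\Omega})$ one has $L^1(c)=L^1(c')$. Monotonicity of the seminorm gives $c'(f)\leq c'(1)\,\|f\|_\infty$ for $f\in{\cal C}(Z)$, so uniform convergence implies $c'$-convergence. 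As $Z$ is compact metrizable, ${\cal C}(Z)$ is separable for the sup norm, so a countable sup-norm dense subset is $c'$-dense in ${\cal C}(Z)$, hence $c'$-dense in $L^1(c')=L^1(c)$. Thus $L^1(c)$ is separable and $K_+$ is weak* metrizable.

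I expect the separability step to be the main obstacle: one must pass specifically to a metrizable compactification of $\Omega$ (a generic compactification such as the Stone--\v{C}ech one is not metrizable) and verify that the uniform topology on ${\cal C}(Z)$ dominates the $c'$-seminorm. Once this is secured, the remaining arguments are routine applications of Banach--Alaoglu and of the standard weak* metrizability theorem.
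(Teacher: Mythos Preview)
Your argument is correct, but your route to metrizability differs from the paper's. The paper does not attempt to prove that $L^1(c)$ is separable in this generality; instead it uses Proposition~\ref{lemmatop} to transport the problem into ${\cal M}_+(\Omega)$ and then invokes the classical fact (Bourbaki, Int\'egration, Chap.~IX) that for $\Omega$ separable metrizable the space ${\cal M}_+(\Omega)$ with the weak topology is itself separable metrizable, so $K_+$ inherits metrizability directly. Compactness is obtained exactly as you do, via Banach--Alaoglu and the weak* closedness of $K_+$. Your approach, by contrast, passes to a \emph{metrizable} compactification $Z$, uses the Prokhorov hypothesis (as in the proof of Proposition~\ref{thmreg}) to identify $L^1(c)=L^1(c')$, and then exploits the separability of ${\cal C}(Z)$ for the sup norm together with the domination $c'(f)\le c'(1)\|f\|_\infty$ to conclude that $L^1(c)$ is separable. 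This yields a genuinely stronger conclusion than the paper's: you get metrizability of the \emph{entire} dual unit ball, not just of $K_+$, which in the paper is established only under the extra assumption that $\Omega$ is compact (Proposition~\ref{propcomp}). The trade-off is that the paper's proof is shorter and rests on a single quotable fact about measures, whereas yours recycles the compactification machinery already needed for Proposition~\ref{thmreg} and must take care to choose $Z$ metrizable.
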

\begin{proof} 
Prove first that $K_+$ is metrizable for the weak* topology.
From Proposition \ref{lemmatop}, the weak* topology on $K_+$ coincides with the restriction to $K_+$ of  the weak topology on $ {\cal M}_+(\Omega)$. As  $\Omega$ is metrizable and separable,   $ {\cal M}_+(\Omega)$ is also metrizable and separable for the weak topology from \cite{B} Section 5. Thus  $K_+$ is metrizable for the weak* topology.\\
From  Banach Alaoglu Theorem,  (theorem V 4 2 of \cite{DS}) the closed unit  ball of the dual space of a Banach space is always compact for the weak* topology. As $K_+$ is a  closed subset of  this unit ball for the weak* topology, it is also compact. This proves the result for the weak* topology. From Proposition \ref{lemmatop}, $K_+$ is also metrizable compact for the weak topology. 
\end{proof}
\begin{corollary}
 Assume that ${\Omega}$ is a  Polish space. For every capacity $c$ on $\Omega$, the set $K_+$     is compact metrizable for the weak* topology.
\label{corollaryPolish}
\end{corollary}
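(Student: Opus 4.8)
The statement differs from Proposition \ref{propcm} only in that the capacity $c$ is no longer assumed to be Prokhorov. Hence my plan is to reduce the corollary to that proposition by proving the single implication: \emph{on a Polish space, every capacity in the sense of Definition \ref{defcap11} is a Prokhorov capacity}. Once this is known, Proposition \ref{propcm} applies verbatim and yields that $K_+$ is compact and metrizable for the weak* topology, which is exactly the assertion.

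To prove that $c$ is Prokhorov I would establish the tightness estimate: for every $\epsilon>0$ there is a compact set $K\subset\Omega$ with $c(1_{\Omega\setminus K})\le\epsilon$. Fix a complete metric $d$ compatible with the Polish topology and a countable dense set $\{x_k\}$. For each $n$ the closed sets $F^n_m=\bigcup_{k\le m}\bar B(x_k,1/n)$ increase to $\Omega$, so the open sets $W^n_m=\Omega\setminus F^n_m$ decrease to the empty set as $m\to\infty$. Granting the key continuity property $\lim_{m\to\infty} c(1_{W^n_m})=0$, I would choose for each $n$ an index $m_n$ with $c(1_{W^n_{m_n}})<\epsilon\,2^{-n}$ and set $K=\bigcap_n F^n_{m_n}$. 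This $K$ is closed and, being contained in a finite union of $1/n$-balls for every $n$, totally bounded; completeness of $d$ then makes it compact. Since $\Omega\setminus K=\bigcup_n W^n_{m_n}$, countable subadditivity of the capacity gives $c(1_{\Omega\setminus K})\le\sum_n c(1_{W^n_{m_n}})<\epsilon$, which is precisely the Prokhorov property.

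The delicate point, and the main obstacle, is the continuity statement $\lim_{m\to\infty}c(1_{W^n_m})=0$ for the decreasing open sets $W^n_m$ with empty intersection. Unlike a measure, a capacity is only subadditive, so one cannot recover this value from $c(1_\Omega)-c(1_{F^n_m})$; the vanishing must instead be extracted from the regularity-along-sequences axiom of Definition \ref{defcap11}. Concretely, I would approximate each l.s.c. indicator $1_{W^n_m}$ from below by continuous functions using Proposition \ref{propopenclosed}, and then transfer the decreasing convergence to a decreasing sequence in ${\cal L}$ to which the axiom applies directly, forcing the capacities to tend to $0$; this is exactly the step that fails for the pathological supremum-norm type set functions that the regularity axiom rules out. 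I would stress that completeness of $\Omega$ enters only once, to upgrade the totally bounded closed set $K$ to a compact set, whereas the regularity axiom carries the whole analytic content of tightness.
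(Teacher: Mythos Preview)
Your reduction is exactly the paper's argument: the paper's proof is a single sentence citing \cite{FP} (and the Appendix, Section \ref{secap}) for the fact that every capacity on a Polish space is Prokhorov, and then invoking Proposition \ref{propcm}. At the level of the corollary itself, your approach matches the paper's.

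Where you go beyond the paper is in sketching a proof of the implication ``Polish $\Rightarrow$ Prokhorov'', which the paper simply imports. Your tightness construction is the standard one, but the justification of the key step $c(1_{W^n_m})\to 0$ is not right as written. Approximating the l.s.c.\ indicator $1_{W^n_m}$ \emph{from below} via Proposition \ref{propopenclosed} only recovers $c(1_{W^n_m})$ as a supremum of capacities of continuous minorants supported in $W^n_m$; this does not by itself produce a decreasing sequence in ${\cal L}$ converging to $0$ to which the regularity axiom applies. What actually works is to \emph{majorize}: for fixed $n$, the functions
\[
g_m(x)=\min\Bigl(1,\; n\cdot \min_{1\le k\le m} d(x,x_k)\Bigr)
\]
are continuous and bounded, satisfy $g_m\ge 1_{W^n_m}$ (since $x\in W^n_m$ forces $d(x,x_k)>1/n$ for all $k\le m$), decrease in $m$, and converge pointwise to $0$ by density of $\{x_k\}$. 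The regularity axiom of Definition \ref{defcap11} then yields $c(g_m)\to 0$, hence $c(1_{W^n_m})\le c(g_m)\to 0$. One must still ensure the $g_m$ (or suitable substitutes) lie in the lattice ${\cal L}$; this is where the assumption that ${\cal L}$ generates the topology enters, and is part of what is handled in \cite{FP}. The paper sidesteps all of this by citation.
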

\begin{proof} From \cite{FP}, see also the Appendix (Section \ref{secap}), every capacity on a Polish space is a Prokhorov capacity, and thus the result follows from Proposition \ref{propcm}. 
\end{proof}
In the particular case of a compact metrizable space, we obtain the following stronger result.
\begin{proposition}
Let $\Omega$ be a metrizable compact space. Let $c$ be a capacity on $\Omega$. Then the Banach space $L^1(c)$ is separable and the unit ball of $L^1(c)^*$ is metrizable compact for the weak* topology.
\label{propcomp}
\end{proposition}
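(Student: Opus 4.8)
The plan is to reduce the statement to the classical duality fact that the closed unit ball of the dual of a \emph{separable} Banach space is weak*-compact and weak*-metrizable; the real work is therefore to establish that $L^1(c)$ is separable, and here the compactness of $\Omega$ is what makes this possible.

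First I would record that the capacity seminorm is controlled by the uniform norm on $\mathcal{C}(\Omega)$. For $f \in \mathcal{C}(\Omega)$ one has $|f| \le \|f\|_\infty\, 1$ pointwise, and since $1 \in \mathcal{L}$ while $c$ is a monotone seminorm, monotonicity (property 1 of Definition \ref{defcap11}) together with positive homogeneity yields $c(f) \le \|f\|_\infty\, c(1)$. Consequently the canonical map $(\mathcal{C}(\Omega), \|\cdot\|_\infty) \to L^1(c)$ is continuous. Because $\Omega$ is compact metrizable, $(\mathcal{C}(\Omega), \|\cdot\|_\infty)$ is separable; let $D$ be a countable uniformly dense subset. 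From $c(f-g) \le c(1)\|f-g\|_\infty$ the set $D$ is $c$-dense in $\mathcal{C}(\Omega)$, and since $\mathcal{L} \subseteq \mathcal{C}(\Omega)$ is itself dense in $L^1(c)$ by the very construction of $\mathcal{L}^1(c)$ as the $c$-closure of $\mathcal{L}$, the countable set $D$ is dense in $L^1(c)$. Hence $L^1(c)$ is separable.

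Next, with a countable dense sequence $(g_k)_{k \ge 1}$ of the closed unit ball of $L^1(c)$ in hand, I would define on the dual unit ball $B^* = \{L \in L^1(c)^* : \|L\| \le 1\}$ the metric
\[
d(L, L') = \sum_{k \ge 1} 2^{-k}\, |L(g_k) - L'(g_k)|,
\]
which is finite (each term is at most $2\cdot 2^{-k}$ since $\|g_k\|\le 1$ and $\|L\|,\|L'\|\le 1$) and separates points of $B^*$ precisely because $(g_k)$ is dense and the elements of $B^*$ are norm-bounded. A standard argument then shows that on the bounded set $B^*$ the topology induced by $d$ coincides with the weak* topology $\sigma(L^1(c)^*, L^1(c))$: $d$-convergence is joint convergence of the coordinate functionals $L \mapsto L(g_k)$, and by density together with the uniform bound $\|L\| \le 1$ this is equivalent to convergence of $L \mapsto L(f)$ for every $f \in L^1(c)$.

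Finally, the Banach--Alaoglu theorem (Theorem V.4.2 of \cite{DS}) gives that $B^*$ is weak*-compact, and combined with the metrizability just established this shows that $B^*$ is metrizable compact for the weak* topology, completing the proof. I expect the only genuinely delicate point to be the separability step, where the compactness of $\Omega$ cannot be dispensed with: for a non-compact $\Omega$ the space $\mathcal{C}_b(\Omega)$ need not be uniformly separable, so the conclusion about the \emph{full} unit ball — rather than only its non-negative part $K_+$ treated in Propositions \ref{lemmatop} and \ref{propcm} — genuinely relies on $\Omega$ being compact.
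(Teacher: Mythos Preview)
Your proof is correct and follows essentially the same route as the paper: separability of $\mathcal{C}(\Omega)$ for compact metrizable $\Omega$ yields separability of $L^1(c)$, and then the standard fact that the dual unit ball of a separable Banach space is weak*-metrizable compact finishes the argument. The paper delegates the last step to a citation (Theorem V.5.1 of \cite{DS}) and leaves implicit the inequality $c(f)\le c(1)\|f\|_\infty$ you spelled out, so your version is simply a more self-contained rendering of the same proof.
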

\begin{proof} As $\Omega$ is a metrizable compact space, ${\cal C}(\Omega)$ is separable from Thm 1  Section 3 of \cite{BT}.  Thus for every capacity $c$ on $\Omega$, $L^1(c)$ is also separable. Then  from Theorem V 5 1 of \cite{DS}, the unit ball of $L^1(c)^*$ (and not only its non negative part) is metrizable  compact for the weak* topology.
\end{proof}

\section{Representation of a convex risk measure on $L^1(c)$}
\label{L1}
In this section, $c$ denotes a  Prokhorov capacity on a metrizable separable space $\Omega$.
Recall that a partial order has been defined on $L^1(c)$ in Section \ref{sec1.1}. We can define convex risk measures in the usual way as follows.
\begin{definition}
Let $\rho:L^1(c) \rightarrow \R$.
\begin{itemize}
\item $\rho$ is monotonic if 
$\rho(X) \geq \rho(Y)$ for every $X, Y \in L^1(c)$, such that $X \leq Y$.
\item  $\rho$ is convex if for every $X, Y \in L^1(c)$, for every $0 \leq \lambda \leq 1$, $\rho( \lambda X + (1-\lambda)Y \leq \lambda \rho(X) + (1-\lambda)\rho(Y)$
\item $\rho$ is translation invariant if  $\rho(X+a)=\rho(X)-a$ for every $X \in L^1(c)$ and $a \in \R$.
\end{itemize}
$\rho$ is a convex risk measure if it satisfies all these conditions.
\label{defmes}
\end{definition}

\subsection{Representation for convex risk measures}
Duality results for risk measures are well known in other settings.  A duality result was first proved in the case of risk measures on $L^{\infty}$ spaces assuming furthermore continuity from below. Duality results are based on the Fenchel Legendre duality, generalized to the  context of locally convex topological spaces by Rockafellar \cite{Ro}. This is the generalized version that we need here.  No additional hypothesis is needed in order to prove the dual representation result. The important and new discussion will be developed in  Subsection \ref{subsecwc} using the topological results proved in Section \ref{sectop}.
\begin{theorem}
Let $\rho$ be a convex risk measure on $L^1(c)$. Then, $\rho$ is continuous and admits a representation of the form:
\begin{equation}
\forall X\in L^1(c),\ \rho\left( X\right)=\sup_{Q\in\mathcal{P'}}(
E_{Q}[-X]-\alpha\left( Q\right))
\label{eqmag}
\end{equation}
where \begin{equation}
\alpha\left(Q\right)=\sup_{X\in L^{1}(c)}\left( E_{Q}[-X]-\rho\left( X\right)\right)
\label{eqma}
\end{equation}
 $\mathcal{P'}$ is the set of probability measures on $(\Omega,{\cal B}(\Omega))$ belonging to $L^1(c)^*$.
\label{thmrep1}
\end{theorem}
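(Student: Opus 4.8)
The plan is to derive the representation from the Fenchel--Moreau biconjugation theorem in the dual pairing $\langle L^1(c),L^1(c)^*\rangle$, once $\rho$ is known to be norm continuous; the identification of the admissible dual elements as probability measures is then a soft consequence of monotonicity and translation invariance. I expect the continuity statement to be the genuine obstacle, and everything after it to be routine convex duality.

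First I would establish that $\rho$ is continuous. By Lemma~\ref{lemmaorder} and Proposition~\ref{order}, $L^1(c)$ is a Banach lattice: it is a vector lattice, $|X|\in L^1(c)$ for every $X$, and the norm $X\mapsto c(X)$ is monotone, since $|X|\le|Y|$ forces $c(X)\le c(Y)$. Thus $\rho$ is a finite-valued convex and monotone functional on a Banach lattice. The key difficulty, and the reason this is not as immediate as on $L^{\infty}(\Omega)$, is that the norm of $L^1(c)$ does not dominate the order: there is in general no constant $\kappa$ with $\kappa\,c(X)\,1\ge X$, so monotonicity alone does not produce local boundedness and continuity cannot simply be read off. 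I would handle this by the extended Namioka--Klee theorem for finite convex monotone functionals on a Banach lattice, which yields that $\rho$ is (locally Lipschitz, hence) continuous on all of $L^1(c)$. Internally, the mechanism is that the real-valued sublevel sets $A_n=\{X:\rho(X)\le n\}$ cover $L^1(c)$ and are convex; a Baire category argument forces one of them to have nonempty interior, i.e. $\rho$ is bounded above on some ball, and convexity together with finiteness then propagates this into continuity everywhere. The delicate point, which this theorem encapsulates, is that the positive cone of $L^1(c)$ has empty interior, so the elementary $L^{\infty}$-type argument is unavailable.

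Once $\rho$ is continuous it is proper, convex and lower semicontinuous for the norm topology of $L^1(c)$, and no further hypothesis is needed. Rockafellar's version of the Fenchel--Moreau theorem for locally convex spaces then gives $\rho=\rho^{**}$ in the duality $\langle L^1(c),L^1(c)^*\rangle$, namely
\[
\rho(X)=\sup_{\ell\in L^1(c)^*}\left(\ell(X)-\rho^*(\ell)\right),\qquad \rho^*(\ell)=\sup_{X\in L^1(c)}\left(\ell(X)-\rho(X)\right).
\]

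It remains to identify which $\ell$ actually contribute. By Proposition~\ref{thmreg} every $\ell\in L^1(c)^*$ is represented by a regular bounded signed measure, and writing $\ell(X)=E_{\mu}[-X]=-\int X\,d\mu$ merely amounts to naming the signed measure $\mu$. I would then show that $\rho^*(\ell)<\infty$ forces $\mu$ to be a probability measure. Monotonicity gives non-negativity: for $Z\ge 0$ and $t\ge 0$ we have $X_0+tZ\ge X_0$, hence $\rho(X_0+tZ)\le\rho(X_0)$, and evaluating $E_{\mu}[-(X_0+tZ)]-\rho(X_0+tZ)$ shows that $\int Z\,d\mu<0$ would drive $\rho^*(\ell)$ to $+\infty$ as $t\to+\infty$; testing against non-negative continuous functions (which lie in $L^1(c)_+$ by Lemma~\ref{lemmaorder}) yields $\mu\ge 0$. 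Translation invariance fixes the mass: replacing $X$ by $X+a$ and using $\rho(X+a)=\rho(X)-a$ produces the extra term $a(1-\mu(\Omega))$ in the conjugate, so $\mu(\Omega)\ne 1$ would again force $\rho^*(\ell)=+\infty$ by sending $a\to\pm\infty$. Hence only probability measures $Q\in\mathcal P'$ contribute, the terms with infinite conjugate being harmless in the supremum, and setting $\alpha(Q)=\sup_{X\in L^1(c)}\left(E_Q[-X]-\rho(X)\right)$ recovers (\ref{eqma}) and the representation (\ref{eqmag}).
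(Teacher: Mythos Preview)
Your proof is correct and follows essentially the same route as the paper's: continuity is obtained from the extended Namioka--Klee theorem of Biagini--Frittelli (what the paper cites as Theorem~1 of \cite{BF}), then Rockafellar's Fenchel--Moreau duality in the pairing $\langle L^1(c),L^1(c)^*\rangle$ gives the biconjugate representation, and monotonicity and translation invariance force the relevant dual elements to be probability measures. Your more detailed discussion of why continuity is nontrivial (empty interior of the positive cone, Baire category) is helpful exposition but does not change the underlying argument.
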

\begin{proof}
The continuity of $\rho$ 
follows from Theorem 1 of \cite{BF}. \\
We call $\alpha$ the function on $L^1(c)^{\star}$  defined by:
$$\forall \mu\in L^1(c)^{\star},\ \alpha\left( \mu\right)=\sup_{X\in L^1(c)}\left( \mu\left(X\right)-\rho\left( X\right)\right)$$
 As the dual of $L^1(c)^*$ (with the weak * topology) is $L^1(c)$, the locally convex topological spaces $L^1(c)$ and $L^1(c)^*$ are paired in the sense of \cite{Ro}. 
$\rho$ is continuous,  we can thus apply Theorem 5 in Rockafellar \cite{Ro}. We get the following equality:
$$\forall X\in L^1(c),\ \rho\left( X\right)= \sup_{\mu\in L^1(c)^{\star}}\left( \mu\left( X\right)-\alpha\left( \mu\right)\right)$$
In the supremum above, we 
can obviously restrict to the elements $\mu$ of $L^1(c)^{\star}$ such that $\alpha \left( \mu\right)<+\infty$.\\
Let $\mu_0\in L^1(c)^{\star}$ such that $\alpha\left( \mu_0\right)<+\infty$, we first  prove that $-\mu_0$ is a positive linear form. Let $X\in L^1(c)$ such that 
$X\geq 0$. For all $\lambda>0$, using  the monotonicity of $\rho$,  $\rho\left( \lambda X\right)\leq \rho\left( 0\right)$, which implies  that
$$\lambda \mu_0\left( X\right)-\alpha\left( \mu_0\right) \leq \rho\left(0\right)$$
$\rho\left( 0\right)$ and $\alpha(\mu_0)$ are finite  and the above inequality  is satisfied for all $\lambda>0$, thus $\mu_0\left( X\right)\leq 0$.\\
From Proposition \ref{thmreg}, $-\mu_0$ is represented by a  finite non negative measure defined on $(\Omega,{\cal B}(\Omega))$.
Thanks to  the translation invariance of $\rho$, for all $\lambda \in\R$, $\rho\left( \lambda\right)=\rho\left( 0\right)-\lambda$, which means that:
$$\rho\left( 0\right)=\lambda+ \sup_{\mu\in L^1(c)^{\star}}\left(\lambda \mu\left( 1\right)-\alpha\left( \mu\right)\right)\geq \lambda\left(1+ \mu_0\left( 1\right)\right)-\alpha\left( \mu_0\right)$$
We conclude as  above  that $1+ \mu_0\left( 1\right)=0$. Thus, $-\mu_0$ is a probability measure on $(\Omega,{\cal B}(\Omega))$ and $-\mu_0 \in L^1(c)^*$.
\end{proof} 

\subsection{Risk measures represented by a weakly relatively compact set of probability measures}
\label{subsecwc}
 In this section we want to characterize risk measures $\rho$ on $L^1(c)$ admitting a dual representation with a  relatively  compact set of probability measures for the weak* topology. 
 \begin{definition} A convex risk measure $\rho$ on $L^1(c)$ is normalized if $\rho(0)=0$.
\end{definition}
\begin{proposition}
Let $\rho:L^1(c) \rightarrow \R$ be a normalized convex risk measure.
The following conditions are equivalent:
\begin{enumerate}
\item $\rho$  is majorized by a sublinear risk measure
\item $\forall X \in L^1(c)$, $\sup_{\lambda >0}\frac{\rho(\lambda X)}{\lambda} < \infty$
\item there exits $K>0$ such that $\forall X \in L^1(c)$, $|\rho(X)| \leq K c(X)$
\item $\rho$ is represented by a   set ${\cal Q}$ of probability measures in ${L^1}(c)^*$ relatively compact for the weak* topology, i.e. 
\begin{equation}
\forall X\in L^1(c),\ \rho\left( X\right)=\sup_{Q\in {\cal Q}}(E_{Q}[-X]-\alpha\left( Q\right))
\label{eqmag2}
\end{equation}
\end{enumerate}
\label{prop6}
\end{proposition}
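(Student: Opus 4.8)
The plan is to show the four conditions are equivalent by running the cycle $(1)\Leftrightarrow(2)$, $(1)\Rightarrow(3)$, $(3)\Rightarrow(4)$, $(4)\Rightarrow(1)$, so that each statement entails all the others. The standing tools will be: continuity of $\rho$ and its dual representation $(\ref{eqmag})$--$(\ref{eqma})$ from Theorem \ref{thmrep1}; the fact that, since $\rho$ is normalized, $\alpha(Q)=\sup_X\bigl(E_Q[-X]-\rho(X)\bigr)\ge E_Q[0]-\rho(0)=0$ for every probability measure $Q$; and, crucially, the weak* compactness of the non negative part of the dual ball established in Proposition \ref{propcm}.

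For the pair $(1)\Leftrightarrow(2)$ I would set $\rho'(X)=\sup_{\lambda>0}\rho(\lambda X)/\lambda$. Under $(2)$ this is finite, it is positively homogeneous and convex (a supremum of convex maps), hence subadditive, and it inherits monotonicity and translation invariance, so $\rho'$ is a sublinear risk measure majorizing $\rho$ (take $\lambda=1$); this gives $(1)$. Conversely, if $\rho\le\tilde\rho$ with $\tilde\rho$ sublinear, then $\rho(\lambda X)/\lambda\le\tilde\rho(\lambda X)/\lambda=\tilde\rho(X)<\infty$ uniformly in $\lambda$, which is $(2)$. For $(1)\Rightarrow(3)$ I would use that a sublinear majorant $\tilde\rho$ is itself a convex risk measure, hence continuous by Theorem \ref{thmrep1}; continuity at $0$ combined with positive homogeneity produces a constant $K$ with $\tilde\rho(X)\le K\,c(X)$. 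Then $\rho(X)\le\tilde\rho(X)\le K\,c(X)$, while convexity together with $\rho(0)=0$ yields $\rho(X)\ge-\rho(-X)\ge-\tilde\rho(-X)\ge-K\,c(X)$, so $|\rho(X)|\le K\,c(X)$.

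The decisive step is $(3)\Rightarrow(4)$, and I expect it to be the main obstacle. From $\rho(X)\le K\,c(X)$ and the representation, any $Q$ with $\alpha(Q)<\infty$ satisfies $E_Q[-X]-\alpha(Q)\le\rho(X)\le K\,c(X)$; replacing $X$ by $tX$ with $t>0$, dividing by $t$ and letting $t\to\infty$ eliminates the constant $\alpha(Q)$, so $E_Q[-X]\le K\,c(X)$ for all $X$, and hence $\|Q\|_{L^1(c)^*}\le K$. Thus the effective domain $\mathcal Q=\{Q:\alpha(Q)<\infty\}$ is contained in $K\,K_+$, the non negative part of the ball of radius $K$, which is weak* compact by Proposition \ref{propcm}; therefore $\mathcal Q$ is relatively weak* compact and, restricting the supremum in $(\ref{eqmag})$ to $\alpha<\infty$, the representation $(\ref{eqmag2})$ holds. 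The delicate points here are precisely that one passes from a norm bound on $\mathcal Q$ to weak* relative compactness only through the topological results of Section \ref{sectop}, and that the representing functionals must be recognized as non negative measures of bounded mass.

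Finally, for $(4)\Rightarrow(1)$ I would take $\overline{\mathcal Q}$, the weak* closure of $\mathcal Q$, which is compact by hypothesis; since evaluation at the constant $1$ is weak* continuous and the cone of non negative forms is weak* closed, every element of $\overline{\mathcal Q}$ is again a probability measure. Defining $\rho'(X)=\max_{Q\in\overline{\mathcal Q}}E_Q[-X]$, this maximum is attained and finite because $Q\mapsto E_Q[-X]$ is weak* continuous on the compact set $\overline{\mathcal Q}$, and $\rho'$ is a sublinear risk measure. Using $\alpha\ge0$ one gets $\rho(X)=\sup_{Q\in\mathcal Q}\bigl(E_Q[-X]-\alpha(Q)\bigr)\le\rho'(X)$, which is $(1)$ and closes the cycle.
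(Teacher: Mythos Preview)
Your argument is correct and closes the cycle of implications cleanly. The route differs from the paper's mainly in two places. For reaching $(3)$, the paper goes $2\Rightarrow 3$ via the dual representation: from $\sup_{\lambda>0}\rho(\lambda X)/\lambda<\infty$ one deduces $\sup_{Q\in{\cal Q}}|E_Q(Y)|<\infty$ for each $Y$, and then invokes the Banach--Steinhaus theorem to obtain a uniform bound $\|E_Q\|\le K$; you instead pass directly $1\Rightarrow 3$ by observing that a sublinear majorant $\tilde\rho$ is itself continuous (Theorem \ref{thmrep1}) and then combining continuity at $0$ with positive homogeneity to produce the Lipschitz constant $K$. For $(4)\Rightarrow(1)$, the paper again calls on Banach--Steinhaus (through Lemma \ref{lemma4.1}) to place ${\cal Q}$ in a dual ball before defining $\rho_1(X)=\sup_{Q\in{\cal Q}}E_Q(-X)$; you bypass this by working on the compact closure $\overline{\cal Q}$ and using weak* continuity of $Q\mapsto E_Q[-X]$ to get finiteness of the maximum directly. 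Your approach is slightly more elementary in that it never invokes uniform boundedness, while the paper's route has the minor advantage of producing an explicit norm bound on the representing set as a by-product. The step $(3)\Rightarrow(4)$ and the use of $\alpha\ge 0$ from normalization are essentially identical in both proofs.
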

Before giving the proof of the Proposition, we prove the following Lemma
\begin{lemma} Let ${\cal Q}$ be  a set of probability measures on $(\Omega, {\cal B}(\Omega))$ such that ${\cal Q} \subset L^1(c)^*$. Assume that ${\cal Q}$ is relatively compact for the weak* topology $\sigma((L^1(c)^*, L^1(c))$. Then ${\cal Q}$ is contained in some closed ball of $ L^1(c)^*$ and the weak* closure of ${\cal Q}$ is also compact for the weak topology.
\label{lemma4.1}
\end{lemma}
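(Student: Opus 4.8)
The plan is to establish the two assertions in turn, the crux being to combine the uniform boundedness principle with the identification of the weak* and weak topologies on bounded subsets of the positive cone furnished by Proposition \ref{lemmatop}.

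First I would show that $\mathcal{Q}$ sits inside a closed ball. Since $\mathcal{Q}$ is relatively compact for $\sigma(L^1(c)^*, L^1(c))$, its weak* closure $\overline{\mathcal{Q}}$ is weak*-compact. For each fixed $f \in L^1(c)$ the evaluation $\mu \mapsto \mu(f)$ is weak*-continuous, so its image $\{\mu(f) : \mu \in \overline{\mathcal{Q}}\}$ is a compact, hence bounded, subset of $\R$. Thus $\overline{\mathcal{Q}}$ is pointwise bounded on the Banach space $L^1(c)$, and the Banach--Steinhaus theorem yields $r := \sup_{\mu \in \overline{\mathcal{Q}}} \|\mu\|_{L^1(c)^*} < \infty$. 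Hence $\mathcal{Q} \subset \overline{\mathcal{Q}} \subset \{\mu : \|\mu\| \le r\}$, which is the first claim.

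Next I would locate $\overline{\mathcal{Q}}$ inside the positive cone. Each $Q \in \mathcal{Q}$ is a probability measure, hence a non negative linear form, i.e. $Q(f) \ge 0$ for every $f \ge 0$ in $L^1(c)$. The set of all such non negative forms is an intersection of weak*-closed half-spaces $\{\mu : \mu(f) \ge 0\}$, hence weak*-closed, so every element of $\overline{\mathcal{Q}}$ is again a non negative form and, by Proposition \ref{thmreg}, belongs to ${\cal M}_+(\Omega)$. Combined with the previous step, $\overline{\mathcal{Q}}$ is contained in $r K_+$, the non negative part of the ball of radius $r$.

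Finally I would transfer compactness to the weak topology. Proposition \ref{lemmatop} shows that on $K_+$ the weak* and weak topologies coincide; since scalar multiplication by $r$ is a homeomorphism for both topologies (equivalently, the density of ${\cal C}_b(\Omega)$ in $L^1(c)$ used in its proof applies verbatim on any bounded set), the two topologies coincide on $r K_+$ as well. As $\overline{\mathcal{Q}}$ is weak*-compact and lies in this common set, it is compact for the weak topology too. The one point requiring care, and the main obstacle, is precisely ensuring that $\overline{\mathcal{Q}}$ remains in a fixed ball of the positive cone so that Proposition \ref{lemmatop} genuinely applies; this is exactly what the first two steps secure.
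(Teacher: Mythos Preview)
Your argument is correct and follows essentially the same route as the paper's: pointwise boundedness of the weak*-compact closure via continuity of evaluations, then Banach--Steinhaus to land in a closed ball, then Proposition \ref{lemmatop} to identify the weak* and weak topologies on the non negative part of that ball. You supply more detail than the paper at two spots---the weak*-closedness of the positive cone, and the remark that Proposition \ref{lemmatop} extends from $K_+$ to $rK_+$---but these are elaborations of the same proof, not a different approach.
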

\begin{proof}
Denote $\overline {\cal Q}$ the closure of ${\cal Q}$ for the weak* topology. $\overline {\cal Q}$ is compact. Let $X \in L^1(c)$. The map $Q \rightarrow E_Q(X)$ is continuous for the weak* topology, thus $\sup_{Q \in \overline {\cal Q}}|E_Q(X)| < \infty$. From Banach Steinhauss Theorem (cf \cite{Ru}), it follows that $\overline {\cal Q}$ is contained in some closed ball of $L^1(c)^*$, and thus in the non negative part of this closed ball. From Proposition \ref{lemmatop}, $\overline {\cal Q}$ is weakly compact. 
\end{proof}
We can now give the proof of Proposition \ref{prop6}.
\begin{proof}
Consider the dual representation of $\rho$ given by equation (\ref{eqmag}). Denote 
${\cal Q}=\{Q \in {\cal P}'\;|\; \alpha(Q)< \infty\}$. Then
\begin{equation}
\forall X\in L^1(c),\ \rho\left( X\right)=\sup_{Q\in {\cal Q}}(E_{Q}(-X)-\alpha\left( Q\right))
\label{eqmag2bis}
\end{equation}
{\it1}.  implies {\it 2}. Let $\rho_1$ be a sublinear risk measure majorizing $\rho$.
Then for every $\lambda  \in \R^+_{*}$, $\rho(\lambda X) \leq \lambda \rho_1(X)$.
Thus $\sup_{\lambda >0}\frac{\rho(\lambda X)}{\lambda} \leq \rho_1(X)$, and {\it 2} is proved.\\
{\it2}.  implies {\it 3}. 
For every $X \in L^1(c)$, denote $\beta_X=\sup_{\lambda >0}\frac{\rho(\lambda X)}{\lambda}$. From the dual representation (\ref{eqmag2bis}), applied with $\lambda X$ for every $\lambda >0$, it follows that $\forall Q \in {\cal Q}$, $E_Q(-X) \leq \beta_X$, and thus $\sup_{Q \in {\cal Q}}E_Q(-X) \leq \beta_X < \infty$ for every $X \in L^1(c)$. With $X =-|Y|$, we get that 
\begin{equation}
\forall Y \in L^1(c), \;\sup_{Q \in {\cal Q}}|E_Q(Y)|<\infty 
\label{eqBS}
\end{equation}
    $L^1(c)$ is a Banach space and from Theorem \ref{thmrep1}, every $E_Q$ is a continuous linear form on $L^1(c)$. Denote $||E_Q||$ its norm.
From  Banach Steinhauss  Theorem,  equation (\ref{eqBS}) implies the existence of  $K>0$ such that $\sup_{Q \in {\cal Q}} ||E_Q|| \leq K$. Notice that from the normalization condition ($\rho(0)=0$) it follows from equation (\ref{eqma}) that for every $Q$, $\alpha(Q) \geq 0$. Thus from the representation (\ref{eqmag2bis}), for every $X \in L^1(c)$, 
\begin{equation}
\rho(X) \leq K c(X)
 \label{eqB}
\end{equation}
From the  convexity, the  monotonicity of $\rho$ and  $\rho(0)=0$, it follows that \begin{equation}
-\rho(X) \leq \rho(-X) \leq \rho(-|X|) \leq K c(-|X|)=K c(X)
 \label{eqB2}
\end{equation} 
Thus from equations (\ref{eqB}) and (\ref{eqB2}), for every $X \in L^1(c)$, 
$$|\rho(X)| \leq K c(X)$$
This proves {\it 3}.\\
 {\it 3}. implies {\it 4}.
 From the representation of $\rho$, equation (\ref{eqmag2bis}) applied with $- \lambda |X|$ for every $\lambda>0$, it follows from hypothesis {\it 3}.  that for every $Q \in {\cal Q}$ $||E_Q|| \leq K$.
This means that ${\cal Q}$ is contained in a closed ball of the dual of $L^1(c)$. Every such closed ball  is compact for the weak* topology (Banach Alaoglu Theorem). Thus ${\cal Q}$ is relatively compact for the weak* topology.\\
{\it 4}. implies {\it 1}.
 $\rho$ is represented by a  set of probability measures  ${\cal Q} \subset L^1(c)^*$ relatively compact for the weak * topology. From Lemma \ref{lemma4.1}, ${\cal Q}$ is contained in some closed ball of $L^1(c)^*$. Define $\rho_1$  by $\rho_1(X)= \sup_{Q \in {\cal Q}}E_Q(-X)$. As ${\cal Q}$ is bounded, $\rho_1(X)$ is finite for every $X$ in $L^1(c)$. It is easy to verify that $\rho_1$ is a sublinear risk measure and that $\rho$ is majorized by $\rho_1$. 
\end{proof}

\begin{theorem}
Let ${\rho}$ be a convex risk measure on $L^1(c)$. Assume that $\rho$ is  represented by 
$$\rho(X)= \sup_{Q \in {\cal Q}} (E_Q(-X)-\alpha(Q))$$
where ${\cal Q}$ is a set of probability measures in $L^1(c)^*$  relatively compact  for the  weak* topology. Let $\overline{\cal Q}$ be the closure of ${\cal Q}$ for the weak* topology.
Then 
\begin{itemize}
\item
$\overline {\cal Q}$ is 
metrizable compact both   for the  weak* topology and the weak topology.
\item
 For every $X \in L^1(c)$, there is a probability measure $Q_X \in \overline {\cal Q}$ such that 
\begin{equation}
\rho(X)=  E_{Q_X}(-X)-\alpha(Q_X)
\label{eqatt}
\end{equation}
\end{itemize}
\label{thmmax}
\end{theorem}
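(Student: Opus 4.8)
The plan is to prove the two assertions separately: the first is essentially a packaging of the topological results of Section \ref{sectop} together with Lemma \ref{lemma4.1}, while the second rests on the upper semicontinuity of the map $Q \mapsto E_Q(-X)-\alpha(Q)$ over the now-compact set $\overline{\cal Q}$, combined with the Fenchel--Young bound coming from the definition of $\alpha$.

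For the first bullet I would start from Lemma \ref{lemma4.1}, which already gives a radius $K>0$ with ${\cal Q}$ contained in the closed ball of radius $K$ of $L^1(c)^*$. Since ${\cal Q}$ is relatively weak* compact, $\overline{\cal Q}$ is weak* compact; moreover the closed ball of radius $K$ is weak* closed (Banach--Alaoglu) and the cone of non negative linear forms is weak* closed, being the intersection over $f\geq 0$ of the half spaces $\{L : L(f)\geq 0\}$, so every element of $\overline{\cal Q}$ is a non negative form of norm at most $K$. Hence $\frac{1}{K}\overline{\cal Q}\subseteq K_+$, the non negative part of the unit ball. By Proposition \ref{propcm}, $K_+$ is metrizable; a compact subset of a metrizable space is metrizable, so $\overline{\cal Q}$ is metrizable compact for the weak* topology, and by Proposition \ref{lemmatop} the weak* and weak topologies coincide on this set, giving metrizable compactness for the weak topology as well.

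For the second bullet, fix $X\in L^1(c)$ and set $\Phi_X(Q)=E_Q(-X)-\alpha(Q)$ on $\overline{\cal Q}$, with $\alpha$ the penalty function of (\ref{eqma}). First, $Q\mapsto E_Q(-X)$ is weak* continuous by the very definition of the weak* topology, since $-X\in L^1(c)$. Second, $\alpha(Q)=\sup_{Y\in L^1(c)}\left(E_Q(-Y)-\rho(Y)\right)$ is a supremum of weak* continuous affine functions of $Q$, hence weak* lower semicontinuous; therefore $\Phi_X$ is weak* upper semicontinuous on the compact set $\overline{\cal Q}$ and attains its supremum at some $Q_X\in\overline{\cal Q}$. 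To identify this supremum with $\rho(X)$, note that taking $Y=X$ in the definition of $\alpha(Q)$ yields the Fenchel--Young inequality $E_Q(-X)-\alpha(Q)\leq\rho(X)$ for every $Q\in L^1(c)^*$, whence $\sup_{Q\in\overline{\cal Q}}\Phi_X(Q)\leq\rho(X)$; conversely ${\cal Q}\subseteq\overline{\cal Q}$ and $\sup_{Q\in{\cal Q}}\Phi_X(Q)=\rho(X)$ by hypothesis, so the two suprema coincide. Thus $\Phi_X(Q_X)=\rho(X)$, which is (\ref{eqatt}).

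The step I expect to be most delicate is the identification of the supremum in the second bullet: passing from ${\cal Q}$ to its weak* closure could a priori enlarge the supremum, and one must verify that it does not. This is precisely what the bound $\Phi_X\leq\rho(X)$ guarantees, and that bound holds on all of $L^1(c)^*$ only because $\alpha$ is the convex conjugate of $\rho$; the upper semicontinuity then upgrades the resulting supremum to an attained maximum.
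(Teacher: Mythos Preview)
Your proof is correct. The first bullet is handled essentially as in the paper, via Lemma \ref{lemma4.1} and Propositions \ref{propcm} and \ref{lemmatop}. For the second bullet your argument differs from the paper's: you invoke directly that $\alpha$, being a supremum of weak* continuous affine functions, is weak* lower semicontinuous, so that $\Phi_X$ is upper semicontinuous on the compact set $\overline{\cal Q}$ and hence attains its supremum; the Fenchel--Young bound then pins this supremum to $\rho(X)$. The paper instead carries out the sequential argument by hand: it picks $Q_n\in{\cal Q}$ with $E_{Q_n}(-X)-\alpha(Q_n)\to\rho(X)$, uses metrizability to extract a weak* convergent subsequence $Q_{\phi(n)}\to\tilde Q$, and then verifies from the definition of $\alpha$ that $\alpha(\tilde Q)\le E_{\tilde Q}(-X)-\rho(X)$. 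Your route is shorter and does not actually require the metrizability established in the first bullet (upper semicontinuity on a compact set suffices), whereas the paper's explicit extraction makes visible exactly how the limit point $\tilde Q$ inherits the bound on $\alpha$. Both rest on the same two ingredients you correctly isolate: lower semicontinuity of $\alpha$ and the inequality $E_Q(-X)-\alpha(Q)\le\rho(X)$ valid for every $Q$.
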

\begin{proof}
\begin{itemize}
\item
From Lemma \ref{lemma4.1}, $\overline {\cal Q}$ is contained in a closed ball of $L^1(c)^*$ and is compact both for the weak and the weak* topology.  From Proposition \ref{propcm} it is metrizable compact.
\item Let $X \in L^1(c)$.
Let $Q_n$ be a sequence of elements in ${\cal Q}$ such that  for every $n$, 
\begin{equation}
\rho(X) - \frac{1}{n} < E_{Q_n}(-X) - \alpha(Q_n) \leq \rho(X)
\label{eq1.0}
\end{equation}
As $\overline {\cal Q}$ is metrizable compact for the weak* topology, there is a subsequence $Q_{\phi(n)}$ converging to $\tilde Q \in \overline{\cal Q}$, satisfying the inequality 
\begin{equation}
E_{\tilde Q}(-X)-\frac{1}{n} < E_{Q_{\phi(n)}}(-X) < E_{\tilde Q}(-X)+ \frac{1}{n}
\label{eq1.1}
\end{equation} 
From inequality (\ref{eq1.0}) applied with $Q_{\phi(n)}$, inequality  (\ref{eq1.1}) and the inequality $\phi(n) \geq n$, it follows that 

\begin{equation}
E_{\tilde Q} (-X)-\rho(X)- \frac{1}{n} < \alpha(Q_{\phi(n)})
 < E_{\tilde Q} (-X)-\rho(X)+ \frac{2}{n}
\label{eq1.2}
\end{equation}
Let $Y \in L^1(c)$.  Let $\epsilon>0$. There is $N(Y)$ such that for every $n > N(Y)$, $E_{\tilde Q}(-Y) < E_{Q_{\phi(n)}}(-Y)+\epsilon$. $N(Y)$ can be chosen such that $N(Y) \geq \frac{1}{\epsilon}$
Then  for $n \geq N(Y)$,

\begin{eqnarray}
E_{\tilde Q}(-Y)- \rho(Y)  & \leq &  \alpha(Q_{\phi(n)})+ \epsilon \nonumber \\
                 &  \leq & E_{\tilde Q}(-X)- \rho(X) + \frac{2}{n} + \epsilon 
\nonumber \\
                 &  \leq &  E_{\tilde Q}(-X)- \rho(X) + 3 \epsilon 
\label{eq1.4}
\end{eqnarray}
 As the inequality is satisfied for every $Y$ and every $\epsilon>0$, it follows that 
$$\alpha(\tilde Q)= \sup_{Y \in L^1(c)} (E_{\tilde Q}(-Y)- \rho(Y)) \leq E_{\tilde Q}(-X)- \rho(X)$$
And thus $$\rho(X)=  E_{\tilde Q}(-X)-\alpha(\tilde Q)$$
\end{itemize}
\end{proof} 
\begin{proposition}
Let $\rho$ be a normalized convex risk measure  on $L^1(c)$ majorized by a sublinear risk measure. There is  a countable set $\{R_n, \; n \in \N\}$ of probability measures  belonging to $L^1(c)^*$, which   is  relatively compact for the weak* topology of $L^1(c)^*$ and also for the weak topology and such that 
\begin{equation}
\forall X\in L^1(c),\ \rho\left( X\right)=\sup_{n \in \N}(E_{R_n}[-X]-\alpha( R_n))
\label{eqmag3}
\end{equation}
where 
\begin{equation}
\alpha\left(R\right)=\sup_{X\in L^{1}(c)}\left( E_{R}[-X]-\rho\left( X\right)\right)
\label{eqma2}
\end{equation}
\label{proprepnum}
\end{proposition}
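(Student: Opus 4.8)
The plan is to combine the attainment result of Theorem \ref{thmmax} with the separability of $L^1(c)$. Since $\rho$ is normalized and majorized by a sublinear risk measure, Proposition \ref{prop6} yields a representation $\rho(X)=\sup_{Q\in{\cal Q}}(E_Q(-X)-\alpha(Q))$ over a set ${\cal Q}$ of probability measures relatively compact for the weak* topology, where $\alpha$ is given by (\ref{eqma2}). By Lemma \ref{lemma4.1} and Proposition \ref{propcm}, the weak* closure $\overline{\cal Q}$ is contained in a closed ball of radius $K$ of $L^1(c)^*$ and is metrizable compact for both the weak* and the weak topologies; it still consists of probability measures in $L^1(c)^*$, because $Q\mapsto E_Q(1)$ is weak* continuous and nonnegativity is preserved under weak* limits. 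I would first record two elementary facts. First, directly from the definition (\ref{eqma2}) of $\alpha$, for every $Q\in L^1(c)^*$ and every $X$ one has $E_Q(-X)-\alpha(Q)\leq\rho(X)$, so that the supremum over any subset of $\overline{\cal Q}$ never exceeds $\rho(X)$. Second, since $\|E_Q\|\leq K$ on $\overline{\cal Q}$, each map $X\mapsto E_Q(-X)-\alpha(Q)$ is $K$-Lipschitz for the norm $c$ uniformly in $Q$, and hence $\rho$ itself is $K$-Lipschitz.

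Next I would establish that $L^1(c)$ is separable. Following the compactification argument used in the proof of Proposition \ref{thmreg}, I embed the separable metrizable space $\Omega$ into a compact metrizable compactification $Z$ (for instance its closure in the Hilbert cube) and set $c'(g)=c(g_{|\Omega})$ on $Z$. As $c$ is a Prokhorov capacity, $c'(1_{Z-\Omega})=0$ and $L^1(c)=L^1(c')$. Since $Z$ is compact metrizable, Proposition \ref{propcomp} shows that $L^1(c')$, hence $L^1(c)$, is separable. Let $\{X_n,\ n\in\N\}$ be a countable dense subset of $L^1(c)$.

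Now I would build the countable family. For each $n$, Theorem \ref{thmmax} provides a probability measure $R_n\in\overline{\cal Q}$ with $\rho(X_n)=E_{R_n}(-X_n)-\alpha(R_n)$. For an arbitrary $X\in L^1(c)$, choose a subsequence $X_{n_j}\to X$ in norm. Combining the $K$-Lipschitz property of $\rho$, the attainment at $X_{n_j}$, and the uniform $K$-Lipschitz bound for the functionals $E_{R_{n_j}}$, one gets the chain $\rho(X)\leq\rho(X_{n_j})+K c(X-X_{n_j})=E_{R_{n_j}}(-X_{n_j})-\alpha(R_{n_j})+K c(X-X_{n_j})\leq E_{R_{n_j}}(-X)-\alpha(R_{n_j})+2K c(X-X_{n_j})\leq\sup_{m}(E_{R_m}(-X)-\alpha(R_m))+2K c(X-X_{n_j})$. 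Letting $j\to\infty$ gives $\rho(X)\leq\sup_{m}(E_{R_m}(-X)-\alpha(R_m))$, while fact one above gives the reverse inequality; this proves (\ref{eqmag3}). Finally, $\{R_n\}\subset\overline{\cal Q}$, which is compact for both the weak* and the weak topology, so the closure of $\{R_n\}$ is a closed subset of a compact set and $\{R_n\}$ is relatively compact for both topologies, as required.

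The main obstacle is precisely the separability step together with the reason one cannot shortcut it. Since $\overline{\cal Q}$ is metrizable it is separable as a topological space, so a naive attempt would take a countable weak*-dense subset of $\overline{\cal Q}$. This fails, however, because $Q\mapsto E_Q(-X)-\alpha(Q)$ is only \emph{upper} semicontinuous (as $\alpha$ is a supremum of continuous affine functionals, hence lower semicontinuous), and density cannot recover the supremum of an upper semicontinuous function. The resolution is to transfer the countability to the domain: one exploits density in $L^1(c)$ and the attainment of the maximum from Theorem \ref{thmmax}, which together with the uniform Lipschitz estimates forces the countable family $\{R_n\}$ to represent $\rho$ at every point.
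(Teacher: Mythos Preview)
Your proof is correct and takes a genuinely different route from the paper's. Both arguments start from Proposition~\ref{prop6} and Theorem~\ref{thmmax}, but then diverge on where the countability comes from. The paper stays on the dual side: it uses that $mK_+$ is metrizable compact for the weak* topology, fixes a countable dense set $(Q_n)$ in $mK_+$, and then, exploiting the lower semicontinuity of $\alpha$, chooses for each closed ball $\overline{B(Q_n,2^{-k})}$ a minimizer $R_{n,k}$ of $\alpha$ on that ball. Given the attaining measure $Q_X$ from Theorem~\ref{thmmax}, one picks a small enough ball containing $Q_X$; on it $E_{\,\cdot}(-X)$ is close to $E_{Q_X}(-X)$ by weak* continuity, while $\alpha(R_{n,k})\le\alpha(Q_X)$ by the minimizing choice, which yields $E_{R_{n,k}}(-X)-\alpha(R_{n,k})>\rho(X)-\epsilon$. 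Thus the very ``naive'' approach you warn against is exactly what the paper makes work, by replacing an arbitrary dense family with these penalty-minimizers on shrinking balls.

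Your approach instead transfers countability to the primal side: you establish separability of $L^1(c)$ via the metrizable compactification and Proposition~\ref{propcomp}, then pick one attaining $R_n$ per element of a countable dense set $\{X_n\}$ and close up with the uniform $K$-Lipschitz estimate. This is arguably more elementary once separability is in hand, and it avoids any delicate handling of the penalty's semicontinuity. The paper's route, by contrast, never invokes separability of $L^1(c)$ and works entirely with the metrizability of $mK_+$ together with the lower semicontinuity of $\alpha$; it is more self-contained within the dual-space framework developed in Section~\ref{sectop}. Both are valid, and your final paragraph nicely isolates the semicontinuity obstruction that each method circumvents in its own way.
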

\begin{proof}
From Proposition \ref{prop6}, there is a set ${\cal Q}$ of probability measures in $L^1(c)^*$, relatively compact for the weak* topology such that equation (\ref{eqmag2}) is satisfied. From Lemma \ref{lemma4.1}, ${\cal Q}$ is contained in $mK_+$,  the non negative part of a certain  closed ball  of $L^1(c)^*$.
 From Proposition \ref{propcomp},  $mK_+$,   is metrizable compact for the weak* topology. There is thus a countable dense set $(Q_n)_{n \in \N}$ in $mK_+$. Denote $d$ a distance on $mK_+$ defining the weak* topology. For every $Q \in mK_+$, let $\overline {B(Q,r)}= \{R \in mK_+\; |d(Q,R) \leq r\}$. The set $\overline {B(Q,r)}$ is compact for the weak* topology. The penalty  $\alpha$ defined on $L^1(c)^*$ by equation \ref{eqma2} is l.s.c. thus for every $n \in \N$ and $k \in \N^*$ there is $R_{n,k}$ in $\overline {B(Q_n,\frac{1}{2^k})}$ such that $\alpha(R_{n,k})= \min\{\alpha(Q),\; Q \in
\overline {B(Q_n,\frac{1}{2^k})}\}$.\\
Let $X \in L^1(c)$.  From Theorem  \ref{thmmax}, there is $Q_X \in {\cal Q}$ such that 
$\rho(X)=E_{Q_X}(-X) -\alpha(Q_X)$. For all  $ \epsilon>0$,  there is $\eta>0$ such that $\forall Q \in  {B(Q_X,\eta)}$, $|E_{Q_X}(-X)-E_Q(-X)| < \epsilon$. Let $k$ such that $\frac{1}{2^{k-1}}<\eta$. Let $n$ such that $Q_X \in B(Q_n,\frac{1}{2^k})$
then $E_{R_{n,k}}(-X)-\alpha(R_{n,k})>\rho(X)-\epsilon$.
It follows that $\{R_{n,k},\; n \in \N,\; k \in \N^*\}$ is a countable set weakly relatively compact (as it is contained in $mK_+$)  satisfying the required condition. 
\end{proof}

\begin{theorem} 
 Every convex risk measure on $L^1(c)$ can be represented by a countable set of probability measures $\{R_n, \; n \in \N\}$ belonging  to $L^1(c)^*$.
\begin{equation}
\forall X\in L^1(c),\ \rho\left( X\right)=\sup_{n \in \N}(E_{R_n}(-X)-\alpha( R_n))
\label{eqmag4}
\end{equation}
where $\alpha(R)$ is given by equation (\ref{eqma2}).
\label{thmrepnum}
\end{theorem}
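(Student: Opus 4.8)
The plan is to reduce the statement to Proposition \ref{proprepnum}, which already treats normalized convex risk measures that are majorized by a sublinear one, by slicing the penalty function $\alpha$ of eqn (\ref{eqma}) along its level sets. Throughout I use that, by Theorem \ref{thmrep1}, $\rho$ is continuous and admits the representation (\ref{eqmag}) with $\alpha$ given by (\ref{eqma}), and that the effective measures lie in $\mathcal{P}'$.

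First I would normalize. Given an arbitrary convex risk measure $\rho$, set $\bar\rho(X)=\rho(X)-\rho(0)$; translation invariance, convexity and monotonicity are preserved and $\bar\rho(0)=0$, so $\bar\rho$ is a normalized convex risk measure. Its penalty satisfies $\bar\alpha(Q)=\alpha(Q)+\rho(0)$, so any countable representation of $\bar\rho$ transfers verbatim to $\rho$ with the genuine penalty (\ref{eqma2}). Hence one may assume $\rho$ normalized, in which case $\alpha\geq 0$ and $\inf_Q\alpha(Q)=\rho(0)=0$.

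Next, for $k\in\N^*$ I would introduce $\mathcal{Q}_k=\{Q\in\mathcal{P}' : \alpha(Q)\leq k\}$ and the truncated functionals $\rho_k(X)=\sup_{Q\in\mathcal{Q}_k}(E_Q(-X)-\alpha(Q))$. Since $\{Q : \alpha(Q)<\infty\}=\bigcup_k\mathcal{Q}_k$ and measures of infinite penalty do not contribute to (\ref{eqmag}), one has $\rho_k\uparrow\rho$ pointwise. The decisive point is that each $\mathcal{Q}_k$ is norm-bounded in $L^1(c)^*$: as $\rho$ is continuous it is bounded above by some $M$ on the unit ball, and for $Q\in\mathcal{Q}_k$ and $c(X)\leq 1$ the inequality $E_Q(-X)\leq \rho(X)+\alpha(Q)\leq M+k$, applied to $X$ and to $-X$, forces $\|E_Q\|\leq M+k$. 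Using $\alpha\geq 0$, this makes $\rho_k$ dominated by the finite sublinear functional $X\mapsto\sup_{Q\in\mathcal{Q}_k}E_Q(-X)$, and one checks routinely that $\rho_k$ is a convex risk measure $L^1(c)\to\R$ which is normalized (because $\inf_Q\alpha=0$ gives $\inf_{\mathcal{Q}_k}\alpha=0$ for $k\geq 1$). Proposition \ref{proprepnum} therefore applies to each $\rho_k$ and yields a countable family $\{R^k_n : n\in\N\}$ of probability measures in $L^1(c)^*$ with $\rho_k(X)=\sup_n(E_{R^k_n}(-X)-\alpha_k(R^k_n))$, where $\alpha_k$ is the penalty of $\rho_k$.

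Finally I would form the countable union $\{R_n\}=\{R^k_n : k,n\}$ and reconcile the penalties. From $\rho_k\leq\rho$ one gets $\alpha\leq\alpha_k$ pointwise, so each $R^k_n$ satisfies $\alpha(R^k_n)\leq\alpha_k(R^k_n)<\infty$ and $E_{R^k_n}(-X)-\alpha(R^k_n)\geq E_{R^k_n}(-X)-\alpha_k(R^k_n)$; taking suprema gives $\sup_{k,n}(E_{R^k_n}(-X)-\alpha(R^k_n))\geq\sup_k\rho_k(X)=\rho(X)$, while the reverse inequality is immediate from (\ref{eqmag}) since every $R^k_n\in\mathcal{P}'$. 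This establishes (\ref{eqmag4}) with the penalty (\ref{eqma2}). I expect the main obstacle to be the norm-boundedness of the level sets $\mathcal{Q}_k$ — extracting the bound $\|E_Q\|\leq M+k$ from the penalty bound via the continuity of $\rho$ — together with the care needed in the last step to verify that replacing the truncated penalties $\alpha_k$ by the true penalty $\alpha$ leaves the supremum unchanged.
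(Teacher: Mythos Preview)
Your argument is correct. The main difference from the paper is the way the problem is sliced into pieces where Proposition~\ref{proprepnum} (or its proof) applies: you cut along the level sets $\mathcal{Q}_k=\{\alpha\leq k\}$ of the penalty, prove each $\mathcal{Q}_k$ is norm-bounded via the continuity of $\rho$, apply Proposition~\ref{proprepnum} to the normalized risk measure $\rho_k$ as a black box, and then reconcile the auxiliary penalties $\alpha_k$ with the true penalty $\alpha$ using $\rho_k\leq\rho\Rightarrow\alpha\leq\alpha_k$. The paper instead cuts along dual balls, setting $\rho_m(X)=\sup_{Q\in mK_+}(E_Q(-X)-\alpha(Q))$ with the \emph{original} $\alpha$ throughout, and observes that the proof (not the statement) of Proposition~\ref{proprepnum} only uses that $mK_+$ is weak* metrizable compact and that $\alpha$ is l.s.c.; normalization of $\rho_m$ is irrelevant and no penalty reconciliation is needed since $\alpha$ never changes. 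Your route is more self-contained and treats Proposition~\ref{proprepnum} as a true lemma, at the cost of the extra normalization and penalty-comparison steps; the paper's route is shorter but requires the reader to revisit the proposition's proof.
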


\begin{proof}  From Theorem \ref{thmrep1}, $\rho$ has a dual representation given by equation (\ref{eqmag}). Denote then $\rho_m(X)= \sup_{Q \in mK_+}(E_Q(-X)-\alpha(Q))$. Even if $\rho_m$ is not necessarily normalized, all the arguments of the proof of Proposition \ref{proprepnum} apply as $mK_+$ is metrizable compact for the weak* topology and $\alpha$ is l.s.c..  Thus $\rho_m$ has a representation with a countable set of probability measures. As $\rho= \sup_{m \in \N}\rho_m$, this gives the result.
\end{proof}

\section{Equivalence class of probability measures associated to a non dominated set of probability measures}
\label{sec2}

Let $\Omega$ be a  metrizable and separable space. In this section we study a capacity defined from  a weakly relatively compact set of probability measures ${\cal P}$ possibly non dominated.

\begin{definition}
Let  ${\cal P}$ be a weakly relatively compact set of probability measures on $(\Omega,{\cal B}(\Omega))$. Let $1 \leq p < \infty$. The capacity $c_{p,{\cal P}}$ is defined on ${\cal C}_b(\Omega)$ by 
\begin{equation}
c_{p,{\cal P}}(f)=\sup_{P \in {\cal P}} E_P(|f|^p)^{\frac{1}{p}}
\label{eqcap0}
\end{equation} 
and extended to every function on $\Omega$ as explained in Section \ref{sec1.1}, equations (\ref{eqsci}) and (\ref{eqsci2}).

\label{defcap}
\end{definition}
Notice that as ${\cal P}$ is a weakly relatively compact set of probability measures,  $c_{p,{\cal P}}$ is a capacity (see Proposition I.3 of \cite{K} or the Appendix, Section \ref{secap}). The Banach space associated to the capacity $c_{p,{\cal P}}$ is denoted $L^1(c_{p,{\cal P}})$.  When there is no ambiguity on the set ${\cal P}$ we simply write $c_p$ for $c_{p,{\cal P}}$.\\
When ${\cal P}=\{\mu_0\}$, $L^1(c_{p,\{\mu_0\}})= L^1(\Omega,{\cal B}(\Omega),\mu_0)$.  A non negative   measure $\mu$ on $(\Omega,{\cal B}(\Omega))$ belongs to the (usual)  equivalence class of  the probability measure $\mu_0$ if and only if
$\forall A \in{\cal B}(\Omega), \;\; \mu(A)=0 \;\Longleftrightarrow \; \mu_0(A)=0$\\
 Equivalently, for $\mu$ in the dual of  $L^1(\Omega,{\cal B}(\Omega),\mu_0)$,
$$\mu \sim \mu_0  \;\Longleftrightarrow \; [\forall X \in L^{1}(\Omega, {\cal B}(\Omega),\mu_0)_+,\;\;  X=0 \;\Longleftrightarrow \;\int X d \mu=0 ]$$
We address the following question: When ${\cal P}$ is weakly relatively compact can one associate a probability measure $P$ to $L^1(c_{p,{\cal P}})$ characterizing the  null elements in the cone $L^1(c_{p,{\cal P}})_+$, i.e. such that $\forall X \in L^1(c_{p,{\cal P}})_+,\;\;X=0 \;\Longleftrightarrow \;E_P(X)=\int X d P=0$ ? If yes, can one define a natural equivalence relation so that one gets  a unique equivalence class of such probability measures?
Notice that when  ${\cal P}$ is not finite,  characteristic functions of  Borelian sets are not all in  $L^1(c_{p,{\cal P}})$. 

\subsection{Properties of the capacity}

\begin{lemma}
For all $X$ in $L^1(c_{p,{\cal P}})$, $c_{p,{\cal P}}(X)= \sup_{Q \in {\cal P}}E_Q(|X|^p)^{\frac{1}{p}}$.
\label{lemmacap}
\end{lemma}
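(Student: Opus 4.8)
The plan is to introduce the auxiliary functional $J(g)=\sup_{Q\in{\cal P}}E_Q(|g|^p)^{1/p}$, which by (\ref{eqcap0}) coincides with $c_{p,{\cal P}}$ on ${\cal C}_b(\Omega)$, and to show that this identity propagates from ${\cal C}_b(\Omega)$ to all of $L^1(c_{p,{\cal P}})$. The whole argument rests on first establishing the single global inequality $J(g)\le c_{p,{\cal P}}(g)$ for every function $g$ with $c_{p,{\cal P}}(g)<\infty$; the reverse estimate is then obtained only on $L^1(c_{p,{\cal P}})$ by a density and continuity argument, using that both functionals agree on ${\cal C}_b(\Omega)$.

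First I would prove $J\le c_{p,{\cal P}}$ by unwinding the extension formulas (\ref{eqsci}) and (\ref{eqsci2}) and fixing a single $Q\in{\cal P}$. For a nonnegative lower semicontinuous $f$, every $\phi\in{\cal C}_b(\Omega)$ with $0\le\phi\le f$ satisfies $E_Q(\phi^p)^{1/p}\le c_{p,{\cal P}}(\phi)$; approximating $f$ from below by an increasing sequence of nonnegative continuous functions (as in the first part of Proposition \ref{propopenclosed}) and invoking monotone convergence gives $E_Q(f^p)^{1/p}\le c_{p,{\cal P}}(f)$. For an arbitrary $g$, any l.s.c. $f\ge|g|$ then yields $E_Q(|g|^p)^{1/p}\le E_Q(f^p)^{1/p}\le c_{p,{\cal P}}(f)$, and taking the infimum over such $f$ in (\ref{eqsci2}) gives $E_Q(|g|^p)^{1/p}\le c_{p,{\cal P}}(g)$. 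Since $Q\in{\cal P}$ is arbitrary, $J(g)\le c_{p,{\cal P}}(g)$. In particular every $c_{p,{\cal P}}$-null function is $Q$-null for each $Q\in{\cal P}$, so $J$ is well defined on classes of $L^1(c_{p,{\cal P}})$.

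Next, for $X\in L^1(c_{p,{\cal P}})$ choose $f_n\in{\cal C}_b(\Omega)$ with $c_{p,{\cal P}}(X-f_n)\to0$. Since $c_{p,{\cal P}}$ is a seminorm, $c_{p,{\cal P}}(f_n)\to c_{p,{\cal P}}(X)$. On the other side, Minkowski's inequality in each $L^p(Q)$ gives $|E_Q(|f_n|^p)^{1/p}-E_Q(|X|^p)^{1/p}|\le E_Q(|f_n-X|^p)^{1/p}\le J(f_n-X)\le c_{p,{\cal P}}(f_n-X)$, with a bound independent of $Q$; taking the supremum over $Q$ shows $|J(f_n)-J(X)|\le c_{p,{\cal P}}(f_n-X)\to0$, so $J(f_n)\to J(X)$. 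Because $f_n\in{\cal C}_b(\Omega)$ we have $c_{p,{\cal P}}(f_n)=J(f_n)$ by (\ref{eqcap0}), and passing to the limit yields $c_{p,{\cal P}}(X)=J(X)$, which is the claim.

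I expect the main obstacle to be the first step, namely the careful justification that the lower- and upper-regularizations in (\ref{eqsci}) and (\ref{eqsci2}) are compatible with each fixed integral $E_Q(\,\cdot\,^p)^{1/p}$: one must ensure that the monotone approximation of a nonnegative l.s.c. function by continuous functions together with monotone convergence exactly recovers $E_Q(f^p)$, and that finiteness of $c_{p,{\cal P}}(X)$ guarantees $E_Q(|X|^p)<\infty$, so that Minkowski's inequality may legitimately be applied in the limiting step.
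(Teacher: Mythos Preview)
Your argument is correct and its core (Minkowski in each $L^p(Q)$ plus density of ${\cal C}_b(\Omega)$ in $L^1(c_{p,{\cal P}})$) is exactly the paper's. The only difference is organizational: the paper never unwinds the extension formulas (\ref{eqsci})--(\ref{eqsci2}); it simply writes the Lipschitz bound $|E_Q(|f|^p)^{1/p}-E_Q(|g|^p)^{1/p}|\le c_p(f-g)$ for $f,g\in{\cal C}_b(\Omega)$ and passes to the limit along an approximating sequence to obtain both the finiteness of $E_Q(|X|^p)$ and the inequality $E_Q(|X|^p)^{1/p}\le c_p(X)$ for $X\in L^1(c_p)$. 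Your preliminary step---proving the global inequality $J\le c_{p,{\cal P}}$ via the l.s.c.\ envelope---is a valid alternative that makes explicit why $E_Q(|X|^p)<\infty$ before invoking Minkowski; it essentially anticipates the second item of Proposition~\ref{propmes}, which the paper proves only afterwards. So your route is slightly longer but a touch more self-contained, while the paper's is shorter because it extracts everything from the single Lipschitz estimate on ${\cal C}_b(\Omega)$.
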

\begin{proof}
Denote $c_p=c_{p,{\cal P}}$. For all $f,g$ in  ${\cal C}_b(\Omega)$, for all $Q \in {\cal P}$,
$$|E_Q(|f|^p)^{\frac{1}{p}}-E_Q(|g|^p)^{\frac{1}{p}}| \leq E_Q(|f-g|^p)^{\frac{1}{p}} \leq c_p(|f-g|)$$
As ${\cal C}_b(\Omega)$ is dense in $L^1(c_p)$ for the $c_p$ norm it follows that 
 for every $X \in L^1(c_p)$, $g \in {\cal C}_b(\Omega)$, and $Q \in {\cal P}$,
\begin{equation}
|E_Q(|X|^p)^{\frac{1}{p}}-E_Q(|g|^p)^{\frac{1}{p}}| \leq c_p(|X-g|)
\label{eqcapX}
\end{equation}
From (\ref{eqcapX}) it follows that 
\begin{equation}
E_Q(|X|^p)^{\frac{1}{p}} \leq c_p(X) \;\forall Q \in {\cal P}
\label{eqcapX1}
\end{equation}
For every $X \in L^1(c_p)$, for every $\epsilon>0$ there is $g \in {\cal C}_b(\Omega)$ such that 
\begin{equation}
c_p(X-g) \leq \epsilon
\label{eq0007}
\end{equation} 
 From Definition \ref{eqcap0}, there is $Q_0 \in {\cal P}$ such that 
\begin{equation}
c_p(g) \leq E_{Q_0}(|g|^p)^{\frac{1}{p}}  + \epsilon
\label{eq0008}
\end{equation} 
 As $c_p(X) \leq c_p(g) +\epsilon$ it follows from equations (\ref{eqcapX}) (\ref{eq0007}) and(\ref{eq0008}) that $c_p(X) \leq  \sup_{Q \in {\cal P}}E_Q(|X|^p)^{\frac{1}{p}}$.
The result follows from (\ref{eqcapX1}).
\end{proof} 

\begin{theorem}
Assume that $\Omega$ is a  Polish space.  There is a countable subset ${\cal Q}$ of ${\cal P}$,  ${\cal Q}=\{P_n, \;n \in \N\}$,  such that for every $X \in { L}^1(c_{p,{\cal P}})$, for every $p \in [1,\infty[$,
\begin{equation}
c_{p,{\cal P}}(X)=\sup_{n \in \N} (E_{P_n}(|X|^p))^{\frac{1}{p}}
\label{eqnum0}
\end{equation} The capacities $c_{p,{\cal P}}$ and $c_{p,{\cal Q}}$ defined on ${\cal C}_b(\Omega)$ by equation  (\ref{eqcap0}) and extended to real functions using formulas (\ref{eqsci}) and (\ref{eqsci2}) are equal. The associated Banach spaces   are equal: $L^1(c_{p,{\cal P}})= L^1(c_{p,{\cal Q}})$. 
\label{Thm1}
\end{theorem}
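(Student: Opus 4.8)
The plan is to avoid any $p$-dependent construction and instead produce a single countable set ${\cal Q}$ that is weakly dense in the weak closure $\overline{\cal P}$ of ${\cal P}$. The whole point is that weak density of ${\cal Q}$ together with the weak continuity of the map $P \mapsto E_P(|f|^p)^{1/p}$ yields the equality of the capacities for every exponent $p$ at once, so that no diagonal argument inside the individual Banach spaces $L^1(c_{p,{\cal P}})$ is needed.

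First I would pass to the weak closure. Since ${\cal P}$ is weakly relatively compact, $\overline{\cal P}$ is weakly compact and still consists of probability measures. Because $\Omega$ is Polish, the set of probability measures on $(\Omega,{\cal B}(\Omega))$ is metrizable and separable for the weak topology (the metrizability result already invoked for Proposition \ref{propcm}), so $\overline{\cal P}$ is compact metrizable and in particular separable. As ${\cal P}$ is a subset of this separable metric space it is itself separable; choosing a countable dense subset ${\cal Q}=\{P_n\}$ of ${\cal P}$ and using that ${\cal P}$ is dense in $\overline{\cal P}$, I obtain that ${\cal Q}$ is dense in $\overline{\cal P}$. This ${\cal Q}$ is countable, contained in ${\cal P}$, and crucially does not depend on $p$.

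Next I would prove equality of the capacities on ${\cal C}_b(\Omega)$. Fix $f \in {\cal C}_b(\Omega)$ and $p \in [1,\infty)$. Since $|f|^p \in {\cal C}_b(\Omega)$, the map $P \mapsto E_P(|f|^p)^{1/p}$ is continuous for the weak topology, and the supremum of a continuous function over a set coincides with its supremum over the closure and over any dense subset, so
$$\sup_{n \in \N} E_{P_n}(|f|^p)^{1/p} = \sup_{P \in \overline{\cal P}} E_P(|f|^p)^{1/p} = \sup_{P \in {\cal P}} E_P(|f|^p)^{1/p},$$
that is $c_{p,{\cal Q}}(f)=c_{p,{\cal P}}(f)$ for every $f \in {\cal C}_b(\Omega)$ and every $p$. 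Because the extension of a capacity to all functions on $\Omega$ through (\ref{eqsci}) and (\ref{eqsci2}) is entirely determined by its values on ${\cal C}_b(\Omega)$, the two extended capacities coincide, whence $\{g : c_{p,{\cal P}}(g)<\infty\}=\{g : c_{p,{\cal Q}}(g)<\infty\}$ and, taking the common closure of ${\cal C}_b(\Omega)$, $L^1(c_{p,{\cal P}})=L^1(c_{p,{\cal Q}})$.

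Finally I would upgrade the identity from ${\cal C}_b(\Omega)$ to all of $L^1(c_{p,{\cal P}})$. A subset of a weakly relatively compact set is weakly relatively compact, so ${\cal Q}$ is weakly relatively compact and Lemma \ref{lemmacap} applies to $c_{p,{\cal Q}}$: for every $X \in L^1(c_{p,{\cal Q}})=L^1(c_{p,{\cal P}})$,
$$c_{p,{\cal P}}(X) = c_{p,{\cal Q}}(X) = \sup_{Q \in {\cal Q}} E_Q(|X|^p)^{1/p} = \sup_{n \in \N} E_{P_n}(|X|^p)^{1/p},$$
which is exactly (\ref{eqnum0}). The main obstacle is the requirement that a single countable ${\cal Q}$ serve all exponents $p$ simultaneously; this is precisely what forces the construction to be carried out in the weak topology on $\overline{\cal P}$, whose metrizability and the $p$-independent weak continuity of $P \mapsto E_P(|f|^p)^{1/p}$ make the choice of ${\cal Q}$ uniform in $p$, rather than building a $p$-dependent dense set inside each space $L^1(c_{p,{\cal P}})$.
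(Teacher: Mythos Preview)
Your proof is correct and follows essentially the same route as the paper's. The only cosmetic difference is that the paper obtains the countable dense subset by embedding ${\cal P}$ in $K_+$, the non-negative part of the unit ball of $L^1(c_{1,{\cal P}})^*$, and invoking Corollary~\ref{corollaryPolish} for metrizable compactness in the weak* topology, whereas you appeal directly to the metrizability and separability of ${\cal M}_+(\Omega)$ for the weak topology; by Proposition~\ref{lemmatop} these two topologies coincide on $K_+$, so the constructions are the same. Your final step invoking Lemma~\ref{lemmacap} for ${\cal Q}$ to pass from ${\cal C}_b(\Omega)$ to all of $L^1(c_{p,{\cal P}})$ makes explicit a point the paper leaves implicit.
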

\begin{proof}
From the previous Lemma, applied with $p=1$,  it follows that  the set ${\cal P}$ is contained in ${K}_+$, the non negative part of the unit ball of the dual of $L^1(c_{1,{\cal P}})$. $\Omega$ is a Polish space, so from Corollary \ref{corollaryPolish}, ${K}_+$ is metrizable compact  for the   weak* topology. Thus $\overline {\cal P}$, the closure of ${\cal P}$ for the weak* topology, is metrizable compact.  There is then in $ {\cal P}$ a countable set $(P_n)_{n \in \N}$ dense in $\overline {\cal P}$ for the weak* 
topology.
It follows that for every $X \in L^1(c_{1,{\cal P}})$, $\sup_{Q \in {\cal P}}E_Q(|X|)=\sup_{n \in \N}E_{P_n}(|X|)$. The equation (\ref{eqnum0}) follows for every $p \geq 1$ for every $X \in {\cal C}_b(\Omega)$.  \\
The two capacities  $c_{p, {\cal P}}(f)=\sup_{P \in {\cal P}}E_P(|f|^p)^{\frac{1}{p}}$   and $c_{p, {\cal Q}}=\sup_{Q \in {\cal Q}}E_Q(|f|^p)^{\frac{1}{p}}$ coincide on ${\cal C}_b(\Omega)$. By  definition of the  extension of a capacity to the set of all functions on $\Omega$, these extensions are the same. Therefore $L^1(c_{p,{\cal P}})= L^1(c_{p,{\cal Q}})$.
\end{proof} 
In the following proposition we study possible  extensions of the equation (\ref{eqcap0}).
$$ $$
\begin{proposition}
Let $c_p=c_{p,{\cal P}}$.
\begin{itemize}
\item  For every non negative bounded lower semi-continuous map $g$,  
\begin{equation}
c_p(g)= \sup_{Q \in {\cal P}}E_Q(g^p)^{\frac{1}{p}}
\label{eqop}
\end{equation}
\item For every Borelian map $f$, 
\begin{equation}
\sup_{Q \in {\cal P}}E_Q(|f|^p)^{\frac{1}{p}} \leq c_p(f)
\label{eqbor}
\end{equation}
\end{itemize} 
\label{propmes}
\end{proposition}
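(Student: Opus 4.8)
The plan is to prove the two assertions separately, deducing the second from the first by truncation. Throughout write $c_p=c_{p,{\cal P}}$ and recall that on ${\cal C}_b(\Omega)$ the capacity is given by $c_p(\phi)=\sup_{Q\in{\cal P}}E_Q(|\phi|^p)^{1/p}$.

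For the first assertion, let $g$ be non negative, bounded and lower semi-continuous. The inequality $c_p(g)\le\sup_{Q\in{\cal P}}E_Q(g^p)^{1/p}$ is immediate from the defining formula (\ref{eqsci}): for any $\phi\in{\cal C}_b(\Omega)$ with $0\le\phi\le g$ one has $E_Q(\phi^p)\le E_Q(g^p)$ for every $Q$, hence $c_p(\phi)\le\sup_{Q\in{\cal P}}E_Q(g^p)^{1/p}$, and taking the supremum over such $\phi$ gives the claim. For the reverse inequality I would approximate $g$ from below by continuous functions: since $g$ is non negative, bounded and lower semi-continuous and $\Omega$ is metric, the inf-convolutions $\phi_n(x)=\inf_{y}\bigl(g(y)+n\,d(x,y)\bigr)$ form an increasing sequence of bounded non negative Lipschitz functions with $\phi_n\le g$ and $\phi_n\uparrow g$ pointwise. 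Fixing $Q\in{\cal P}$, the monotone convergence theorem gives $E_Q(\phi_n^p)\uparrow E_Q(g^p)$; since $E_Q(\phi_n^p)^{1/p}\le c_p(\phi_n)\le c_p(g)$ (the last bound because $0\le\phi_n\le g$), letting $n\to\infty$ yields $E_Q(g^p)^{1/p}\le c_p(g)$, and taking the supremum over $Q$ finishes the equality (\ref{eqop}).

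For the second assertion, let $f$ be Borel; I may assume $c_p(f)<\infty$, so by (\ref{eqsci2}) it suffices to show $E_Q(|f|^p)^{1/p}\le c_p(h)$ for every $Q\in{\cal P}$ and every lower semi-continuous $h\ge|f|$ with $c_p(h)<\infty$ (such an $h$ is automatically non negative). The point is that $h$ need not be bounded, so I would truncate: for each $N$ the function $h\wedge N$ is bounded, non negative and still lower semi-continuous, because $\{h\wedge N>a\}=\{h>a\}\cap\{N>a\}$ is open. Applying the first assertion to $h\wedge N$ gives $E_Q((h\wedge N)^p)^{1/p}\le c_p(h\wedge N)\le c_p(h)$ by monotonicity of $c_p$; letting $N\to\infty$ and using monotone convergence gives $E_Q(h^p)^{1/p}\le c_p(h)$. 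Since $|f|\le h$ we obtain $E_Q(|f|^p)^{1/p}\le c_p(h)$; taking the infimum over admissible $h$ and then the supremum over $Q$ yields (\ref{eqbor}).

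The routine parts are the monotonicity and the two convergence arguments; the only genuine technical points are the existence of the increasing sequence of continuous functions below $g$ (handled by the explicit inf-convolution, which relies on $\Omega$ being metric) and the preservation of lower semi-continuity under truncation by a constant, both of which are elementary. I expect the main subtlety to be the bookkeeping around possibly unbounded $h$ in the second part, which is precisely why the truncation step is inserted before invoking the first assertion.
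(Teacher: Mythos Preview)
Your proof is correct and follows essentially the same strategy as the paper's: approximate the l.s.c.\ function from below by continuous functions and apply monotone convergence for part one, then reduce part two to part one via the defining infimum over l.s.c.\ majorants. Your use of explicit inf-convolutions (in place of the paper's reference to Proposition~\ref{propopenclosed}) and your truncation $h\wedge N$ to handle a possibly unbounded l.s.c.\ majorant are in fact slightly more careful than the paper, which simply asserts the equality from part one for $g$ l.s.c.\ without revisiting the boundedness hypothesis.
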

\begin{proof}
\begin{itemize}
\item The proof of the first part of 
Proposition \ref{propopenclosed} which was given for the characteristic function of an open set applies without any change to every  non negative bounded l.s.c. function $g$. Thus there is an increasing sequence of continuous functions 
$h_n$ with limit $g$ and such that $c_p(g)=\lim c_p(h_n)$. As $g$ is bounded, $c_p(g)$ is finite. Let $\epsilon>0$. There is $n$ such that $c_p(g)-\epsilon \leq c_p(h_n) \leq c_p(g)$. 
By definition of $c_p$ on ${\cal C}_b(\Omega)$, there is $Q_n$  in ${\cal P}$ such that $c_p(h_n)-\epsilon  \leq E_{Q_n}(h_n^p)^{\frac{1}{p}} \leq c_p(h_n)$. Thus 
\begin{equation}
E_{Q_n}(g^p)^{\frac{1}{p}} \geq c_p(g)-2 \epsilon
\label{equsci}
\end{equation}
 On the other hand for all $Q$ in ${\cal P}$, $E_Q(h_n^p)^{\frac{1}{p}} \leq c_p(h_n) \leq c_p(g)$.  From the monotone convergence theorem it follows that 
\begin{equation}
\forall Q \in {\cal P}, \;
E_Q(g^p)^{\frac{1}{p}} \leq c_p(g)
\label{equsci2}
\end{equation}
Thus from equations (\ref{equsci}) and (\ref{equsci2}) we get that 
\begin{equation}
c_p(g)= \sup_{Q \in {\cal P}}E_Q(g^p)^{\frac{1}{p}}
\label{equsci3}
\end{equation}
\item  Let $f$ be a Borelian map.  If $c_p(f)= +\infty$, the result is trivial. Assume that $c_p(f) <\infty$.  
 Let $\epsilon>0$. By definition of  $c_p(f)$, (equation \ref{eqsci2}), there is $g$ l.s.c., $g \geq |f|$ such that $c_p(g)<c_p(f)+\epsilon$. As $g$ is l.s.c., we already know that  
 $\sup_{Q \in {\cal P}} E_Q(|g|^p)^{\frac{1}{p}} = c_p(g)$. As  $f$ is Borel measurable, for all $Q \in {\cal P}$, $E_Q(|f|^p)^{\frac{1}{p}}$ is defined. As $g \geq |f|$ it follows that 
 $E_Q(|f|^p)^{\frac{1}{p}} \leq c_p(f)+\epsilon$.
This inequality is true for every $\epsilon$ and every $Q \in {\cal P}$. This proves the announced result for every $f$ Borel measurable.
 \end{itemize}
\end{proof} 
\begin {remark} 
\begin{itemize}
\item For every open subset $V$ of $\Omega$, $1_V$ is lower semi-continuous, so from Proposition \ref{propmes}, $c_p(1_V)=\sup_{Q \in {\cal P}}Q(V)^{\frac{1}{p}}$. 
\item However there are  Borelian subsets of $\Omega$ for which the equality 
$c_p(1_A)=\sup_{Q \in {\cal P}}Q(A)^{\frac{1}{p}}$ is not satisfied.\\
For example let $\Omega=[0,1]$. Let $x_n \in]0,1[$ be a sequence converging to $0$. Let $A=[0,1]-\{x_n,n \in \N\}$. Let $Q_n = \delta_{x_n}$. Let ${\cal P}=\{Q_n,\; n \in \N\}$. ${\cal P}$ is weakly relatively compact. Let  $f$ l.s.c. such that $1_A \leq f \leq 1$. For every $\eta>0$,   $V=\{x |f(x)> 1 - \eta \}$ is an open set containing $A$. As $0 \in A$, there is $\epsilon>0$ such that $[0,\epsilon[ \subset V$. So there is $N \in \N$ such that $x_n \in V\; \forall n \geq N$. So $E_{Q_n}(f^p)=(f(x_n))^p>(1 - \eta)^p $. From equation (\ref{eqop}),   $ 1 \geq c_p(f)=\sup_{n \in \N}(E_{Q_n}(f^p))^{\frac{1}{p}}>1-\eta$ for every $\eta>0$.  Thus
  $c_p(f)=1$. It follows that  $c_p(1_A) = 1$. On the other hand $Q_n(1_A)=0$ for all $n \in \N$. Therefore $\sup_{Q \in {\cal P}}Q(A)^{\frac{1}{p}}=0$. This gives a counterexample.
\end{itemize}
\end{remark}
\subsection{Canonical equivalence  class of non negative measures associated to $c_{p}$}
In all this section, we assume that $\Omega$ is a Polish space. We denote $c_p$ the capacity defined on ${\cal C}_b(\Omega)$ by $c_p(f)=\sup_{Q \in {\cal P}}E_Q(|f|^p)^{\frac{1}{p}}$.

\begin{definition}
 ${\cal M}^+(c_p)$ is the set of non negative finite measures on $(\Omega,{\cal B}(\Omega))$ defining an element of $L^1(c_p)^*$.
\label{defflp}
\end{definition}
 In the following we identify an element $\mu$ of ${\cal M}^+(c_p)$ with its associated linear form on $L^1(c_p)$.
 
\begin{remark}
A non negative  finite measure $\mu$ on $(\Omega,{\cal B}(\Omega))$ belongs to ${\cal M}^+(c_p)$ if and only if there is a constant $K>0$ such that $\forall f \in {\cal C}_b(\Omega), |\mu(f)| \leq K c_p(f)$. It follows easily that every element in the weak closure of the convex hull of ${\cal P}$ defines an element of ${\cal M}^+(c_p)$.
\end{remark}
\begin{definition}
Define on ${\cal M}^+(c_p)$ the relation ${\cal R}_{c_p}$ by 
\begin{equation}
\mu {\cal R}_{c_p} \nu\; \; \Longleftrightarrow
\label{eqeq}
\end{equation}
\begin{equation*} 
\{X \in L^1(c_p) , X \geq 0 \;|\;\mu(X)=0\}= \{X \in L^1(c_p), X \geq 0 \;|\;\nu(X)=0\}
\end{equation*}
\label{defreleq}
\end{definition}
The following lemma is trivial
\begin{lemma}
${\cal R}_{c_p}$ defines an equivalence relation on ${\cal M}^+(c_p)$.
\label{lemreleq}
\end{lemma}
\begin{definition}
Let $\mu \in {\cal M}^+(c_p)$. The $c_p$-class of $\mu$ is the equivalence class of $\mu$ for the equivalence relation $ {\cal R}_{c_p}$.
\label{defclass}
\end{definition}
\begin{theorem} 
 To every  weakly relatively compact set ${\cal P}$ of probability measures  on $(\Omega,{\cal B}(\Omega))$, possibly non dominated,  can be associated canonically  a  $c_p$-class of non negative measures on $(\Omega,{\cal B}(\Omega))$ such that an element $\mu$ of ${\cal M}^+(c_p)$ belongs to this class if and only if 
$$\forall X \in L^1(c_p), X \geq 0, \;\;\;\;\;\; \;\{\mu(X)=0\} \;\Longleftrightarrow\;\{ X=0\; \text{in}\; L^1(c_p)\}$$ This class is referred to as the canonical $c_p$-class.\\
 For every set $\{Q_n, \; n \in \N\}$ of probability measures  on $(\Omega,{\cal B}(\Omega))$ such that the equality 
(\ref{eqnum0}) is satisfied for all  $X \in L^1(c_p)$, for $\alpha_n >0 $ such that $\sum_{n \in \N} \alpha_n=1$  the probability measure $\sum_{n \in \N} \alpha_n Q_n$ belongs to the canonical  $c_p$-class.
\label{proppolar}
\end{theorem}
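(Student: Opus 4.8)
The plan is to exhibit one explicit member of the desired class and then read off the whole class from the equivalence relation $\mathcal R_{c_p}$. By Theorem \ref{Thm1} there is a countable family $\{P_n,\ n\in\N\}\subset\mathcal P$ for which the identity (\ref{eqnum0}) holds, namely $c_p(X)=\sup_n\bigl(E_{P_n}(|X|^p)\bigr)^{1/p}$ for every $X\in L^1(c_p)$. Fix weights $\alpha_n>0$ with $\sum_n\alpha_n=1$ and set $P=\sum_n\alpha_nP_n$. First I would check that $P\in\mathcal M^+(c_p)$: for $f\in\mathcal C_b(\Omega)$, Jensen's inequality together with (\ref{eqnum0}) gives $|E_{P_n}(f)|\le E_{P_n}(|f|)\le (E_{P_n}(|f|^p))^{1/p}\le c_p(f)$, hence $|E_P(f)|\le\sum_n\alpha_n c_p(f)=c_p(f)$, and the Remark following Definition \ref{defflp} shows $P\in\mathcal M^+(c_p)$ with norm at most $1$. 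I then claim that $P$ characterizes the null elements, that is, for every $X\ge0$ in $L^1(c_p)$ one has $E_P(X)=0$ iff $X=0$ in $L^1(c_p)$. The candidate canonical class will be the $c_p$-class of this $P$.

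The easy implication is immediate: if $X=0$ in $L^1(c_p)$ then $c_p(X)=0$ and, $E_P$ being continuous of norm $\le 1$, $|E_P(X)|\le c_p(X)=0$. The substantive implication is the converse, and the steps I would carry out are the following. First, each $E_{P_n}$ extends to a continuous linear form on $L^1(c_p)$ of norm $\le1$ (same Jensen estimate); choosing $f_k\in\mathcal C_b(\Omega)$ with $c_p(X-f_k)\to0$, the bound $\alpha_n|E_{P_n}(f_k)|\le\alpha_n\sup_kc_p(f_k)$ is summable in $n$, so dominated convergence for series yields $E_P(X)=\sum_n\alpha_nE_{P_n}(X)$. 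Second, if moreover $X\ge0$ then each $E_{P_n}(X)\ge0$, so $E_P(X)=0$ forces $E_{P_n}(X)=0$ for every $n$ because all $\alpha_n>0$.

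I expect the main obstacle to be the passage from the degree-one information $E_{P_n}(X)=0$ to the degree-$p$ quantity $E_{P_n}(|X|^p)$ that controls $c_p$ through (\ref{eqnum0}); this is where one must leave the abstract Banach space and work with an honest representative. Concretely I would pick a Borel representative $g\ge0$ of the class $X$ (possible since $X\ge0$ means $X=|X|$ by Lemma \ref{lemmaorder}) and verify that $E_{P_n}(X)=\int g\,dP_n$: approximating by $f_k\in\mathcal C_b(\Omega)$ with $c_p(X-f_k)\to0$, inequality (\ref{eqbor}) of Proposition \ref{propmes} gives $E_{P_n}(|g-f_k|)\le(E_{P_n}(|g-f_k|^p))^{1/p}\le c_p(g-f_k)\to0$, so $\int f_k\,dP_n\to\int g\,dP_n$ while $E_{P_n}(f_k)\to E_{P_n}(X)$. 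Hence $\int g\,dP_n=E_{P_n}(X)=0$ with $g\ge0$, so $g=0$ $P_n$-a.e., whence $E_{P_n}(|X|^p)=\int g^p\,dP_n=0$ for every $n$; taking the supremum in (\ref{eqnum0}) gives $c_p(X)=0$, i.e. $X=0$ in $L^1(c_p)$. Along the way I should record that $E_{P_n}(|X|^p)$ is well defined independently of the representative, since two representatives differ by a $c_p$-null function, hence by (\ref{eqbor}) by a $P_n$-a.e. null function.

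Finally I would unwind the definitions to identify the class. Since $P$ characterizes the null elements, its set of nonnegative null elements $\{X\ge0:E_P(X)=0\}$ equals $\{0\}$; by Definitions \ref{defreleq} and \ref{defclass}, a measure $\mu\in\mathcal M^+(c_p)$ lies in the $c_p$-class of $P$ exactly when $\{X\ge0:\mu(X)=0\}=\{0\}$, which, together with the trivial direction $X=0\Rightarrow\mu(X)=0$, is precisely the stated property $\{\mu(X)=0\}\Leftrightarrow\{X=0\}$ for all $X\ge0$. This class is the announced canonical $c_p$-class, and it is nonempty as it contains $P$. For the last assertion, given any probability measures $\{Q_n\}$ satisfying (\ref{eqnum0}) and any $\alpha_n>0$ with $\sum_n\alpha_n=1$, the entire argument above used only that (\ref{eqnum0}) holds for the family $\{Q_n\}$; repeating it verbatim shows that $\sum_n\alpha_nQ_n$ characterizes the null elements and therefore belongs to the canonical $c_p$-class.
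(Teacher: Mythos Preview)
Your proof is correct and follows the same route as the paper: construct $P=\sum_n\alpha_n Q_n$ from a countable family satisfying (\ref{eqnum0}) and verify that $E_P(X)=0\Leftrightarrow X=0$ for $X\ge0$ in $L^1(c_p)$, which makes the $c_p$-class of $P$ the canonical one. You supply considerably more detail than the paper does---notably the verification that $P\in\mathcal M^+(c_p)$, the series identity $E_P(X)=\sum_n\alpha_nE_{P_n}(X)$, and the passage from $E_{Q_n}(X)=0$ to $E_{Q_n}(|X|^p)=0$ via a nonnegative representative---but the underlying strategy is identical.
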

\begin{proof}
Let $p \in [1, \infty[$. Let
$\{Q_n\}$ be a countable set of probability measures  such that the equality (\ref{eqnum0}) is satisfied.  Let ${\cal Q}=\{Q_n, n \in \N\}$. 
 Let $P=\sum_{n \in \N} \alpha_n  Q_n$. Let $X \in L^1(c_p), X \geq 0$, i.e. from Lemma \ref{lemmaorder}, $X=|X|$. $E_{P}(X)=0$ if and only if $E_{Q_n}(|X|)=0$ for all $n \in \N$.\\
From equation (\ref{eqnum0}), it follows that for $X \geq 0$, $E_{P}(X)=0$ if and only if $c_p(X)=0$ if and only if $X=0$ in $L^1(c_p)$.\\This proves that the canonical $c_p$-class is well defined (as it is not empty) and that $\sum_{n \in \N} \alpha_n  Q_n$ belongs to the canonical $c_p$-class.
\end{proof} 

\begin{lemma}  Let $P$ be a probability measure  belonging to the canonical $c_p$-class. Let $X $ be an element of $L^1(c_p)$. Then $X \geq 0$ (for the order in $L^1(c_p)$) if and only  $X \geq 0 \; P \;a.s.$ 
\label{lemmaorder2}
\end{lemma}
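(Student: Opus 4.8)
The plan is to prove both implications by relating $c_p$-convergence to convergence in $L^1(\Omega,{\cal B}(\Omega),P)$, the bridge being the inequality
\begin{equation*}
\int h \, dP \le K\, c_p(h) \qquad \text{for every non-negative Borel } h \text{ with } c_p(h) < \infty ,
\end{equation*}
where $K$ is a constant with $|E_P(f)| \le K\,c_p(f)$ on ${\cal C}_b(\Omega)$; such a $K$ exists because $P$, belonging to the canonical $c_p$-class, lies in ${\cal M}^+(c_p)$ (Definition \ref{defflp} and the remark following it). I would deduce this inequality from the extension formulas (\ref{eqsci}) and (\ref{eqsci2}), exactly as in the proof of Proposition \ref{propmes}: if $h$ is non-negative and l.s.c., writing $h=\lim \phi_n$ with $\phi_n\in{\cal L}$ increasing and non-negative gives $\int h\,dP=\lim\int \phi_n\,dP\le K\lim c_p(\phi_n)=K c_p(h)$ by monotone convergence and $P\in{\cal M}^+(c_p)$; for a general non-negative Borel $h$ one passes to the infimum over l.s.c. $\psi\ge h$ in (\ref{eqsci2}). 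As immediate consequences, a $c_p$-null function is $P$-negligible (so ``$X\ge 0$ $P$-a.s.'' is independent of the representative) and any $c_p$-convergent sequence converges in $L^1(\Omega,{\cal B}(\Omega),P)$.

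For the direct implication, I assume $X\ge 0$ for the order of $L^1(c_p)$ and fix a representative $g$ of $X$. By Definition \ref{deforder} there is a sequence $f_n\in{\cal L}$, $f_n\ge 0$, with $c_p(g-f_n)\to 0$. Applying the bridging inequality to the non-negative Borel functions $|g-f_n|$ yields $\int|g-f_n|\,dP\le K c_p(g-f_n)\to 0$, i.e. $f_n\to g$ in $L^1(P)$. Extracting a subsequence converging $P$-a.s. and using $f_n\ge 0$, I get $g\ge 0$ $P$-a.s., that is $X\ge 0$ $P$-a.s.

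For the converse, I assume $X\ge 0$ $P$-a.s. and pick a representative $g$, so $g\ge 0$ and hence $|g|-g=0$ $P$-a.s. Since $|g|-g\ge 0$ represents $|X|-X$, the second point of Lemma \ref{lemmaorder} gives $|X|-X\ge 0$ for the order. I then compute $E_P(|X|-X)$: choosing $h_n\in{\cal L}$, $h_n\ge 0$, with $c_p\big((|g|-g)-h_n\big)\to 0$, the bridging inequality gives $\int h_n\,dP=\int\big|(|g|-g)-h_n\big|\,dP\le K c_p\big((|g|-g)-h_n\big)\to 0$ (using $|g|-g=0$ $P$-a.s. and $h_n\ge 0$), so that $E_P(|X|-X)=\lim_n\int h_n\,dP=0$. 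As $P$ belongs to the canonical $c_p$-class and $|X|-X\ge 0$, the characterization in Theorem \ref{proppolar} forces $|X|-X=0$ in $L^1(c_p)$, i.e. $X=|X|$; by the third point of Lemma \ref{lemmaorder} this means $X\ge 0$ for the order, completing the proof.

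The only genuinely delicate step is the bridging inequality; everything else is a routine passage to the limit. Its role is precisely to transfer between the order on $L^1(c_p)$—defined through $c_p$-approximation by non-negative elements of ${\cal L}$—and the pointwise $P$-a.s. order, and it is here that the membership $P\in{\cal M}^+(c_p)$ (rather than $P\in{\cal P}$) is used. The canonical $c_p$-class hypothesis enters only in the converse implication, to upgrade $E_P(|X|-X)=0$ into $|X|-X=0$ in $L^1(c_p)$.
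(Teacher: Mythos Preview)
Your proof is correct, but it is considerably more elaborate than the paper's. The paper handles both implications at once via the single chain
\[
X \ge 0 \text{ in } L^1(c_p) \ \Longleftrightarrow\ |X|-X = 0 \text{ in } L^1(c_p) \ \Longleftrightarrow\ E_P(|X|-X)=0 \ \Longleftrightarrow\ X \ge 0 \ P\text{-a.s.},
\]
where the first equivalence is the third point of Lemma~\ref{lemmaorder}, the second is precisely the defining property of the canonical $c_p$-class (Theorem~\ref{proppolar}) applied to the non-negative element $|X|-X$, and the third is immediate since $|g|-g\ge 0$ pointwise for any representative $g$. Your converse implication is essentially this argument, just with the computation of $E_P(|X|-X)$ spelled out; your forward implication, however, takes a genuinely different and longer route through $L^1(P)$-convergence and almost-sure subsequence extraction. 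That detour is valid and does highlight that only $P\in{\cal M}^+(c_p)$ is used there---but the same is true of the paper's chain read left to right (only continuity of $E_P$ is needed for $|X|-X=0\Rightarrow E_P(|X|-X)=0$). Your bridging inequality is correct but not actually required: Proposition~\ref{thmreg} already gives $E_P(Y)=\int g\,dP$ for every $Y\in L^1(c_p)$ and representative $g$, so $E_P(|X|-X)=\int(|g|-g)\,dP$ directly, and $c_p$-convergence forces convergence of $E_P$-values by continuity alone.
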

\begin{proof}
For every $X \in L^1(c_p)$, $|X|-X \geq 0$. From Lemma \ref{lemmaorder} $X \geq 0$  if and only if $|X|-X=0$ in $L^1(c_p)$. By definition of the canonical $c_p$-class this is equivalent to  $|X|-X=0\; \; P \;$ a.s., i.e. $X \geq 0\;\; P$ a.s.
\end{proof}
\begin{remark}
When ${\cal P}=\{P\}$ the canonical $c_p$-class  is the restriction to ${\cal M}^+(c_p)$ of the usual equivalence class of the probability measure $P$.\\
When ${\cal P}$ is a finite set, ${\cal P}=\{P_1,...P_n\}$ the canonical $c_p$-class is the restriction to ${\cal M}^+(c_p)$ of the equivalence class (in the usual sense) of the  probability measure $P=\frac{\sum_{1 \leq i \leq n}P_i}{n}$.
\end{remark}  

Our next goal is to give a description of  $L^1(c_p)^*$.

\begin{theorem}
 There is a regular probability measure $P$ belonging to  the canonical $c_p$-class, and a countable  subset ${\cal D}=\{L_n, \; n \in \N \}$ of  the set $L^1(c_p)^*_+$ of non negative continuous linear forms on $L^1(c_p)$ such that 
\begin{itemize}
\item
$\{L_n, \; n \in \N \}$ is dense in   $L^1(c_p)^*_+={\cal M}^+(c_p)$ for the weak* topology.
\item
 Every $L_n$ is represented by a   non negative measure on $(\Omega,{\cal B}(\Omega))$   absolutely continuous with respect to $P$.
\end{itemize}
Every continuous linear form $\Phi$ on $L^1(c_p)$ is the weak* limit of a sequence $\Phi_n$ where every $\Phi_n$ is the difference of two elements of ${\cal D}$.\\
Furthermore for every $X \geq 0$ in $L^1(c_p)$, $X=0$ iff $P(X)=0$, iff $L_n(X)=0$ for all $n \in \N$.
 \label{thmreplin}
\end{theorem}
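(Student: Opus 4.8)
The plan is to manufacture $P$ from the countable set supplied by Theorem \ref{Thm1}, to prove the key fact that the measures absolutely continuous with respect to $P$ are weak* dense in each dual ball, and then to read off every assertion of the theorem from this.

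First I would fix $P=\sum_{n\in\N}\alpha_n P_n$ with $\alpha_n>0$, $\sum_n\alpha_n=1$, where ${\cal Q}=\{P_n\}$ is the countable subset of ${\cal P}$ given by Theorem \ref{Thm1}, so that $c_p(X)=\sup_n E_{P_n}(|X|^p)^{1/p}$ for all $X\in L^1(c_p)$ (equation (\ref{eqnum0})). By Theorem \ref{proppolar}, $P$ lies in the canonical $c_p$-class, and being a Borel probability measure on the Polish, hence metric, space $\Omega$ it is regular. By Jensen's inequality $E_{P_n}(|f|)\le E_{P_n}(|f|^p)^{1/p}\le c_p(f)$, so each $P_n$ and $P$ belong to $K_+$, the non negative part of the unit ball of $L^1(c_p)^*$; moreover $P_n\le\alpha_n^{-1}P$, whence $P_n\ll P$.

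The heart of the argument is the claim that, with ${\cal A}$ the convex cone of measures in ${\cal M}^+(c_p)$ that are absolutely continuous with respect to $P$, the set ${\cal A}\cap K_+$ is weak* dense in $K_+$. Since $K_+$ is weak* compact and convex (Corollary \ref{corollaryPolish}, Proposition \ref{propcm}), it suffices to check that the two support functions $a(X)=\sup_{\nu\in{\cal A}\cap K_+}\nu(X)$ and $b(X)=\sup_{\mu\in K_+}\mu(X)$ agree on the predual $L^1(c_p)$, for then the weak* closure of ${\cal A}\cap K_+$ coincides with $K_+$. Trivially $a\le b$, and since $\mu(X)\le\mu(X^+)\le c_p(X^+)$ for $\mu\in K_+$ we get $b(X)\le c_p(X^+)$. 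For the reverse I would fix $X$, pick $n$ with $E_{P_n}((X^+)^p)^{1/p}>c_p(X^+)-\epsilon$ (possible by (\ref{eqnum0}) applied to $X^+\in L^1(c_p)$), and set $\nu=g\,P_n$ with $g=(X^+)^{p-1}/\|X^+\|_{L^p(P_n)}^{p-1}$ (for $p=1$, $g=1_{\{X>0\}}$). Then $\nu\ll P_n\ll P$ and $\nu(X)=E_{P_n}(gX)=\|X^+\|_{L^p(P_n)}$, while H\"older's inequality with the conjugate exponent $p'$, using $E_{P_n}(g^{p'})=1$, gives $\nu(f)\le E_{P_n}(|f|^p)^{1/p}\le c_p(f)$, so $\nu\in{\cal A}\cap K_+$. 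Hence $a(X)\ge c_p(X^+)$, and all three quantities coincide. By homogeneity of the cone, ${\cal A}\cap mK_+$ is weak* dense in $mK_+$ for every $m$. \emph{This density is the main obstacle}: a weak* limit of absolutely continuous measures can in general acquire a singular part, and the explicit H\"older-dual witness $g\,P_n$, built so that $\|\nu\|_*\le1$, is precisely what supplies the missing norm control.

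The remaining steps are routine given the earlier results. Each $mK_+$ is weak* compact metrizable (Proposition \ref{propcm}), hence second countable, so ${\cal A}\cap mK_+$ has a countable weak* dense subset $D_m$, which by the density just proved is dense in $mK_+$. Putting ${\cal D}=\{L_n\}=\bigcup_m D_m$ yields a countable family of forms each represented by a measure absolutely continuous with respect to $P$, and ${\cal D}$ is weak* dense in ${\cal M}^+(c_p)=\bigcup_m mK_+$. For a general continuous form $\Phi$ on $L^1(c_p)$ I would write $\Phi=\Phi^+-\Phi^-$ as a difference of two elements of ${\cal M}^+(c_p)$, approximate $\Phi^+$ and $\Phi^-$ weak* by sequences drawn from the appropriate $D_{m'}$ (legitimate since $\Phi^\pm$ lie in some $m'K_+$, which is metrizable), and subtract to produce the required $\Phi_n$. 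Finally, for $X\ge0$ the equivalence $X=0\iff P(X)=0$ is the defining property of the canonical $c_p$-class (Theorem \ref{proppolar}, Lemma \ref{lemmaorder2}); and if $L_n(X)=0$ for all $n$, then choosing $L_{n_k}\to P$ weak* in $D_1$ gives $P(X)=\lim_k L_{n_k}(X)=0$, whence $X=0$.
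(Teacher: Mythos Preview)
Your proof is correct, but it proceeds by a genuinely different route from the paper's.

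The paper reverses the order of construction: it first picks, for each $m$, an \emph{arbitrary} countable weak*-dense subset ${\cal D}_m$ of $mK_+$ (using only that $mK_+$ is metrizable compact), sets ${\cal D}=\bigcup_m{\cal D}_m=\{L_n\}$, and \emph{then} defines $P$ as (the normalization of) $\sum_n\alpha_n\mu_n$, where $\mu_n$ represents $L_n$. Absolute continuity $L_n\ll P$ is immediate from this construction; what remains is to show $P$ lies in the canonical $c_p$-class, and this follows because $P(X)=0$ with $X\ge0$ forces $L_n(X)=0$ for all $n$, hence $L(X)=0$ for all $L\in L^1(c_p)^*_+$ by density, hence $\Phi(X)=0$ for all $\Phi\in L^1(c_p)^*$ by the Jordan decomposition, hence $X=0$ by Hahn--Banach.

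You instead take $P=\sum_n\alpha_nP_n$ directly from Theorem \ref{Thm1}, so membership in the canonical $c_p$-class comes for free from Theorem \ref{proppolar}, but now the burden shifts to proving that measures absolutely continuous with respect to this \emph{fixed} $P$ are already weak*-dense in each $mK_+$. Your support-function argument with the explicit H\"older-dual witness $\nu=(X^+)^{p-1}\|X^+\|_{L^p(P_n)}^{1-p}\,P_n$ does this cleanly and is a nice construction in its own right; it exploits the specific form of $c_p$ in a way the paper's argument never needs to. The paper's route is shorter and more abstract (it works for any Prokhorov capacity, since Theorem~\ref{Thm1} and the H\"older witness are never invoked), whereas your route yields a more concrete $P$ tied to the original family ${\cal P}$.

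One point you pass over quickly: that any $\Phi\in L^1(c_p)^*$ splits as $\Phi^+-\Phi^-$ with $\Phi^\pm\in{\cal M}^+(c_p)$. The paper actually supplies a self-contained argument for this inside the proof (approximating $1_A$ by continuous functions via regularity of $|\mu|$), but it is also quoted earlier from \cite{FP1}, so you are entitled to use it.
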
 
\begin{proof}
Denote $nK_+ =\{L \in L^1(c_p)^*, L \geq 0 \; \text{and} \; ||L|| \leq n\}$. From Corollary \ref{corollaryPolish}, every $nK_+$ is metrizable compact for the weak* topology. There is then in $nK_+$ a dense countable set ${\cal D}_n$. Thus ${\cal D}=\cup _{n \in \N} {\cal D}_n$ is countable and dense in $L^1(c_p)^*_+$ for the weak* topology. Enumerate the elements of ${\cal D}$, ${\cal D}= \{L_n, \; n \in \N\}$. From Proposition \ref{thmreg},  every $L_n$ is represented by a non negative finite measure $\mu_n$ on $(\Omega,{\cal B}(\Omega))$. Let $\alpha_n>0$ such that $\sum_{n \in \N}\alpha_n ||L_n|| <\infty$. Then  $\tilde L= \sum_{n \in \N}\alpha_n  L_n \in L^1(c_p)^*_+$. From Proposition \ref{thmreg}, $\tilde L$ is represented by a non negative finite measure $\mu$. Denote $P$ the probability measure $P=\frac {\mu}{\mu(\Omega)}$. $P$ is a probability measure on $(\Omega,{\cal B}(\Omega))$, $P \in {\cal M}^+(c_p)$.  Furthermore  every $\mu_n$ is absolutely continuous with respect to $P$, and $P$ is regular from  Theorem 1.1 of \cite{Bil}. \\
We  prove now that $P$ belongs to the  canonical $c_p$-class. Every $L_n$ belongs to $L^1(c_p)^*$. Thus for every $X$ in $L^1(c_p)$ such that $X=0$ in
$L^1(c_p)$, $L_n(X)=0$ and thus $\tilde L(X)=0$. It follows that $P(X)=0$. Conversely let $X \geq 0$ in $L^1(c_p)$ such that $P(X)=0$. It follows that $\tilde L(X)=0$. Every $L_n$ belongs to $ L^1(c_p)^*_+$, and $X \geq 0$, thus $L_n(X) \geq 0$ for all $n$. From the equality  $\tilde L(X)=0$, it follows that  $L_n(X)=0 \;\forall n \in \N$.   $\{L_n,\; N \in\N\}$ is dense in  $L^1(c_p)^*_+$ for the weak* topology, therefore $L(X)=0$  for all $L \in L^1(c_p)^*_+$. From the representation result of continuous linear forms on $L^1(c_p)$ (Proposition \ref{thmreg}) and the Jordan decomposition of bounded signed measures on $(\Omega,{\cal B}(\Omega))$, it follows that every $\Phi \in L^1(c_p)^*$ is represented by a bounded measure $\mu=\mu^+-\mu^-$. There is a Borelian set $A$ such that $\int f d \mu^+=\int f1_A d \mu$ for every $f \in {\cal C}_b(\Omega)$. $|\mu|=\mu^++\mu^-$ is defined on $(\Omega,{\cal B}(\Omega))$ and is thus regular from Theorem 1.1 of \cite{Bil}. 
\begin{equation}
\forall \epsilon >0, \; \exists  V open ,\; A \subset V \; such \; that\; |\mu|(1_V-1_A) \leq \frac{\epsilon}{2}
\label{eqno1}
\end{equation}
$1_V$  is lower semi-continuous so it is the increasing limit of a sequence of continuous functions $h_n$. From the monotone convergence theorem, and equation (\ref{eqno1}), it follows that 

\begin{equation}
\forall \epsilon >0, \; \exists h \in {\cal C}_b(\Omega),\; 0 \leq h \leq 1_V, such\; that \; \int |1_A-h|d|\mu| < \epsilon 
\label{eqno2}
\end{equation}
Thus 
\begin{equation}
 |\int f1_A d\mu- \int fh d\mu| < ||f||_{\infty} \epsilon 
\label{eqno3}
\end{equation}
By definition of $\mu$, 
\begin{equation}
\forall f \in {\cal C}_b(\Omega),  |\int fh d\mu| < ||\Phi|| c_p(fh) \leq ||\Phi|| c_p(f)  
\label{eqno4}
\end{equation}
From (\ref{eqno3}) and (\ref{eqno4}), we get $|\int f d\mu^+|=|\int f1_A d\mu| \leq ||\Phi|| c_p(f)$.  
 It follows that $\mu^+$ defines an element of $L^1(c_p)^*_+$. It is the same for $\mu^-$. Thus  for every $\Phi \in L^1(c_p)^*$, $\Phi(X)=0$. From Hahn Banach Theorem, it follows that $X=0$ in $L^1(c_p)$. This proves that $P$ belongs to the  canonical $c_p$-class.\\
We have  proved that every $\Phi \in L^1(c_p)^*$ can be written $\Phi=\Phi^+-\Phi^-$, $\Phi^+, \; \Phi^- \in L^1(c_p)^*_+$. The result follows then from the density of ${\cal D}$ in $L^1(c_p)^*_+$. 
\end{proof}
The results of the previous section on convex risk measures on $L^1(c)$ can be specified when the capacity is $c_p=c_{p,{\cal P}}$. 
\begin{proposition}  Let $\rho$ be a convex risk measure on $L^1(c_{p})$. There is a probability measure $Q$ in the canonical $c_{p}$-class and a countable set $\{Q_n, n \in \N\}$ of probability measures all absolutely continuous with respect to $Q$ such that 
\begin{equation}
\rho(X)=\sup_{n \in \N}[E_{Q_n}(-X)-\alpha(Q_n)]\;\;\; \forall X \in L^1(c_{p})
\label{eqcpsub}
\end{equation}
\label{corcpsub}
\end{proposition}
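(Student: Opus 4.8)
The plan is to derive the statement from the two main results already established in this section and the previous one: the countable dual representation of an arbitrary convex risk measure (Theorem \ref{thmrepnum}) and the existence of a probability measure in the canonical $c_p$-class representing an element of $L^1(c_p)^*$ (Theorem \ref{thmreplin}). Since $\Omega$ is Polish, $c_p = c_{p,{\cal P}}$ is a Prokhorov capacity, so Theorem \ref{thmrepnum} applies and yields a countable family $\{R_n,\, n \in \N\}$ of probability measures in $L^1(c_p)^*$ with
\[
\rho(X) = \sup_{n \in \N}\bigl(E_{R_n}(-X) - \alpha(R_n)\bigr), \qquad \forall X \in L^1(c_p),
\]
$\alpha$ being the penalty \ref{eqma2}. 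I would then simply take $Q_n = R_n$, so that formula \ref{eqcpsub} holds verbatim; the whole problem reduces to exhibiting a single probability measure $Q$ in the canonical $c_p$-class with $R_n \ll Q$ for every $n$.

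To build $Q$, I would first fix, by Theorem \ref{thmreplin} (or Theorem \ref{proppolar}), a probability measure $P$ in the canonical $c_p$-class with $P \in {\cal M}^+(c_p)$. Choosing weights $\beta_n > 0$ with $\sum_n \beta_n = 1$ and $\sum_n \beta_n \|R_n\| < \infty$, which is possible because the family is countable and each $R_n$ has finite norm in $L^1(c_p)^*$, I set
\[
Q = \tfrac12 P + \tfrac12 \sum_{n \in \N} \beta_n R_n.
\]
This $Q$ is a genuine probability measure, being a countable convex combination of probability measures, and the norm-summability of the weights together with the characterization recalled in the remark following Definition \ref{defflp} shows $Q \in {\cal M}^+(c_p) = L^1(c_p)^*_+$.

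It then remains to check two points. Absolute continuity $R_n \ll Q$ is immediate: if $Q(A) = 0$ on a Borel set, then every non negative summand vanishes, in particular $R_n(A) = 0$. For membership of $Q$ in the canonical $c_p$-class I would use that any $X \geq 0$ in $L^1(c_p)$ satisfies $E_P(X) \geq 0$ and $E_{R_n}(X) \geq 0$ (Definition \ref{deforder}); hence $Q(X) = 0$ forces $E_P(X) = 0$, and since $P$ lies in the canonical $c_p$-class this gives $X = 0$ in $L^1(c_p)$, while conversely $X = 0$ in $L^1(c_p)$ gives $Q(X) = 0$ because $Q \in L^1(c_p)^*$. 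This is precisely the characterization required in the definition of the canonical class.

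The one genuinely necessary idea, rather than a routine verification, is the inclusion of the fixed term $\tfrac12 P$: the measures $R_n$ representing $\rho$ need not separate the positive cone of $L^1(c_p)$, so $\sum_n \beta_n R_n$ by itself may fail to characterize the $c_p$-null elements; adding a canonical $P$ repairs this without disturbing the absolute continuity $R_n \ll Q$. The only other delicate point is the norm-summable choice of weights guaranteeing $Q \in {\cal M}^+(c_p)$.
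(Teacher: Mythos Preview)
Your proposal is correct and follows essentially the same approach as the paper: obtain the countable representing family from Theorem \ref{thmrepnum}, fix a probability $P$ in the canonical $c_p$-class via Theorem \ref{proppolar} (or \ref{thmreplin}), and form $Q$ as a convex combination $\tfrac12 P + \tfrac12\sum_n \beta_n R_n$. The paper simply takes $\beta_n = 1/2^{n+1}$ and declares the verification easy; your version is slightly more careful in insisting on norm-summability $\sum_n \beta_n \|R_n\| < \infty$ so that $Q \in {\cal M}^+(c_p)$, a point the paper's fixed geometric weights do not address explicitly.
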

\begin{proof}
From Theorem \ref{thmrepnum},  there is a countable set $\{Q_n, n \in \N\}$ of probability measures such that equation (\ref{eqcpsub}) is satisfied. From Theorem \ref{proppolar} there is a probability measure $P$ in the canonical $c_{p}$-class. Let $Q =\frac{P}{2}+\sum_{n \in \N} \frac{Q_n}{2^{n+2}}$. It is easy to verify that $Q$ satisfies the required conditions.  
\end{proof}

\begin{remark}
Even if the  
capacity $c_{p}$ is   defined from a  weakly relatively compact set of probability measures, the set of probability measures $\{Q_n,\;n \in \N\}$ in the above dual representation (\ref{eqcpsub}) of a convex risk measure $\rho$    on $L^1(c_{p})$ is not always  
relatively  compact for the weak* topology. From Proposition \ref{prop6}, %%%%
$\{Q_n,\;n \in \N\}$ is relatively compact iff $\rho$ is majorized by a sublinear risk measure.
\end{remark}

\section{Regular risk measures on ${\cal C}_b(\Omega)$}
\label{secreg}
\subsection{Regularity}
Notice that in a context of uncertainty, which is when no reference probability measure is given, it is natural to consider risk measures defined on the space ${\cal C}_b(\Omega)$ or more generally on a lattice vector subspace of ${\cal C}_b(\Omega)$. As in Section \ref{sec1.1}, ${\cal L}$ denotes a linear vector subspace of ${\cal C}_b(\Omega)$ containing the constants, generating the topology of $\Omega$ and which is a vector lattice.        
\begin{definition}
$\rho:{\cal L} \rightarrow \R$ is a convex risk measure on ${\cal L}$ if it satisfies the axioms of  Definition \ref{defmes}, replacing everywhere $L^1(c)$ by ${\cal L}$.  It is normalized if $\rho(0)=0$.
\begin{itemize}
\item A sublinear risk measure $\rho$ on ${\cal L}$  is regular if for every decreasing sequence $X_n$ of elements of ${\cal L}$ with limit $0$, $\rho(-X_n)$ tends to $0$.
\item A normalized convex risk measure is uniformly regular  if for all $X$ $\sup_{\lambda>0}\frac{\rho(\lambda X)}{\lambda} < \infty$, and for every decreasing sequence $X_n$ of elements of ${\cal L}$ with limit $0$, $\frac{\rho(- \lambda X_n)}{\lambda}$ converges to $0$ uniformly in $\lambda$.
\end{itemize}
 \label{defreg}
\end{definition}
\begin{remark} For sublinear risk measures, the two notions of regularity and uniform regularity are equivalent.
\end{remark}
From now on in this section $\rho$ is a normalized convex risk measure on ${\cal L}$.
\begin{lemma} Assume that $\rho$ is uniformly regular. $\rho_{min}(X) = \sup_{\lambda>0}\frac {\rho(\lambda X)}{\lambda}$ defines a regular sublinear risk measure on ${\cal L}$. It is the minimal sublinear risk measure on ${\cal L}$ majorizing $\rho$. 
\label{min}
\end{lemma}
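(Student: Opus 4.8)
The plan is to verify directly that the explicit formula $\rho_{min}(X)=\sup_{\lambda>0}\frac{\rho(\lambda X)}{\lambda}$ defines a sublinear risk measure, that it dominates $\rho$, that it is regular, and that it is the smallest such object. First I would check finiteness and domination. Taking $\lambda=1$ in the supremum gives $\rho_{min}(X)\ge\rho(X)>-\infty$, so $\rho_{min}$ dominates $\rho$ and is bounded below; the first clause of uniform regularity (namely $\sup_{\lambda>0}\frac{\rho(\lambda X)}{\lambda}<\infty$) gives finiteness from above. Hence $\rho_{min}:{\cal L}\to\R$ is well defined and majorizes $\rho$.

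Next I would establish the four axioms of a sublinear risk measure. Positive homogeneity is a change of variable: for $t>0$, substituting $\mu=t\lambda$ yields $\rho_{min}(tX)=\sup_{\mu>0}t\frac{\rho(\mu X)}{\mu}=t\rho_{min}(X)$, while $\rho_{min}(0)=0$ since $\rho(0)=0$. Convexity follows from the convexity of $\rho$ applied inside the supremum together with the subadditivity of $\sup$: one has $\rho(\lambda(\alpha X+(1-\alpha)Y))\le\alpha\rho(\lambda X)+(1-\alpha)\rho(\lambda Y)$, and dividing by $\lambda$ and passing to suprema gives the convexity inequality. Monotonicity is immediate: if $X\le Y$ then $\lambda X\le\lambda Y$, so $\rho(\lambda X)\ge\rho(\lambda Y)$ for each $\lambda>0$, whence $\rho_{min}(X)\ge\rho_{min}(Y)$. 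Translation invariance uses $\rho(\lambda X+\lambda a)=\rho(\lambda X)-\lambda a$, so that $\frac{\rho(\lambda(X+a))}{\lambda}=\frac{\rho(\lambda X)}{\lambda}-a$ for every $\lambda$, giving $\rho_{min}(X+a)=\rho_{min}(X)-a$. Together these show $\rho_{min}$ is a sublinear risk measure.

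The regularity of $\rho_{min}$ is where the full strength of the hypothesis enters, and I expect this to be the only delicate point. Let $X_n$ decrease to $0$ in ${\cal L}$; then $X_n\ge0$, so $-X_n\le0$ and monotonicity gives $\rho_{min}(-X_n)\ge\rho_{min}(0)=0$. On the other hand $\rho_{min}(-X_n)=\sup_{\lambda>0}\frac{\rho(-\lambda X_n)}{\lambda}\le\sup_{\lambda>0}\bigl|\tfrac{\rho(-\lambda X_n)}{\lambda}\bigr|$, and this last quantity tends to $0$ by the second clause of uniform regularity, which is precisely the statement that $\frac{\rho(-\lambda X_n)}{\lambda}\to0$ uniformly in $\lambda$. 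Squeezing between $0$ and a null sequence yields $\rho_{min}(-X_n)\to0$. This is exactly the place where ordinary regularity would be insufficient and where the \emph{uniform} control over $\lambda$ is indispensable.

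Finally, for minimality, let $\rho_1$ be any sublinear risk measure on ${\cal L}$ with $\rho\le\rho_1$. By positive homogeneity of $\rho_1$, $\rho(\lambda X)\le\rho_1(\lambda X)=\lambda\rho_1(X)$ for every $\lambda>0$, so $\frac{\rho(\lambda X)}{\lambda}\le\rho_1(X)$; taking the supremum over $\lambda$ gives $\rho_{min}\le\rho_1$. Combined with the fact that $\rho_{min}$ is itself a sublinear risk measure dominating $\rho$, this identifies $\rho_{min}$ as the minimal such risk measure and completes the argument. The whole proof is elementary bookkeeping with the definitions, and no result from the earlier sections is needed.
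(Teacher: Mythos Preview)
Your proof is correct and follows essentially the same approach as the paper: the paper's proof is a terse sketch noting that convexity, monotonicity, translation invariance and homogeneity carry over from $\rho$ and its definition, that regularity follows from uniform regularity, and that minimality comes from positive homogeneity of any dominating sublinear $\rho_1$. You have simply supplied the details the paper omits, and your remark that uniform (rather than ordinary) regularity is exactly what is needed to control the supremum over $\lambda$ is apt.
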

\begin{proof} The convexity, monotonicity and translation invariance of $\rho_{min}$ follow easily from the same  properties of $\rho$. The homogeneity of $\rho_{min}$ follows from its definition. Thus $\rho_{min}$ is a sublinear risk measure on ${\cal L}$ majorizing $\rho$. The regularity of $\rho_{min}$ follows from the uniform regularity of $\rho$. For every sublinear risk measure  $\rho_1$ majorizing $\rho$, for every $X \in {\cal L}$,  $\rho_{min}(X) \leq \rho_1(X)$. Thus $\rho_{min}$ is minimal. 
\end{proof}

\begin{lemma}  For every $Y$ in ${\cal L}$, for every sequence $\lambda_n$ of real numbers decreasing to $1$, the sequence $\rho(\lambda_n Y)$ converges to  the limit $\rho(Y)$.
\label{lemma3.0}
\end{lemma}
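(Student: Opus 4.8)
The plan is to reduce the statement to a one–variable continuity fact. Fix $Y\in{\cal L}$ and define $g:\R\to\R$ by $g(\lambda)=\rho(\lambda Y)$; this is well defined and real-valued because ${\cal L}$ is a vector space and $\rho$ is finite on ${\cal L}$. Since $\lambda_n$ decreases to $1$, it suffices to prove that $g$ is right-continuous at $1$, that is $\lim_{\lambda\downarrow 1} g(\lambda)=g(1)=\rho(Y)$. The key observation I would record first is that $g$ is convex on $\R$: for $\mu,\nu\in\R$ and $t\in[0,1]$, convexity of $\rho$ gives $g(t\mu+(1-t)\nu)=\rho(t(\mu Y)+(1-t)(\nu Y))\leq t\,g(\mu)+(1-t)\,g(\nu)$. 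The whole argument then uses only this convexity together with the finiteness of $g$, so in particular neither monotonicity, nor translation invariance, nor any form of regularity of $\rho$ is needed here.

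For the upper bound I would fix any $\lambda_0>1$. For $1<\lambda\leq\lambda_0$ write $\lambda=t\cdot 1+(1-t)\lambda_0$ with $t=\frac{\lambda_0-\lambda}{\lambda_0-1}\in[0,1)$. Convexity yields $g(\lambda)\leq t\,g(1)+(1-t)\,g(\lambda_0)$, and letting $\lambda\downarrow 1$ (so $t\to 1$) gives $\limsup_{\lambda\downarrow 1} g(\lambda)\leq g(1)$.

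For the lower bound I would fix any auxiliary point $\mu<1$ and, for $\lambda>1$, express $1$ as the convex combination $1=s\lambda+(1-s)\mu$ with $s=\frac{1-\mu}{\lambda-\mu}\in(0,1)$. Convexity then gives $g(1)\leq s\,g(\lambda)+(1-s)\,g(\mu)$, hence $g(\lambda)\geq\frac{g(1)-(1-s)\,g(\mu)}{s}$. As $\lambda\downarrow 1$ we have $s\to 1$, so $\liminf_{\lambda\downarrow 1} g(\lambda)\geq g(1)$. Combining the two bounds yields $\lim_{\lambda\downarrow 1} g(\lambda)=g(1)$, and therefore $\rho(\lambda_n Y)\to\rho(Y)$.

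This is essentially the standard proof that a finite convex function on $\R$ is continuous. The only mild subtlety, which I expect to be the single point requiring care, is the lower bound: one must place the auxiliary point $\mu$ strictly to the left of $1$ so that $1$ genuinely lies between $\mu$ and $\lambda$, which is what forces $s\to 1$ and kills the $(1-s)\,g(\mu)$ term. Everything else is a direct consequence of the convexity of $\rho$.
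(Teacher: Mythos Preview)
Your proof is correct and follows essentially the same approach as the paper: both sandwich $\rho(\lambda_n Y)$ between two convex combinations that converge to $\rho(Y)$. The paper simply makes the specific choices $\lambda_0=2$ for the upper bound and $\mu=0$ for the lower bound (the latter invoking the normalization $\rho(0)=0$), whereas your version uses arbitrary $\lambda_0>1$ and $\mu<1$ and therefore does not need $\rho(0)=0$; apart from this minor gain in generality, the arguments coincide.
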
 
\begin{proof} 
As $\lambda_n$ is a decreasing sequence  with  limit $1$, one can assume that 
$2> \lambda_n \geq 1$. Write $\lambda_n=1+\epsilon_n$, $0 \leq \epsilon_n<1$. From the  convexity of $\rho$ and $\rho(0)=0$, it follows that  
\begin{equation}
\rho((1+\epsilon_n) Y) \geq (1+\epsilon_n) \rho(Y)
\label{eqreg1}
\end{equation}
 $(1+\epsilon_n)Y=(1-\epsilon_n)Y+\epsilon_n (2Y)$. Using the convexity of $\rho$, it follows that
\begin{equation}
\rho((1+\epsilon_n)Y) \leq (1-\epsilon_n)\rho(Y)+\epsilon_n \rho(2Y)
\label{eqreg2}
\end{equation}
From inequations (\ref{eqreg1}) and (\ref{eqreg2}), 
\begin{equation}
(1+\epsilon_n)\rho(Y) \leq \rho((1+\epsilon_n)Y) \leq (1-\epsilon_n)\rho(Y)+ \epsilon_n \rho(2Y)
\label{eqreg3}
\end{equation}
Passing now to the limit in inequality (\ref{eqreg3}), it follows that the sequence $\rho((1+\epsilon_n)Y)$ has a limit equal to $\rho(Y)$.\\
\end{proof} 
Using the preceding Lemma, we prove now that every normalized uniformly regular convex risk measure can be extended into a convex risk measure on $L^1(c)$ for some capacity $c$. Therefore we will be able to apply the representation results of Section  \ref{L1}. 
 
\begin{lemma}
 Assume that $\rho$ is uniformly regular. Denote $\rho_1$ a regular sublinear  risk measure on ${\cal L}$ such that $\rho \leq \rho_1$.
\begin{itemize}
\item
$c(X)= \rho_1(-|X|)$ defines a capacity on ${\cal L}$.
\item
$\rho_1$ has a unique continuous extension into a sublinear risk measure 
$\overline {\rho_1}$  on $L^1(c)$.
\item
$\rho$ has a unique continuous extension into a normalized convex risk measure 
$\overline \rho$  on $L^1(c)$ majorized by  $\overline \rho_1$.
\end{itemize}
\label{Lemmaextend}
\end{lemma}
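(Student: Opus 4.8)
The plan is to prove the three bullet points in order; the first two are verifications, and only the third requires a genuinely nontrivial estimate, which is where the sublinear majorant $\rho_1$ earns its keep.

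\textbf{First bullet (capacity).} I would check directly that $c(X)=\rho_1(-|X|)$ — well defined since $\mathcal{L}$ is a lattice, so $|X|\in\mathcal{L}$ — satisfies the axioms of Definition \ref{defcap11}. Nonnegativity is $c(X)=\rho_1(-|X|)\ge\rho_1(0)=0$ by monotonicity of $\rho_1$ together with $\rho_1(0)=0$; the seminorm homogeneity $c(\lambda X)=|\lambda|\,c(X)$ comes from positive homogeneity of $\rho_1$; subadditivity follows from $-|X+Y|\ge-(|X|+|Y|)$, monotonicity, and subadditivity of $\rho_1$. Monotonicity of $c$ (i.e. $|f|\le|g|\Rightarrow c(f)\le c(g)$) is immediate from monotonicity of $\rho_1$. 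The only axiom using more than algebra is regularity along sequences: if $f_n\downarrow 0$ in $\mathcal{L}$ then $f_n\ge 0$, so $c(f_n)=\rho_1(-f_n)$, which tends to $0$ precisely by the assumed regularity of the sublinear risk measure $\rho_1$.

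\textbf{Second bullet (extension of $\rho_1$).} The engine is a Lipschitz estimate in $c$. For any $W\in\mathcal{L}$, the bracketing $-|W|\le W\le|W|$ and monotonicity give $\rho_1(W)\le\rho_1(-|W|)=c(W)$; combined with subadditivity, $\rho_1(X)-\rho_1(Y)\le\rho_1(X-Y)\le c(X-Y)$, and by symmetry $|\rho_1(X)-\rho_1(Y)|\le c(X-Y)$. Hence $\rho_1$ is $1$-Lipschitz for $c$, in particular constant on $c$-null classes and uniformly continuous, so by density of $\mathcal{L}$ in $L^1(c)$ it extends uniquely to a continuous $\overline{\rho_1}$. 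Convexity, translation invariance and positive homogeneity pass to the limit by continuity. For monotonicity with respect to the order of Definition \ref{deforder} I would argue that for $Z\ge 0$ in $L^1(c)$ there are $f_n\ge 0$ in $\mathcal{L}$ with $f_n\to Z$, and for $X\in L^1(c)$ there are $X_k\in\mathcal{L}$ with $X_k\to X$; since $X_k\le X_k+f_n$ in $\mathcal{L}$, monotonicity of $\rho_1$ gives $\rho_1(X_k)\ge\rho_1(X_k+f_n)$, and letting $k\to\infty$ then $n\to\infty$ yields $\overline{\rho_1}(X)\ge\overline{\rho_1}(X+Z)$.

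\textbf{Third bullet (extension of $\rho$), the main point.} Because $\rho$ is only convex, the subadditivity argument is unavailable, and the crux is to manufacture a Lipschitz bound from convexity together with the majorant. For $X,Y\in\mathcal{L}$ and $\lambda\in(0,1)$, I would use the identity
\[
X=(1-\lambda)Y+\lambda\Bigl(Y+\tfrac{X-Y}{\lambda}\Bigr),
\]
apply convexity of $\rho$, and bound the second term by $\rho\le\rho_1$, subadditivity and positive homogeneity of $\rho_1$, and $\rho_1(X-Y)\le c(X-Y)$, to get
\[
\rho(X)-\rho(Y)\le\lambda\bigl(\rho_1(Y)-\rho(Y)\bigr)+c(X-Y).
\]
Since $\rho_1(Y)-\rho(Y)\ge 0$ is finite, letting $\lambda\to 0^+$ gives $\rho(X)-\rho(Y)\le c(X-Y)$, and by symmetry $|\rho(X)-\rho(Y)|\le c(X-Y)$. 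From this $1$-Lipschitz bound the extension proceeds exactly as in the second bullet: $\rho$ extends uniquely to a continuous $\overline{\rho}$ on $L^1(c)$; convexity and translation invariance pass to the limit; normalization gives $\overline{\rho}(0)=0$; monotonicity follows from the same approximation argument; and $\overline{\rho}\le\overline{\rho_1}$ follows from $\rho\le\rho_1$ on the dense subspace $\mathcal{L}$ by continuity. The delicate step — and the entire reason for introducing $\rho_1$ — is this convexity-plus-majorant estimate, after which everything is continuity and passage to the limit.
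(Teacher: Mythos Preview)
Your proof is correct and follows the paper's strategy closely for the first two bullets. For the third bullet both you and the paper use the same idea---write $X$ as a convex combination with one piece equal (or close) to $Y$ and the other absorbing $X-Y$, then bound the second piece via $\rho\le\rho_1$ and positive homogeneity of $\rho_1$---but the decompositions differ. The paper writes
\[
X=\tfrac{1}{1+\epsilon_n}\,(1+\epsilon_n)Y+\tfrac{\epsilon_n}{1+\epsilon_n}\,\tfrac{1+\epsilon_n}{\epsilon_n}(X-Y),
\]
which after convexity and the $\rho_1$-bound leaves the residual $\tfrac{1}{1+\epsilon_n}\rho((1+\epsilon_n)Y)$; to pass to the limit the paper invokes the separate Lemma~\ref{lemma3.0} showing $\rho((1+\epsilon_n)Y)\to\rho(Y)$. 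Your decomposition $X=(1-\lambda)Y+\lambda(Y+\tfrac{1}{\lambda}(X-Y))$ instead produces the residual $\lambda(\rho_1(Y)-\rho(Y))$, which vanishes trivially as $\lambda\to0^+$. So your route is slightly more elementary: it bypasses Lemma~\ref{lemma3.0} entirely at the cost of using subadditivity of $\rho_1$ once more (to split $\rho_1(Y+\tfrac{1}{\lambda}(X-Y))$), whereas the paper only uses homogeneity of $\rho_1$ at that step.
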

\begin{proof}
\begin{itemize}
\item The sublinearity, monotonicity and regularity of $\rho_1$ imply that $c$ is a capacity on ${\cal L}$. As usual, this leads to the Banach space $L^1(c)$.
\item
As $\rho_1$ is sublinear, for every $X,Y \in {\cal L}$,
$\rho_1(X) \leq \rho_1(Y) + \rho_1(X-Y)$.\\ Exchanging  $X$ and $Y$ and using the monotonicity of $\rho_1$ and the definition of $c$, it follows that $|\rho_1(X)-\rho_1(Y)| \leq  c(X-Y)$.
Thus $\rho_1$ is uniformly continuous on ${\cal L}$ for the $c$ semi-norm. It extends uniquely into a continuous function $\overline {\rho_1}$ on $L^1(c)$. $\overline {\rho_1}$ is a sublinear risk measure.
\item 
let $\epsilon_n>0$ decreasing to $0$.
$$X=\frac {1}{1+\epsilon_n}[(1+\epsilon_n)Y] + \frac {\epsilon_n}{1+\epsilon_n}[ \frac {1+\epsilon_n}{\epsilon_n}(X-Y)]$$
From the convexity of $\rho$, the majoration of $\rho$ by $\rho_1$ and the homogeneity of $\rho_1$ (cf $\rho_1$ is sublinear), it follows that 
\begin{equation}
\rho(X) \leq \frac {1}{1+\epsilon_n}\rho((1+\epsilon_n)Y) + \rho_1(X-Y)
\label{contreg}
\end{equation}
From inequation (\ref{contreg}) and Lemma \ref{lemma3.0} applied with  $(1+\epsilon_n)Y$, passing to the limit,  it follows then that $\rho(X)-\rho(Y) \leq \rho_1(X-Y) \leq c(X-Y)$. Exchanging $X$ and $Y$,  this proves the uniform continuity of $\rho$ for the $c$ semi-norm. $\rho$ extends then uniquely into a continuous function $\overline {\rho}$ on $L^1(c)$. $\overline {\rho}$ is a convex risk measure on $L^1(c)$ majorized by $\overline {\rho_1}$.
\end{itemize}
\end{proof}

\begin{definition}
  Let $\rho$ be a normalized uniformly regular convex risk measure on ${\cal L}$. The capacity $c_{\rho}$ defined as $c_{\rho}(X)=\rho_{min}(-|X|)$ is called the capacity canonically associated with $\rho$.
\label{defro}
\end{definition}
\subsection{Representation of uniformly regular convex risk measures}
In this section, we assume that $\Omega$ is a  Polish space.  
Taking into account the liquidity risk in a financial market, we introduce the following definition for a  riskless asset, which means that all investment in this asset is risk-free.
\begin{definition} A non positive element $X$ of ${\cal C}_b(\Omega)$ is riskless if for all $\lambda>0$, $\rho(\lambda X)=0$ (or equivalently for all  $\lambda>0$, $\rho(\lambda X) \leq 0$).
\label{defrisk}
\end{definition}
\begin{theorem}
Let $\rho$ be a normalized uniformly  regular convex risk measure on ${\cal L}$.\\ 
Then $\rho$ extends uniquely to ${\cal C}_b(\Omega)$ and  admits the following representation 
\begin{equation}
\forall X \in{\cal C}_b(\Omega)\;\;\rho(X)= \sup_{n \in \N} (E_{Q_n}(-X)-\alpha(Q_n))
\label{eqatt1}
\end{equation} 
for a certain  weakly relatively compact set  $\{Q_n, \; n \in \N\}$ of probability measures.
Furthermore for $\alpha_n>0$ such that $\sum_{n \in \N} \alpha_n=1$  the probability measure $P=\sum_{n \in \N} \alpha_n Q_n$ characterizes the riskless non negative elements of ${\cal C}_b(\Omega)$, that is $X \leq 0$ is riskless iff $X=0$ P a.s.\\
 For every $X \in {\cal C}_b(\Omega)$, there is a probability measure $Q_X$ in the weak closure of $\{Q_n, \; n \in \N\}$,  such that 
\begin{equation}
\rho(X)=  E_{Q_X}(-X)-\alpha(Q_X)
\label{eqatt2}
\end{equation}

\label{thmerepcont}
\end{theorem}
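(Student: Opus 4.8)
The plan is to reduce the statement to the machinery already built on $L^1(c)$ in Sections \ref{L1} and \ref{sec2}: first produce the correct capacity from the minimal sublinear majorant of $\rho$, then transport the countable representation and the attainment result back to ${\cal C}_b(\Omega)$, and finally read off the characterization of riskless assets from the canonical $c_\rho$-class.

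First I would fix the minimal majorant $\rho_{min}$ of Lemma \ref{min}, a regular sublinear risk measure on ${\cal L}$ with $\rho \leq \rho_{min}$, whose associated capacity is $c_\rho(X)=\rho_{min}(-|X|)$ (Definition \ref{defro}). Applying Lemma \ref{Lemmaextend} with $\rho_1=\rho_{min}$, both $\rho_{min}$ and $\rho$ extend uniquely and continuously to a sublinear risk measure $\overline{\rho_{min}}$ and a convex risk measure $\overline{\rho}$ on $L^1(c_\rho)$ with $\overline{\rho}\leq\overline{\rho_{min}}$; since ${\cal C}_b(\Omega)\subset L^1(c_\rho)$ this already yields the unique extension of $\rho$ to ${\cal C}_b(\Omega)$. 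Because $\Omega$ is Polish, $c_\rho$ is a Prokhorov capacity (Corollary \ref{corollaryPolish}), so the results of Section \ref{L1} apply. The key structural observation is that $c_\rho$ is of the type studied in Section \ref{sec2}: as $\overline{\rho_{min}}$ is sublinear it is majorized by a sublinear risk measure (itself), so Proposition \ref{prop6} provides a weak$^*$-relatively compact set $\mathcal{P}$ of probability measures with $\overline{\rho_{min}}(X)=\sup_{Q\in\mathcal{P}}E_Q(-X)$ (the penalty of a sublinear measure taking only the values $0$ and $+\infty$). Evaluating at $-|X|$ gives $c_\rho(X)=\sup_{Q\in\mathcal{P}}E_Q(|X|)$, that is $c_\rho=c_{1,\mathcal{P}}$ for a weakly relatively compact $\mathcal{P}$, which is exactly the hypothesis of Theorem \ref{Thm1}.

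Next I would assemble the representing family. Since $\overline{\rho}\leq\overline{\rho_{min}}$ is majorized by a sublinear risk measure, Proposition \ref{proprepnum} gives a countable, weakly relatively compact family $\{Q'_n\}$ with $\overline{\rho}(X)=\sup_n(E_{Q'_n}(-X)-\alpha(Q'_n))$. Independently, Theorem \ref{Thm1} furnishes a countable subfamily $\{P_n\}\subset\mathcal{P}$ realizing the capacity, namely $c_\rho(X)=\sup_n E_{P_n}(|X|)$, which is (\ref{eqnum0}) for $p=1$. I would then set $\{Q_n\}=\{P_n\}\cup\{Q'_n\}$: this is still countable and weakly relatively compact (a union of two such families), and since every $P_n\in\mathcal{P}\subset L^1(c_\rho)^*$ satisfies $E_{P_n}(-X)-\alpha(P_n)\leq\overline{\rho}(X)$ by the very definition of the penalty, adjoining the $P_n$ leaves the supremum unchanged, so $\overline{\rho}(X)=\sup_n(E_{Q_n}(-X)-\alpha(Q_n))$. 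Restricting to $X\in{\cal C}_b(\Omega)$ yields (\ref{eqatt1}), and the attainment (\ref{eqatt2}) follows directly from Theorem \ref{thmmax} applied to $\overline{\rho}$ with the weakly relatively compact family $\{Q_n\}$: for every $X\in L^1(c_\rho)$, hence for every $X\in{\cal C}_b(\Omega)$, the supremum is attained at some $Q_X$ in the weak closure of $\{Q_n\}$.

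Finally, for the riskless assets I would take $P=\sum_n\alpha_n Q_n$ with $\alpha_n>0$, $\sum_n\alpha_n=1$, which defines an element of ${\cal M}^+(c_\rho)$ since the $Q_n$ have uniformly bounded norms, and show $P$ lies in the canonical $c_\rho$-class exactly as in the proof of Theorem \ref{proppolar}: for $X\geq 0$ one has $P(X)=0$ iff $E_{Q_n}(X)=0$ for all $n$, which forces $E_{P_n}(X)=0$ for all $n$ and hence $c_\rho(X)=0$ by (\ref{eqnum0}), i.e. $X=0$ in $L^1(c_\rho)$; the converse is immediate since each $Q_n\in L^1(c_\rho)^*$. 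It then remains to match ``riskless'' with ``$X=0$ in $L^1(c_\rho)$'': for $X\leq 0$ monotonicity gives $\rho(\lambda X)\geq\rho(0)=0$, so $X$ is riskless iff $\rho_{min}(X)=\sup_{Q\in\mathcal{P}}E_Q(-X)=0$, iff $c_\rho(|X|)=c_\rho(X)=0$, iff $X=0$ in $L^1(c_\rho)$, iff $P(|X|)=0$ (as $P$ is in the canonical class), that is $X=0$ $P$-a.s. The main obstacle I anticipate is precisely this bookkeeping: the family representing $\rho$ and the family generating the capacity need not coincide, so one must merge them while simultaneously preserving the representation of $\rho$, property (\ref{eqnum0}), and weak relative compactness; verifying that adjoining the $P_n$ changes none of these is the delicate point of the argument.
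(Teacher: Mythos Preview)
Your proof is correct and follows the same overall strategy as the paper (pass to $L^1(c_\rho)$ via Lemma \ref{Lemmaextend}, invoke Proposition \ref{proprepnum} and Theorem \ref{thmmax}, and identify risklessness with $c_\rho$-nullity), but you take an unnecessary detour. The paper does not build two separate countable families and merge them: it observes directly that the single family $\{Q_n\}$ produced by Proposition \ref{proprepnum}, once restricted to those $Q_n$ with $\alpha(Q_n)<\infty$, already satisfies $c_\rho(X)=\sup_n E_{Q_n}(|X|)$. This follows from the very definition $c_\rho(X)=\rho_{min}(-|X|)=\sup_{\lambda>0}\rho(-\lambda|X|)/\lambda$: plugging in the representation of $\rho$ and letting $\lambda\to\infty$ kills the finite penalties and leaves $\sup_n E_{Q_n}(|X|)$. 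Thus $c_\rho=c_{1,\{Q_n\}}$ and Theorem \ref{proppolar} applies immediately to this same family, giving $P=\sum_n\alpha_nQ_n$ in the canonical class with no merging.

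Your route via Proposition \ref{prop6} applied to $\overline{\rho_{min}}$ and then Theorem \ref{Thm1} to extract $\{P_n\}$ is not wrong, and your verification that adjoining the $P_n$ preserves the representation, weak relative compactness, and (\ref{eqnum0}) is careful. But the detour costs you the bookkeeping you yourself flag as the ``main obstacle,'' whereas the paper's recession argument dissolves it in one line. One small point: when you invoke Theorem \ref{Thm1} you need $\mathcal{P}$ weakly (not weak$^*$) relatively compact; this follows from Lemma \ref{lemma4.1} but you should say so.
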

\begin{proof} Let $c_{\rho}(X)=\rho_{min}(-|X|)$ be the capacity canonically associated with $\rho$ (definition \ref{defro}). As $\Omega$ is a Polish space, every capacity is a Prokhorov capacity.
Denote  $\rho$  (resp $\rho_{min}$) the extensions of $\rho$  (resp $\rho_{min}$) to $L^1(c_{\rho})$ given by  Lemma \ref{Lemmaextend}. \\
As  $\rho$ is majorized by  $\rho_{min}$, the representation result with a countable weakly relatively compact set ${\cal Q}=\{Q_n\}$  follows from Proposition  \ref{proprepnum}. We can of course restrict to $Q_n$ such that $\alpha(Q_n) <\infty$. Then $c_{\rho}(X)=\sup_{n \in \N} E_{Q_n}(|X|)$ i.e. $c_{\rho}=c_{1,{\cal Q}}$. From Theorem \ref{proppolar}  the probability measure $P=\sum_{n \in \N} \alpha_n Q_n$ belongs to the canonical  $c_{\rho}$-class. Let $X \leq 0$ in ${\cal C}_b(\Omega)$, $X$ is riskless iff $\rho(\lambda X)=0 \;\forall \lambda >0$, iff $c_{\rho}(-X)=0$, iff $X=0$ $P$ a.s. The existence of $Q_X$ follows from Theorem \ref{thmmax}. 
\end{proof}

\section{Examples}\label{Exemples} 
\subsection{ G-expectations}
In all this section, $\Omega={\cal C}_0 ([0,\infty[,\R^d)$,    the set of continuous functions $f$ defined on $[0,\infty[$ with values in $\R^d$ such that $f(0)=0$. ${\cal C}_0([0,\infty[,\R^d)$ endowed with the topology of uniform convergence on compact spaces  is a Polish space. \\
Peng introduced the notion of sublinear expectation and of  G-expectations \cite{P1} \cite{P2} defined on a vector lattice ${\cal H}$ of real functions containing $1$ and included in ${\cal C}_b(\Omega)$. 
For the definition of a sublinear expectation $\E$ on ${\cal H}$ we refer to \cite{DHP} section 3. 
G-expectations are defined from solutions of P.D.E. in  \cite{P1} and \cite{P2}. A G-expectation is  up to a minus sign a sublinear risk measure.\\
It is proved in  \cite{DHP} and \cite{HP} that every $G$-expectation $\E$ has a  representation  with respect to a weakly relatively compact set of probability measures ${\cal P}$: 
$\E(f)=\sup_{P \in {\cal P}}E_P(f)\;\;$ for all $f$ in ${\cal H}$.
$\E$ extends naturally to ${\cal C}_b(\Omega)$: 
\begin{equation}
\E(f)=\sup_{P \in {\cal P}}E_P(f)\;\; \forall f \in{\cal C}_b(\Omega)
\label{E} 
\end{equation}
As ${\cal P}$ is weakly relatively compact, $\rho(f)=\E(-f)$ is a sublinear regular risk measure on ${\cal C}_b(\Omega)$. Denote $c_{\E}=c_{\rho}$ the corresponding capacity $c_{\E}(X)=\E(|X|)\;\forall X \in {\cal C}_b(\Omega)$.\\
Notice that alternatively, regularity could be proved directly for G-expectations. Theorem \ref{thmerepcont} would thus give the representation result (equation \ref{E}).
 \begin{proposition}
  There is a countable weakly relatively compact set  $\{Q_n, n \in \N\}$ of probability measures, $Q_n  \in {\cal P}$  such that 
\begin{equation}
\;\forall X \in {\cal C}_b(\Omega)\;\;\E(X)=\sup_{ n \in \N} E_{Q_n}(X)
\label{eqexp}
\end{equation}
Let $P=\sum_{n \in \N^*} \frac{Q_n}{2^{n+1}}$. For all $f \geq 0$ in ${\cal C}_b(\Omega)$, $\E(f)=0$ iff $f=0$ $P$ a.s.\\
  For every $X \in {\cal C}_b(\Omega)$,  there is a probability measure $Q_X$ in the weak closure of $\{Q_n, n \in \N^*\}$, such that  
$\E(X)=E_{Q_X}(X)$. 
\label{prop9}
\end{proposition}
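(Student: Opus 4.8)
The plan is to recast the statement in the risk-measure language already built up and then read off the three conclusions from Theorems \ref{Thm1}, \ref{proppolar} and \ref{thmmax} (equivalently, from the master Theorem \ref{thmerepcont}). Set $\rho(f)=\E(-f)$ for $f\in{\cal C}_b(\Omega)$. As recalled just before the statement, $\rho$ is a normalized sublinear --- hence uniformly regular --- risk measure on ${\cal C}_b(\Omega)$, and its canonically associated capacity is $c_\E(X)=\E(|X|)=\sup_{P\in{\cal P}}E_P(|X|)=c_{1,{\cal P}}(X)$. In particular $c_\E=c_{1,{\cal P}}$, so the whole machinery of Sections \ref{L1} and \ref{sec2} developed for $c_{1,{\cal P}}$ applies verbatim.

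First I would produce the countable representation (\ref{eqexp}) \emph{with the $Q_n$ genuinely in ${\cal P}$}. Applying Theorem \ref{Thm1} with $p=1$ yields a countable set ${\cal Q}=\{Q_n\}\subseteq{\cal P}$, dense in the weak* closure $\overline{\cal P}$, with $c_\E(X)=\sup_n E_{Q_n}(|X|)$ for every $X\in L^1(c_\E)$. For a fixed $X\in{\cal C}_b(\Omega)\subseteq L^1(c_\E)$ the evaluation $Q\mapsto E_Q(X)$ is weak*-continuous, so density of $\{Q_n\}$ in $\overline{\cal P}$ together with (\ref{E}) gives $\E(X)=\sup_{P\in{\cal P}}E_P(X)=\sup_n E_{Q_n}(X)$, which is (\ref{eqexp}); and $\{Q_n\}\subseteq{\cal P}$ is weakly relatively compact because ${\cal P}$ is. (Equivalently one invokes Theorem \ref{thmerepcont} directly and uses positive homogeneity of $\rho$ to see that the penalty $\alpha$ takes only the values $0$ and $+\infty$, so the representing measures with $\alpha(Q_n)<\infty$ satisfy $\alpha(Q_n)=0$.)

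Next, for the null characterization I would feed the set ${\cal Q}$ into Theorem \ref{proppolar}: since $c_\E=c_{1,{\cal Q}}$ and (\ref{eqnum0}) holds, any measure $\sum_n\alpha_n Q_n$ with strictly positive weights lies in the canonical $c_\E$-class, and $P$ (up to the harmless normalization turning it into a probability measure, scalar multiples lying in the same class) is of this form. Hence for $f\geq 0$ in ${\cal C}_b(\Omega)$ one has $\E(f)=c_\E(f)$, and the defining property of the canonical class gives $c_\E(f)=0\iff E_P(f)=0\iff f=0$ $P$-a.s. Finally, the attainment claim follows from Theorem \ref{thmmax} applied to the extension $\overline\rho$ of $\rho$ to $L^1(c_\E)$ (via Lemma \ref{Lemmaextend}), which is represented by the weakly relatively compact set ${\cal Q}$: it produces $Q_X$ in the weak (equivalently weak*) closure of $\{Q_n\}$ with $\overline\rho(X)=E_{Q_X}(-X)-\alpha(Q_X)$; since $\overline\rho$ is sublinear, finiteness of this value forces $\alpha(Q_X)=0$, whence $\E(X)=E_{Q_X}(X)$.

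The only genuine subtlety --- the main obstacle --- is that the representing probabilities must be shown to lie in the prescribed family ${\cal P}$, and not merely in $L^1(c_\E)^*$ or in $\overline{\mathrm{conv}}\,{\cal P}$; this is exactly what forces the argument through the density construction of Theorem \ref{Thm1} rather than through the abstract dual representation alone, together with the weak*-continuity of the evaluation maps used to push the supremum from ${\cal P}$ onto its countable dense subset. Everything else is a mechanical specialization of the general results, with sublinearity of $\rho$ invoked only to discard the penalty term.
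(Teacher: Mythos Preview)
Your proof is correct and follows essentially the same route as the paper, which simply writes that the result follows from Theorem \ref{thmerepcont}. You are in fact more careful than the paper on the one genuine point you flag: the statement asserts $Q_n\in{\cal P}$, which Theorem \ref{thmerepcont} alone (via Proposition \ref{proprepnum}) does not directly deliver, since that construction only produces representing measures in $L^1(c_\E)^*$ with finite penalty, hence a priori only in the weak*-closed convex hull of ${\cal P}$. Your detour through Theorem \ref{Thm1} --- where the countable set is explicitly chosen inside ${\cal P}$ and dense in $\overline{\cal P}$ --- together with weak*-continuity of $Q\mapsto E_Q(X)$ on ${\cal C}_b(\Omega)$, is exactly what is needed to secure this extra conclusion; the remaining appeals to Theorem \ref{proppolar} and Theorem \ref{thmmax} match the content packaged inside Theorem \ref{thmerepcont}.
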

\begin{proof}
 The result follows  from  Theorem \ref{thmerepcont}.
\end{proof}
\subsection{Risk measure in context of uncertain volatility} 
\label{secDM}
We consider a framework introduced in \cite{DM}.  
Let $\Omega={\cal C}_0([0,T],\R^{d})$ the space of continuous functions on $[0,T]$ null in zero. For every $t \leq T$, let $\Omega_t= {\cal C}_0([0,t],\R^{d})$. $\Omega_t$ is identified with the subset of $\Omega$ of elements which are constant on $[t,T]$. Let ${\cal B}_t$ be the $\sigma$-algebra on $\Omega$ generated by the open sets of $\Omega_t$. Denote $B_t$ the coordinate process. A probability measure $Q$ on $(\Omega, {\cal B}(\Omega))$ is called an orthogonal martingale measure if the coordinate process $(B_t)$ is a martingale with respect to ${\cal B}_t$ under $Q$ and if the martingales
$(( B_{i})_t)_{1\leq i\leq d}$ are orthogonal in
the sense that for all $i\neq j$, $ <B_{i},B_{j}>^Q_t=0\ \ Q\ a.s. $. $<B_{i}, B_j>^Q$ denotes the quadratic covariational process
corresponding to  $B^{i}$ and $B^j$, under $Q$ and $<B>^Q$ the quadratic variation of $B$ under $Q$. Fix for all $i\in\left\{ 1,\dots,d\right\}$ two finite deterministic H\"{o}lder-continuous measures $\underline{\mu}_{i}$ and $\mu_{i}$ on $[0,T]$  and consider the set ${\cal P}$ of orthogonal martingale measures such that $$\forall i\in\{1,\dots,d\},\ \ \   d\underline{\mu}_{i,t} \leq d<B_{i}>_{t}^{Q}
\leq d\mu_{i,t}.$$
M. Kervarec has proved in \cite{K}, Lemma 1.3  that the set ${\cal P}$ is weakly relatively compact. Thus $c_1(f)= \sup_{Q \in {\cal P}} E_Q(|f|)$  defines a capacity on ${\cal C}_b(\Omega)$ (see Appendix, Section \ref{secap}).
  As in Section \ref{sec2}, $L^1(c_1)$ denotes the corresponding Banach space, containing  ${\cal C}_b(\Omega)$ as a dense subset.  From Theorem \ref{Thm1}, and Theorem \ref{proppolar}, there is  a countable set  $(P_n)_{n \in \N}$, $P_n \in {\cal P}$  such that
$\forall X \in L^1(c_1)$, $c_1(X)=\sup_{n \in \N} E_{P_n}(|X|)$ and such that $P= \sum_{n \in \N} \frac{P_n}{2^n}$  belongs to the canonical $c_1$-class.

\begin{lemma}
For every probability measure $R$ defining an element of $L^1(c_1)^*$, 
$$\forall i\in\{1,\dots,d\},\ \ \   d\underline{\mu}_{i,t} \leq d<B_{i}>_{t}^{R}
\leq d\mu_{i,t}.$$
\label{lemmaUN}
\end{lemma}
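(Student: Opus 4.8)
The plan is to reduce the two-sided bound on $d\langle B_i\rangle^R$ to a family of scalar inequalities that can be tested against $R$, and then to transfer them from the measures of ${\cal P}$ to $R$ using the positivity of $R$ as an element of $L^1(c_1)^*_+$ together with the order characterization of Lemma \ref{lemmaorder2}. Here $\langle B_i\rangle^R$ is understood as the pathwise (dyadic) quadratic variation, which need not require $B$ to be an $R$-semimartingale. Concretely, fix $i$, fix rationals $0\le s<t\le T$, and fix a bounded continuous ${\cal B}_s$-measurable function $h\ge 0$. For the $n$-th dyadic subdivision $s=u_0<u_1<\dots<u_m=t$ of $[s,t]$ set
\[
S_n=\sum_{k}\left(B^i_{u_{k+1}}-B^i_{u_k}\right)^2 .
\]
First I would check that each $S_n$ belongs to $L^1(c_1)$ with $c_1(S_n)$ bounded uniformly in $n$: for every $Q\in{\cal P}$ the coordinate $B^i$ is a continuous $Q$-martingale, so $E_Q(S_n)=E_Q(\langle B_i\rangle^Q_t-\langle B_i\rangle^Q_s)\le \mu_i([s,t])$, whence $c_1(S_n)=\sup_{Q\in{\cal P}}E_Q(S_n)\le\mu_i([s,t])$ by Lemma \ref{lemmacap}.

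The key analytic step I would carry out next is to show that $h\,S_n$ converges in $L^1(c_1)$ to a single element $Z$. This rests on a second-moment estimate for the dyadic sums that is \emph{uniform in $Q\in{\cal P}$}, which itself follows from the uniform bound $d\langle B_i\rangle^Q\le d\mu_i$: it makes $B^i$ uniformly square-integrable and provides a uniform modulus for the $L^2(Q)$ convergence of the dyadic approximations of the quadratic variation. Granting this convergence, $Z\in L^1(c_1)$; identifying the limit under each fixed measure gives $Z=h\,(\langle B_i\rangle^Q_t-\langle B_i\rangle^Q_s)$ $Q$-a.s. for every $Q\in{\cal P}$, while $c_1$-convergence forces $h\,S_n\to Z$ in $L^1(R)$, so along a subsequence $h\,S_n\to Z$ $R$-a.s. and $Z=h\,(\langle B_i\rangle_t-\langle B_i\rangle_s)$ $R$-a.s. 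Thus $Z$ is a common representative of $h\,(\langle B_i\rangle_t-\langle B_i\rangle_s)$ for all the measures at hand.

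Now I would transfer the bound. Consider $W=\mu_i([s,t])\,h-Z\in L^1(c_1)$. For the measures $P_n\in{\cal P}$ of the countable family with $c_1=c_{1,\{P_n\}}$, the constraint $d\langle B_i\rangle^{P_n}\le d\mu_i$ gives $W\ge 0$ $P_n$-a.s. for every $n$; since $P=\sum_n P_n/2^n$ belongs to the canonical $c_1$-class, $W\ge 0$ $P$-a.s., and by Lemma \ref{lemmaorder2} this means exactly $W\ge 0$ for the order of $L^1(c_1)$. As $R$ is a non-negative finite measure defining an element of $L^1(c_1)^*$, it is a positive linear form on $L^1(c_1)$, so $E_R(W)\ge 0$, i.e. $E_R\big(h\,(\langle B_i\rangle_t-\langle B_i\rangle_s)\big)\le \mu_i([s,t])\,E_R(h)$. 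The same argument applied to $Z-\underline\mu_i([s,t])\,h\ge 0$ yields the companion lower bound. Letting $h$ range over all non-negative bounded continuous ${\cal B}_s$-measurable functions shows $\underline\mu_i([s,t])\le \langle B_i\rangle_t-\langle B_i\rangle_s\le\mu_i([s,t])$ $R$-a.s.; taking $s<t$ rational and using continuity of $t\mapsto\langle B_i\rangle_t$ gives $d\underline\mu_{i}\le d\langle B_i\rangle^R\le d\mu_{i}$ for every $i$.

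I expect the main obstacle to be the uniform $L^1(c_1)$ convergence of the dyadic sums $h\,S_n$ — that is, producing a single limiting element $Z\in L^1(c_1)$ representing the quadratic variation simultaneously for all measures of ${\cal P}$ and for $R$. Everything else (the uniform first-moment bound, the reduction to test functions $h$, and the positivity transfer via Lemma \ref{lemmaorder2}) is routine once this common version of $\langle B_i\rangle$ is in hand; the delicate point is the martingale moment control that must hold uniformly over the non-dominated family ${\cal P}$, for which the two-sided deterministic bounds on $d\langle B_i\rangle^Q$ are precisely what is available.
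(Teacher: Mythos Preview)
Your strategy is the paper's strategy: build the quadratic variation as a single element of $L^1(c_1)$, check the two-sided bound $P_n$-a.s.\ for the countable family $\{P_n\}$ (hence $P$-a.s.\ for $P=\sum 2^{-n}P_n$ in the canonical $c_1$-class), invoke Lemma \ref{lemmaorder2} to lift this to the order on $L^1(c_1)$, and transfer to $R$ by positivity. The implementation differs in two places. First, the paper sidesteps exactly the obstacle you flag: instead of proving uniform $L^1(c_1)$-convergence of dyadic sums, it \emph{defines} $\langle B_i\rangle^{c_1}_t=(B_i)_t^2-2\int_0^t (B_i)_s\,d(B_i)_s$ and cites \cite{DM} for the existence of the stochastic integral directly in $L^1(c_1)$; your route is more self-contained but redoes work already available. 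Second, your detour through test functions $h$ is unnecessary (and the restriction to ${\cal B}_s$-measurable $h$ is irrelevant for a pointwise bound, as opposed to a martingale identity): once $W\ge 0$ in $L^1(c_1)$, Lemma \ref{lemmaorder} gives $W=|W|$ in $L^1(c_1)$, i.e.\ $c_1(|W|-W)=0$, so $\int(|W|-W)\,dR=0$ for the pointwise non-negative representative $|W|-W$, forcing $W\ge 0$ $R$-a.s.\ immediately. The paper uses this shortcut and concludes in one line.
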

Notice that a probability measure $R$ in $L^1(c)^*$ does not necessarily belongs to ${\cal P}$ and therefore the result is not trivial.
\begin{proof}
From \cite{DM}, $(B_i)_s^2 \in L^1(c_1)$ for every $s$, thus $\int_0^t (B_i)_s d(B_i)_s$ can be defined as an element of $L^1(c_1)$. We thus define the quadratic variation of $B$ in $L^1(c_1)$ by 
\begin{equation}
<B_i>^{c_1}_t=(B_i)_t^2-2 \int_0^t (B_i)_s d(B_i)_s
\label{qv}
\end{equation}
This equation is satisfied in $L^1(c_1)$ thus it is satisfied $R \; a.s.$ for every probability measure $R$ defining an element of $L^1(c_1)^*$. 
 Let $s \leq t$. Let $A =\{\omega\;|\; <B_i>^{c_1}_t-<B_i>^{c_1}_s > 
{\mu}_{i} [s,t]\} \cup \{\omega\;|\; <B_i>^{c_1}_t-<B_i>^{c_1}_s <  \underline{\mu}_{i} [s,t]\}$.  By hypothesis $P_n(A)=0$. Thus $P(A)=0$. 
The inequality 
\begin{equation}
{\mu}_{i} [s,t] \geq <B_i>^{c_1}_t-<B_i>^{c_1}_s \geq \underline{\mu}_{i} [s,t]
\label{eqinqv}
\end{equation}
is thus satisfied $ P $ a.s.
From Lemma \ref{lemmaorder2}, inequality (\ref{eqinqv}) is then satisfied in $L^1(c_1)$ and then also  $R \;a.s.$ for every probability measure defining an element of  $L^1(c_1)^*$.

\end{proof}
\begin{proposition}
The set ${\cal P}$ is convex metrizable compact for the weak* topology $\sigma(L^1(c_1)^*, L^1(c_1))$ and also for the weak topology.
\label{propcompun}
\end{proposition}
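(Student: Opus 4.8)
The plan is to realise $\mathcal{P}$ as a convex, weak* closed subset of the compact set $K_+$ and then to transfer the topological properties already established for $K_+$. Since $c_1(X)=\sup_{Q\in\mathcal{P}}E_Q(|X|)$, every $Q\in\mathcal{P}$ satisfies $|E_Q(f)|\le c_1(f)$, so $\mathcal{P}\subset K_+$, the non negative part of the unit ball of $L^1(c_1)^*$. As $\Omega$ is Polish, Corollary \ref{corollaryPolish} gives that $K_+$ is metrizable compact for the weak* topology, and Proposition \ref{lemmatop} tells us that on $K_+$ the weak* topology coincides with the weak topology on $\mathcal{M}_+(\Omega)$. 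Hence it suffices to prove that $\mathcal{P}$ is convex and weak* closed in $K_+$; compactness and metrizability, for both topologies, then follow at once.

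The crucial observation is that the orthogonality condition and the quadratic variation bounds are automatically satisfied by \emph{every} probability measure $R$ defining an element of $L^1(c_1)^*$, so they need not be checked on limits or on convex combinations. For the bounds this is precisely Lemma \ref{lemmaUN}. For orthogonality I would argue in the same spirit: by polarising the pathwise quadratic variation (\ref{qv}) one defines the cross variation $<B_i,B_j>^{c_1}_t$ as an element of $L^1(c_1)$; for each $P_n\in\mathcal{P}$ it agrees $P_n$ a.s. with $<B_i,B_j>^{P_n}_t=0$, so it vanishes $P_n$ a.s. for all $n$, hence $P$ a.s. for the measure $P=\sum_n \tfrac{P_n}{2^n}$ of the canonical $c_1$-class (Theorem \ref{proppolar}), and finally, by Lemma \ref{lemmaorder2} together with the defining property of the canonical class, $<B_i,B_j>^{c_1}_t=0$ in $L^1(c_1)$. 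Therefore $<B_i,B_j>^{R}_t=0$ $R$ a.s. for every $R\in K_+$.

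Convexity is then immediate: for $Q_1,Q_2\in\mathcal{P}$ and $0\le\lambda\le1$, the measure $R=\lambda Q_1+(1-\lambda)Q_2$ is a probability measure lying in $K_+$ (since $E_R(|f|)=\lambda E_{Q_1}(|f|)+(1-\lambda)E_{Q_2}(|f|)\le c_1(f)$), the martingale property passes through by linearity, and orthogonality and the bounds hold by the previous paragraph, so $R\in\mathcal{P}$. For closedness I would describe $\mathcal{P}$ inside $K_+$ as the set of those $R$ that are probability measures and for which $B$ is a martingale, both remaining conditions being of the form ``a weak* continuous functional vanishes''. Being a probability measure means $E_R(1)=1$ with $1\in L^1(c_1)$, a weak* closed condition; the martingale property reduces, by a monotone class argument, to $E_R[(B^i_t-B^i_s)g]=0$ for all $s\le t$, all $i$, and all bounded continuous $\mathcal{B}_s$-measurable $g$, and since $(B_i)_t^2\in L^1(c_1)$ (from the proof of Lemma \ref{lemmaUN}) one has $(B^i_t-B^i_s)g\in L^1(c_1)$, so each such functional is weak* continuous and its zero set is weak* closed. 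Thus $\mathcal{P}$ is an intersection of weak* closed subsets of the compact metrizable set $K_+$, hence itself weak* compact and metrizable, and by Proposition \ref{lemmatop} the same holds for the weak topology.

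I expect the main obstacle to be the quadratic-variation bookkeeping: giving the cross variation a measure-free meaning as an element of $L^1(c_1)$ and then proving that the martingale-defining functionals are genuinely weak* continuous. The latter rests on the integrability $(B_i)_t\in L^1(c_1)$ furnished by the bounds (via $(B_i)_t^2\in L^1(c_1)$), without which $Q\mapsto E_Q[(B^i_t-B^i_s)g]$ would fail to be continuous for the weak* topology and the closedness argument would break down.
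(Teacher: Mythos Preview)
Your proposal is correct and follows essentially the same route as the paper: embed $\mathcal{P}$ in the metrizable compact $K_+$, use that $(B_i)_t$ and $(B_i)_t^2$ lie in $L^1(c_1)$ so that the martingale test functionals $R\mapsto E_R[(B^i_t-B^i_s)g]$ are weak* continuous, and pass from continuous $g$ to all of $L^2(\Omega,\mathcal B_s,R)$ by density (this is the content of your ``monotone class'' remark; the paper phrases it as $L^2$-density).

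The only genuine variation is in how orthogonality is handled. The paper checks it on weak* limits in the same way as the martingale property, i.e.\ by passing to the limit in $E_{Q_n}[(B_iB_j)_t\,g]=E_{Q_n}[(B_iB_j)_s\,g]$ using $(B_i)_t(B_j)_t\in L^1(c_1)$. You instead promote the cross variation to an $L^1(c_1)$ identity $\langle B_i,B_j\rangle^{c_1}=0$ in the spirit of Lemma~\ref{lemmaUN}, and then read off orthogonality $R$-a.s.\ for every $R$ under which $B$ is a martingale. Your version is slightly more economical (only the probability and martingale conditions remain to be checked for closedness), but it tacitly uses that the $L^1(c_1)$ stochastic integral agrees $R$-a.s.\ with the It\^o integral once $B$ is an $R$-martingale; that identification is what makes $\langle B_i,B_j\rangle^{c_1}_t=0$ translate into $\langle B_i,B_j\rangle^{R}_t=0$. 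This is the same step implicitly used in Lemma~\ref{lemmaUN}, so it is consistent with the paper's framework, but it is worth stating explicitly.
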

\begin{proof}
The convexity of ${\cal P}$ is obvious.
Denote as in Section \ref{sec1},  $K_+$ the non negative part of the unit ball of $L^1(c)^*$. From the  definition of $c_1$ it follows that  ${\cal P} \subset K_+$. Thus the weak*closure $\overline {\cal P}$ of ${\cal P}$ is a subset of $K_+$. From Lemma \ref{lemmaUN} it follows that every element $Q \in \overline {\cal P}$ satisfies $$\forall i\in\{1,\dots,d\},\ \ \   d\underline{\mu}_{i,t} \leq d<B_{i}>_{t}^{Q}
\leq d\mu_{i,t}$$ From Corollary \ref{corollaryPolish}, $K_+$ is metrizable compact for the weak* topology thus for every $Q \in \overline {\cal P}$, there is a sequence $Q_n, Q_n \in{\cal P}$ converging to $Q$ for the weak* topology.\\ From \cite{DM}, $|(B_i)_t|^k \in L^1(c_1)$ for $k=1$ or $2$, so  
$(E_{Q_n}-E_{Q})(|(B_i)_t |^k) \rightarrow 0$.
 Passing to the limit, $E_{Q}|(|(B_i)_t|) \leq c_1(|(B_i)_t|)$ and 
\begin{equation}
E_{Q}|(|(B_i)_t|^2) \leq c_1(|(B_i)_t^2|)
\label{eqBL2}
\end{equation}  
Let $ g$  in ${\cal C}_b(\Omega_s)$. $g$ can be identified with the element $\tilde g$ of ${\cal C}_b(\Omega)$ defined by $\tilde g (x)=g(x_{|[0,s]})$.    It follows from the inequality $c_1(Xg) \leq ||g||_{\infty} c_1(|X|)$ that $\forall u \geq s$, $(B_i)_u g \in L^1(c_1)$, so $\forall g \in {\cal C}_b(\Omega_s)$ $\forall \lambda \in \R$,
\begin{equation}
 (E_{Q_n}-E_{Q})((B_i)_u (g+\lambda)) \rightarrow 0
\label{eqconv}
\end{equation}$(B_i)_t$ is a martingale for $Q_n$, thus passing to the limit in (\ref{eqconv}), with $u=t$ and $u=s$, we obtain $\forall g \in {\cal C}_b(\Omega_s)$ $\forall \lambda \in \R$,
\begin{equation} 
E_{Q}((B_i)_t (g+ \lambda))=E_{Q}((B_i)_s (g+ \lambda)) 
\label{eqconv1}
\end{equation}
 From (\ref{eqBL2}), $(B_i)_u \in L^2(\Omega,{\cal B}_u, Q)$ for $u=t,s$, and $\{g+ \lambda, \;g \in{\cal C}_b(\Omega_s), \; \lambda \in \R\}$ is dense in $L^2(\Omega,{\cal B}_s, Q)$, thus the equality (\ref{eqconv1}) is satisfied for every $g \in L^2(\Omega,{\cal B}_s, Q)$. This proves that $(B_i)_t$ is a martingale for $Q$. A very similar proof leads to the fact that the martingales $(B_i)_t$ and $(B_j)_t$ are mutually orthogonal for $i \neq j$.
Thus ${\cal P}$  is closed for the weak* topology. As ${\cal P} \subset K_+$,  ${\cal P}$  is metrizable compact  for the weak* topology. The result follows  from Proposition \ref{lemmatop}  for the weak topology.
\end{proof}

For every $P \in {\cal P}$ let $\beta(P) \geq 0$. Let $\rho$ be defined by
\begin{equation}
\forall X \in {\cal C}_b(\Omega)\;\;\rho(X)=\sup_{P \in {\cal P}}(E_P(-X)-\beta(P))\;\;
\label{eq_30}
\end{equation}
As ${\cal P}$ is metrizable compact for the weak topology, $\rho -\rho(0)$ is a uniformly regular convex risk measure. Thus Theorem \ref{thmerepcont} applies.\\
The link between the two previous examples is studied in \cite{DHP}. The convex weakly compact set characterizing the G-expectation $\E$ is in fact contained in the set  ${\cal P}$ of orthogonal martingale measures introduced in \cite{DM} and considered in Section \ref{secDM}.
\section{Appendix}
\label{secap}
Let $\Omega$ be a metrizable separable space and $\mathcal{L}$ as in Section 2 a lattice of continuous bounded functions, containing constants and generating the topology of $\Omega$. We now recall some definitions and propositions proved in Section 2 of \cite{FP}.
A capacity is defined as in definition \ref{defcap11}, Section \ref{sec1}.
\begin{definition}
A capacity $c$ defined on ${\cal L}$ is  regular if it satisfies:\\
 For all decreasing net $f_{\alpha}\in\mathcal{L}$ converging to $0$, $\inf c\left( f_{\alpha}\right)=0$.
\end{definition}

\begin{definition}
A capacity $c$ defined on ${\cal L}$ is  a Prokhorov capacity if:\\
 For all $\epsilon >0$, there exists a compact set $K$ such that $c\left( f\right)\leqslant \epsilon$ for all $f\in\mathcal{L}$ such that $\vert f \vert\leq \mathbf{1}_{\Omega \backslash K}$.
\end{definition}

\begin{proposition}
If $\Omega$ is a Lindel\"of space
then every capacity is a regular capacity.
\end{proposition}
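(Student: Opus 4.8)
The plan is to upgrade the \emph{sequential} regularity built into the definition of a capacity (property 2 of Definition \ref{defcap11}) into regularity along arbitrary decreasing nets, using the Lindel\"of property to reduce nets to sequences. So let $(f_\alpha)_{\alpha\in A}$ be a decreasing net in $\mathcal{L}$ converging pointwise to $0$. Since it is decreasing with limit $0$, each $f_\alpha$ is non negative, and by monotonicity $(c(f_\alpha))_\alpha$ is a decreasing net of reals whose infimum is $\inf_\alpha c(f_\alpha)$; the goal is to show this infimum is $0$.

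First I would fix $\delta>0$ and form the sets $U_\alpha=\{x\mid f_\alpha(x)<\delta\}$, which are open by continuity of $f_\alpha$. Because $f_\alpha(x)\downarrow 0$ for every $x$, the family $(U_\alpha)$ is an open cover of $\Omega$, so by the Lindel\"of property there is a countable subcover $(U_{\alpha_k})_{k\geq 1}$. Using the directedness of $A$, I would then choose an increasing sequence of indices $\beta_1\leq\beta_2\leq\cdots$ with $\beta_n\geq\alpha_k$ for all $k\leq n$; since the net is decreasing this yields a genuine \emph{decreasing sequence} $f_{\beta_n}$ satisfying $f_{\beta_n}<\delta$ on the increasing open sets $V_n=\bigcup_{k\leq n}U_{\alpha_k}$, where $V_n\uparrow\Omega$.

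The key trick is the truncation decomposition $f_{\beta_n}=(f_{\beta_n}\wedge\delta)+(f_{\beta_n}-\delta)^+$. Because $\mathcal{L}$ is a vector lattice containing the constants, $h_n:=(f_{\beta_n}-\delta)^+$ again lies in $\mathcal{L}$; it is decreasing (as $f_{\beta_n}$ is) and converges pointwise to $0$, since $h_n$ vanishes on $V_n$ and every point eventually lies in some $V_n$. Hence the sequential regularity of $c$ applies and gives $\inf_n c(h_n)=0$. Combining this with monotonicity and subadditivity of the seminorm yields $c(f_{\beta_n})\leq c(f_{\beta_n}\wedge\delta)+c(h_n)\leq \delta\,c(1)+c(h_n)$, so that $\inf_\alpha c(f_\alpha)\leq\inf_n c(f_{\beta_n})\leq\delta\,c(1)$. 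Letting $\delta\to 0$ finishes the proof.

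I expect the main obstacle to be the passage from the countable \emph{subcover} to an honest decreasing sequence in $\mathcal{L}$ to which sequential regularity can be applied: a countable subfamily of a net need not be monotone, so one must exploit directedness to manufacture the increasing index sequence $(\beta_n)$, and, crucially, one cannot simply truncate by the discontinuous indicator $1_{\Omega\setminus V_n}$. The lattice operation $(f_{\beta_n}-\delta)^+$ is precisely what keeps the tail term inside $\mathcal{L}$ while still forcing it down to $0$ outside the growing sets $V_n$. Everything else (non negativity of the $f_\alpha$, the seminorm estimates) is routine.
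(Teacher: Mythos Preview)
Your proof is correct. The paper itself does not give a proof of this proposition: in the Appendix it is merely recalled from \cite{FP} (``We now recall some definitions and propositions proved in Section~2 of \cite{FP}''), so there is no in-paper argument to compare against. Your argument is the standard one: use Lindel\"of on the open cover $\{f_\alpha<\delta\}$ to extract a countable subcover, use directedness to produce a monotone subsequence $(f_{\beta_n})$, and split $f_{\beta_n}=(f_{\beta_n}\wedge\delta)+(f_{\beta_n}-\delta)^+$ so that the tail $(f_{\beta_n}-\delta)^+$ stays in the lattice $\mathcal{L}$ and decreases pointwise to $0$, making the sequential regularity axiom applicable. The estimates $c(f_{\beta_n}\wedge\delta)\le\delta\,c(1)$ (monotonicity plus homogeneity of the seminorm) and the final passage $\delta\to 0$ are routine, and you have correctly identified and handled the only real subtlety, namely that one cannot truncate by an indicator and must use the lattice truncation instead.
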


\begin{proposition}
If $\Omega$ is locally compact or a Polish space then every regular capacity is a Prokhorov capacity.
\end{proposition}
\begin{remark}
If $\Omega$ is a Polish  space, then it is a Lindel\"of space and thus every capacity is  a Prokhorov capacity. 
\end{remark}
\begin{proposition}
If $\mathcal{P}$ is weakly relatively compact  $c$ defined on ${\cal C}_b(\Omega)$ by $ c(f)=\sup_{P\in\mathcal{P}}(E_P\left[ \left| f\right|^p\right])^{\frac{1}{p}}$ is a  capacity.
\end{proposition}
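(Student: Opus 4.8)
The plan is to verify, for $c(f)=\sup_{P\in{\cal P}}E_P(|f|^p)^{1/p}$ on ${\cal C}_b(\Omega)$, the three requirements of Definition \ref{defcap11}: that $c$ is a finite semi-norm, that it is monotone, and that it is regular along sequences; the last point is the only delicate one.

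First I would dispose of the easy axioms. Finiteness is immediate: from $|f|\leq ||f||_{\infty}$ we get $E_P(|f|^p)^{1/p}\leq ||f||_{\infty}$ for every $P\in{\cal P}$, hence $c(f)\leq ||f||_{\infty}<\infty$. Positive homogeneity follows from $E_P(|\lambda f|^p)^{1/p}=|\lambda|\,E_P(|f|^p)^{1/p}$ by taking the supremum over $P$, and the triangle inequality follows from Minkowski's inequality applied under each $P$, namely $E_P(|f+g|^p)^{1/p}\leq E_P(|f|^p)^{1/p}+E_P(|g|^p)^{1/p}\leq c(f)+c(g)$, after taking the supremum over $P\in{\cal P}$ on the left. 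Monotonicity (axiom 1) is equally direct: if $|f|\leq|g|$ then $|f|^p\leq|g|^p$ pointwise, so $E_P(|f|^p)\leq E_P(|g|^p)$ for each $P$ and therefore $c(f)\leq c(g)$.

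The heart of the matter is regularity along sequences (axiom 2). Let $f_n\in{\cal C}_b(\Omega)$ decrease to $0$; then $f_n\geq 0$ for every $n$, the $f_n$ are uniformly bounded by $M=||f_1||_{\infty}$, and $f_n^p$ decreases pointwise to $0$. The obstacle is that ${\cal P}$ may be non dominated, so no single dominated convergence argument applies and genuine uniformity over the whole family must be extracted. The device I would use is to exploit the weak relative compactness of ${\cal P}$ by passing to its weak closure $\overline{\cal P}$, which is then weakly compact. For each $n$ set $\phi_n(P)=E_P(f_n^p)$. Since $f_n^p\in{\cal C}_b(\Omega)$, each $\phi_n$ is continuous on $\overline{\cal P}$ for the weak topology, by the very definition of that topology. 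Moreover, for every fixed $P\in\overline{\cal P}$ the sequence $f_n^p$ is decreasing, bounded, and $P$ is a finite measure, so the monotone convergence theorem gives $\phi_n(P)\downarrow 0$.

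Thus $(\phi_n)$ is a sequence of continuous functions on the compact space $\overline{\cal P}$ decreasing pointwise to the continuous limit $0$, and Dini's theorem yields uniform convergence: $\sup_{P\in\overline{\cal P}}\phi_n(P)\to 0$. Since ${\cal P}\subset\overline{\cal P}$, this gives $c(f_n)^p=\sup_{P\in{\cal P}}E_P(f_n^p)\leq\sup_{P\in\overline{\cal P}}\phi_n(P)\to 0$, and because $c(f_n)$ is itself decreasing (by monotonicity) we conclude $\inf_n c(f_n)=0$, which is axiom 2. The one place to take care, and the essential reason the hypothesis of weak relative compactness enters, is that $\phi_n(P)\to 0$ must hold at \emph{every} point of $\overline{\cal P}$ so that Dini applies on the whole compact set; this is guaranteed since each such $P$ is a finite measure and $f_n^p$ is bounded and decreasing to $0$.
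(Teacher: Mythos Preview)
Your proof is correct and follows essentially the same approach as the paper, which simply states that ``the proof follows from Dini Theorem'' and refers to \cite{K} for details. You have supplied precisely those details: the easy verification of the semi-norm and monotonicity axioms, and then the application of Dini's theorem on the weakly compact closure $\overline{\cal P}$ to obtain the uniform decay $\sup_{P}E_P(f_n^p)\to 0$.
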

The proof follows from Dini Theorem (see Proposition I.3 in \cite{K} for more details).

\section*{Acknowledgements}
We thank an anonymous referee for very helpful comments.


\begin{thebibliography}{777}

\bibitem{Artzner}
Artzner P., Delbaen F., Eber J.M., Heath D., Coherent measures of risk, Mathematical Finance, 9(3) (1999), pp. 203-228

\bibitem{AP} Avellaneda M., Levy A., Paras A., Pricing and hedging derivative securities in markets with uncertain volatilities, Appl Math Finance 2 (1995), pp. 73-88.

\bibitem{BEK}
Barrieu P. and El Karoui N., Pricing, Hedging and Optimally Designing Derivatives via Minimization of Risk Measures, Volume on Indifference Pricing, R. Carmona Ed., Princeton University Press, (2009).

\bibitem{BF}
Biagini S. and Frittelli M., 
On the extension of the Namioka-Klee theorem and on the Fatou property for risk measures, 
in: Optimality and risk: modern trends in mathematical finance, The Kabanov Festschrift,
Delbaen F., R\'asonyi M. and Stricker C. Eds., Springer (2009), pp. 1-28.

\bibitem{Bil}
Billingsley P., Convergence of Probability Measures, Wiley Series in Probability and Mathematical Statistics, 2nd ed., (1999).

\bibitem{BN01}
Bion-Nadal J.,  Conditional risk measure and robust representation of convex conditional risk measures, preprint CMAP 557 (2004)

\bibitem{BN02}
Bion-Nadal J., Dynamic Risk Measures: Time Consistency and Risk Measures from BMO Martingales, Finance and Stochastics, 12 (2008), pp. 219-244

\bibitem{JBN}
Bion-Nadal J., Time consistent dynamic risk processes, Stochastic Processes and their Applications, 119(2), (2009), pp. 633-654.

\bibitem{B}
Bourbaki N., El\'ements de Math\'ematiques, Integration, Hermann, Chapter 9, (1969).

\bibitem{BT}
Bourbaki N., El\'ements de Math\'ematiques, Topologie g\'en\'erale, Hermann, Chapter X, 2nde ed, (1958).

\bibitem{CDK}
Cheridito P., Delbaen F., Kupper M., Dynamic Monetary Risk Measures for Bounded Discrete Time Processes, Electronic Journal of Probability,  11, (2006), pp. 57-106

\bibitem{D0}
Delbaen F. , Coherent Risk Measures on general probability spaces,in: Advances in Finance and Stochastics, Sandmann K.,Sch\"onbucher P.J.(eds), Springer, Berlin, (2002), pp. 1-37

\bibitem{D}
Delbaen F., The structure of m-stable sets and in particular of the set of risk neutral measures, Lecture Notes in Mathematics 1874, (2006), pp. 215-258 

\bibitem{DPR}
Delbaen F., Peng S., Rosazza Gianin E., Representation of the penalty term of dynamic concave utilities, Finance and Stochastics, 14 (2010), pp. 449-472

\bibitem{DHP}
Denis L., Hu M. and Peng S., Function spaces and capacity related to a Sublinear Expectation: application to G-Brownian Motion Pathes, (2010), preprint Arxiv0802.1240

\bibitem{DM} Denis L. and Martini C., A theorical framework for the pricing of contingent claims in the presence of model uncertainty, Annals of Applied Probability, 16, vol 2 (2006), pp. 827-852.

\bibitem{DSc}
Detlefsen K. and Scandolo G., Conditional and dynamic convex risk measures. Finance and Stochastics 9, (2005), pp. 539-561.


\bibitem{DS} Dunford N. and Schwartz J.J., Linear Operators, part I, General theory, Wiley Interscience, (1964).

\bibitem{FP1} Feyel, D. and de la Pradelle, A., Topologies fines et compactifications associ\'ees \`a certains espaces de Dirichlet, Annales de l'Institut Fourier, 27, 4 (1977), pp. 121-146

\bibitem{FP} Feyel, D. and de la Pradelle, A., Espaces de Sobolev gaussiens, Annales de l'Institut Fourier, 39, 4 (1989), pp. 875-908


\bibitem{FS04}
F\"ollmer H. and Schied A., Stochastic Finance: An introduction in discrete time, second edition, de Gruyter studies in mathematics 27, (2004).


\bibitem{F}
Frittelli M., Rosazza Gianin E., Putting Order in Risk Measures. Journal of Banking Finance 26, (2002) pp. 1473-1486 


\bibitem{HP} Hu M. and Peng S., On the Representation Theorem of G-Expectations and Paths of G-Brownian Motion, (2009), preprint ArXiv0904.4519

\bibitem{KS} 
Kl\"oppel S. and Schweizer M.  Dynamic utility indifference valuation via convex risk measures, Mathematical Finance, 17 (4), (2007), pp. 599-627


\bibitem{K} Kervarec M.,  Mod\`eles non domin\'es en math\'ematiques financi\`eres. Th\`ese de Doctorat en Math\'ematiques, Universit\'e d'Evry, (2008).

\bibitem{P1} Peng S., G-Expectation, G-Brownian Motion and Related Stochastic Calculus of Ito's type, Stochastic analysis and applications, the proceedings the Abel symposium 2005 (2007). 

\bibitem{P2} Peng S.,  Mutidimensional G-Brownian motion and related stochastic calculus under G- expectation, Stochastic Processes and their applications  112(12) (2008) pp. 2223-2253.

\bibitem{Peng}
Peng S., Nonlinear Expectations, Nonlinear Evaluations and Risk Measures. Lecture notes in math. 1856, Springer (2004), pp. 165-253

\bibitem{Ro} Rockafellar R.T., Conjugate Duality and Optimization, 
No. 16 in Conference Board of Math. Sciences Conference Series in Applied Mathematics, SIAM/CBMS Publications, Philadelphia, Pa., (1974) 

\bibitem{RS}
Roorda B. and Schumacher H., Time Consistency Conditions for Acceptability Measures with Application to Tail Value at Risk,Insurance: Mathematics and Economics, Volume 40, Issue 2, (2007), pp. 209-230.

\bibitem{Ru} Rudin W., Real and complex analysis, McGraw-Hill, New York, (1966), Chapter 5.


\end{thebibliography}
\end{document}